\definecolor{darkred}{rgb}{0.5,0,0}
\titleformat*{\paragraph}{\sc}
\newcolumntype{C}[1]{>{\centering\arraybackslash}p{#1}}
\theoremstyle{plain}
\newtheorem{asn}{Assumption}
\crefname{asn}{Assumption}{Assumptions}
\crefname{lem}{Lemma}{Lemmas}
\newtheorem{prop}{Proposition}
\crefname{prop}{Proposition}{Propositions}
\newtheorem{cor}{Corollary}
\crefname{cor}{Corollary}{Corollaries}
\theoremstyle{definition}
\numberwithin{equation}{section}
\numberwithin{figure}{section}
\numberwithin{table}{section}
\numberwithin{lem}{section}
\numberwithin{prop}{section}
\numberwithin{cor}{section}
\numberwithin{asn}{section}
\newcommand\independent{\protect\mathpalette{\protect\independenT}{\perp}}
\def\independenT#1#2{\mathrel{\rlap{$#1#2$}\mkern2mu{#1#2}}}
\DeclareMathOperator*{\argmin}{argmin}
\DeclareMathOperator*{\tr}{trace}
\DeclareMathOperator*{\ve}{vec}
\DeclareMathOperator*{\diag}{diag}
\DeclareMathOperator*{\var}{Var}
\DeclareMathOperator*{\cov}{Cov}
\DeclareMathOperator{\avar}{aVar}
\DeclareMathOperator{\acov}{aCov}
\DeclareMathOperator{\abias}{aBias}
\DeclareMathOperator{\aMSE}{aMSE}
\DeclareMathOperator{\sspan}{span}
\DeclareMathOperator{\ci}{CI}
\DeclareMathOperator{\cv}{cv}
\newcommand{\alert}[1]{#1}
\crefname{sappsec}{Supplemental Appendix}{Supplemental Appendices}
\crefname{sappsubsec}{Supplemental Appendix}{Supplemental Appendices}
\crefname{sappsubsubsec}{Supplemental Appendix}{Supplemental Appendices}
\crefname{appsec}{Appendix}{Appendices}
\begin{document}

\title{\texorpdfstring{\vspace{-1.5\baselineskip}}{} Double Robustness of Local Projections \texorpdfstring{\\}{}and Some Unpleasant VARithmetic\thanks{Email: {\tt montiel.olea@gmail.com}, {\tt mikkelpm@uchicago.edu}, {\tt ericqian@princeton.edu}, and {\tt ckwolf@mit.edu}. We received helpful comments from three anonymous referees, Isaiah Andrews, Tim Armstrong, Joel Flynn, Jim Hamilton, Lars Peter Hansen, \`{O}scar Jord\`{a}, Lutz Kilian, Michal Koles\'{a}r, Ulrich M\"{u}ller, Pablo Ottonello, Frank Schorfheide, Harald Uhlig, Mark Watson, Ke-Li Xu, and seminar participants at several venues. Nelson Matthew Tan provided excellent research assistance. Plagborg-M{\o}ller acknowledges that this material is based upon work supported by the NSF under Grant {\#}2238049 and by the Alfred P.\ Sloan Foundation, and Wolf does the same for NSF Grant {\#}2314736.}}
\author{\begin{tabular}{ccc}
Jos\'{e} Luis Montiel Olea && Mikkel Plagborg-M{\o}ller \\
{\small Cornell University} && {\small University of Chicago} \\[1ex]
Eric Qian && Christian K. Wolf \\
{\small Princeton University} && {\small MIT \& NBER}
\end{tabular}}
\date{\texorpdfstring{\bigskip}{ }January 12, 2026}
\maketitle

\vspace{-1.5em}
\begin{abstract}
We consider impulse response inference in a locally misspecified vector autoregression (VAR) model. The conventional local projection (LP) confidence interval has correct coverage even when the misspecification is so large that it can be detected with probability approaching 1. This result follows from a ``double robustness'' property analogous to that of popular partially linear regression estimators. By contrast, the conventional VAR confidence interval with short-to-moderate lag length can severely undercover for misspecification that is small, difficult to detect statistically, \alert{and cannot be ruled out based on economic theory}. The VAR confidence interval has robust coverage \alert{if, and only if,} the lag length is so large that the interval is as wide as the LP interval.
\end{abstract}
\emph{Keywords:} bias-aware inference, double robustness, local projection, misspecification, structural vector autoregression. \emph{JEL codes:} C22, C32.

\clearpage

\section{Introduction}
\label{sec:intro}

In recent years, local projection (LP) estimators of impulse response functions have become a very popular alternative to structural vector autoregressions \citep[henceforth interchangeably referred to as VAR or SVAR,][]{Sims1980}. In addition to their simplicity, one potential explanation for the popularity of LPs is their perceived robustness to misspecification, as claimed by \citet{Jorda2005} in his seminal article that proposed the estimation method:

\begin{quote}

``\emph{[T]hese projections are local to each forecast horizon and therefore more robust [than VARs] to misspecification of the unknown DGP.}''

\end{quote}
While this sentiment has been echoed in influential reviews \citep[e.g.,][]{Ramey2016,Nakamura2018,Jorda2023}, there so far exist essentially no theoretical results on the relative robustness of LP and VAR inference procedures to misspecification. \citet{Plagborg2021} and \citet{Xu2023} show that the two estimators are in fact asymptotically equivalent---and thus equally robust to misspecification---in a general VAR($\infty$) model if the estimation lag length diverges to infinity with the sample size. However, this result does not directly speak to the empirically relevant case where researchers employ small-to-moderate lag lengths to preserve degrees of freedom. Applied researchers must therefore base their choice of inference procedure on empirically calibrated simulation studies \citep{Kilian2011,Li2024}.

In this paper we provide a formal proof of \citeauthor{Jorda2005}'s claim that conventional LP confidence intervals for impulse responses are surprisingly robust to misspecification. On the other hand, VAR confidence intervals are robust if, \emph{and only if}, they are as wide as LP intervals asymptotically, as is the case when they control for a large number of lags. If the confidence interval is shorter, then it is \emph{necessarily} unreliable.

We consider a large class of stationary data generating processes (DGPs) that are well approximated by a finite-order SVAR model, but subject to local misspecification in the form of an asymptotically vanishing moving average (MA) process, of potentially infinite order. This class is consistent with essentially all linearized structural macroeconomic models and covers many types of dynamic misspecification, such as under-specification of the lag length, failure to include relevant control variables, inappropriate aggregation, and measurement error. Intuitively, with this set-up we capture the idea that finite-order VAR models provide a good but imperfect approximation of reality.

In this setting, we prove that the conventional LP confidence interval has correct (pointwise) asymptotic coverage even for local misspecification that is of such a large magnitude that it can be detected with probability 1 in large samples. This robustness property requires that we control for those lags of the data that are strong predictors of the outcome or impulse variables, but---crucially for applied work---the omission of lags with small-to-moderate predictive power does not threaten coverage. We argue that our result can be interpreted as a consequence of the \emph{double robustness} of the LP estimator, which is analogous to the double robustness of modern partially linear regression estimators in the literature on debiased machine learning.\footnote{\alert{Important contributions include \citet{Robins1992}, \citet{Robins1995}, \citet{robins2000profile}, \citet{robins_rotnitzky}, \citet{Bang2005}, \citet{Ai2007}, \citet{Chernozhukov2018}, and \citet{Chernozhukov2022}.}}

In stark contrast to LP, even small amounts of misspecification can cause conventional VAR confidence intervals for impulse responses to suffer from severe undercoverage. We first derive analytically the worst-case bias and coverage of VARs over all possible misspecification processes, subject to a constraint on the overall magnitude of the misspecification. The worst-case bias and coverage distortion are small if, \emph{and only if}, the asymptotic variance is close to that of LP. In general, the only way to guarantee robustness of conventional VAR inference is thus to include so many lags that the VAR estimator is asymptotically equivalent with LP. If instead the VAR confidence interval is much shorter (as is typically the case in applied practice), then it will severely undercover even for a misspecification term that: (i) is small in magnitude; (ii) has dynamic properties that cannot be ruled out \emph{ex ante} based on economic theory; and (iii) is difficult to detect \emph{ex post} with model specification tests. Instead of increasing the lag length, coverage can also be restored by using a larger bias-aware critical value \citep{Armstrong2021}, but we show that the resulting confidence intervals are so wide that one may as well report the LP interval.

\alert{We demonstrate the practical relevance of our theoretical results through a comprehensive review of current practice in the applied VAR literature, together with a simulation study. In papers published in top economics journals, researchers tend to select small to moderate lag lengths, and often report impulse responses at horizons far exceeding the lag length. Our theory suggests this practice is likely to render inference vulnerable to misspecification. To substantiate this conclusion, we simulate data calibrated to the oil shock application in \citet{Kaenzig2021}. The DGP is taken to be a VAR estimated on the paper's actual data, but with 18 lags rather than 12. The VAR confidence interval materially undercovers---particularly at medium and long horizons---if the lag length is set to 12 or selected by AIC, in line with applied practice, while the LP interval attains close to nominal coverage. Increasing the estimation lag length beyond the conventional choices ameliorates the VAR undercoverage, at the cost of delivering confidence intervals as wide as those of LP.}

\paragraph{Literature.}
Relative to the previously cited simulation studies of LPs and VARs, we here derive \emph{analytical} results on the worst-case asymptotic properties of these two inference procedures that hold for a wide range of stationary, locally misspecified VAR models. The simulations in \citet{Li2024} suggest a stark bias-variance trade-off between LP (low bias, high variance) and moderate-lag VAR estimators (moderate bias, low variance). The reason behind the theoretical superiority of LP proved in this paper is that, if the objective is to construct confidence intervals with robust coverage for a wide range of DGPs, then even a moderate amount of VAR bias cannot be tolerated, as it causes the VAR confidence interval to be poorly centered. A concern for correct confidence interval coverage thus effectively induces a large weight on bias in the researcher's objective function, justifying the use of LP despite its higher variance.

The robustness of LPs to misspecification discussed here---with stationary data and at fixed horizons---is conceptually and theoretically distinct from the robustness of LPs to the persistence in the data and the length of the impulse response horizon shown by \citet{MontielOlea2021}.  Nevertheless, it turns out that controlling for lags (``lag augmentation'') is key to all the robustness properties established in \citet{MontielOlea2021} and in the present paper.

We also build upon previous research into misspecified VAR models, uncovering novel results about the robustness of LPs and the worst-case properties of VAR procedures. \citet{Braun1993} derive expressions for the probability limits of VAR estimators under global MA misspecification; however, since bias always dominates variance asymptotically in their framework, they do not characterize the properties of LP and VAR inference procedures, which is the focus of our paper. \citet{Schorfheide2005} characterizes the asymptotic mean squared errors of iterated and direct multi-step forecasts in a reduced-form VAR model with MA terms of order $T^{-1/2}$, and \citet{GonzalesCasasus2025} use this framework to select hyperparameters for VAR forecasts. \citet{Mueller2011} construct Bayesian forecast intervals in a locally misspecified univariate AR model. Relative to these papers, we here contribute by: (i) focusing on structural impulse responses rather than forecasting; (ii) allowing for more general rates of local misspecification, which is key to uncovering the double robustness of LP; and (iii) deriving simple analytical formulae for worst-case bias and coverage of VARs. As such, our results formalize concerns by applied practitioners about the lack of VAR robustness and sensitivity to lag length (\citealp{Chari2008}; \citealp{Nakamura2018}; \alert{see also \citealp{Inoue2002}, and \citealp[Chapters 2.6.5 and 6.2]{Kilian2017}}).

Whereas our paper deals with bias imparted by dynamic misspecification, the analysis does not capture other familiar sources of small-sample bias. In particular, our asymptotics abstract from the order-$T^{-1}$ biases that arise from (i) persistence in the data \citep{Pope1990,Kilian1998,Herbst2023} and (ii) the nonlinearity of the impulse response transformation of the VAR parameters (Jensen's inequality).

\paragraph{Outline.}
\cref{sec:framework} defines the local-to-SVAR model as well as the LP and VAR estimators. \cref{sec:asy} proves the robustness of LP and the fragility of VAR confidence intervals. \cref{sec:varithmetic} derives analytically the worst-case bias and coverage of VARs, and shows that bias-aware VAR confidence intervals tend to be wider than the LP interval. \alert{\cref{sec:sim} demonstrates the practical relevance of our theoretical results through a review of the applied literature and a simulation study.} \cref{sec:conc} concludes. Replication materials are available online.\footnote{\url{https://github.com/ckwolf92/lp_var_inference}}

\paragraph{Notation.}
All asymptotic limits are taken as the sample size $T\to\infty$ and are \emph{pointwise} in the sense of fixing the true model parameters and the impulse response horizon. A sum $\sum_{\ell=a}^b c_\ell$ is defined to equal 0 when $a>b$.

\section{Framework}
\label{sec:framework}

We start out by defining the model and estimators.

\subsection{Model and assumptions}
\label{sec:model}
Extending the forecasting model of \citet{Schorfheide2005}, we consider a multivariate, stationary structural VARMA($1,\infty$) model that is local to an SVAR(1) model:
\begin{equation} \label{eqn:model}
	y_t = A y_{t-1} + H[I + T^{-\zeta}\alpha(L)]\varepsilon_t,\quad \text{for all } t,
\end{equation}
where the data vector $y_t=(y_{1,t},\dots,y_{n,t})'$ is $n$-dimensional, the shock vector $\varepsilon_t=(\varepsilon_{1,t},\dots,\varepsilon_{m,t})'$ is $m$-dimensional, $A$ is an $n \times n$ matrix, $H$ is an $n \times m$ matrix, $\alpha(L)=\sum_{\ell=1}^\infty \alpha_\ell L^\ell$ is an $m \times m$ lag polynomial, and $T$ denotes the sample size. We allow the number of shocks $m$ to potentially exceed the number of variables $n$, and \emph{vice versa}. We show below that equation \eqref{eqn:model} encompasses local-to-SVAR models with $p>1$ lags by writing them in companion form.

The model \eqref{eqn:model} captures the idea that the time series dynamics of the data are well approximated by an autoregressive model driven by unobserved white noise shocks $\varepsilon_t$, but with a small amount of misspecification in the form of an MA process $T^{-\zeta}\alpha(L)\varepsilon_t$. The misspecification is asymptotically small in the sense that the MA coefficients converge to zero at the rate $T^{-\zeta}$, though the misspecification may still affect the properties of estimators, as shown by \citet{Schorfheide2005} and as demonstrated below. We argue below that MA misspecification of this form can capture many empirically relevant types of dynamic misspecification. We consider local rather than global misspecification in the spirit of local power analysis \citep[e.g.,][]{Rothenberg1984}, since this makes the bias-variance trade-off between the VAR and LP estimators matter even asymptotically as the sample size $T$ diverges, allowing us to make tractable analytical comparisons between these two procedures.

The parameter of interest is the response at horizon $h$ of the variable $y_{i^*,t}$ with respect to the shock $\varepsilon_{j^*,t}$ for some indices $i^* \in \lbrace 1,\dots,n\rbrace$ and $j^* \in \lbrace 1,\dots,m \rbrace$, to be defined below.

\begin{asn} \label{asn:model}
For each $T$, $\lbrace y_t \rbrace_{t \in \mathbb{Z}}$ is the stationary solution to equation \eqref{eqn:model}, given the following restrictions on parameters and shocks:
\begin{enumerate}[i)]
	\item \label{itm:iid} $\varepsilon_t \overset{i.i.d.}{\sim} (0_{m \times 1}, D)$, where $D \equiv \diag(\sigma_1^2,\dots,\sigma_m^2)$, and the elements of $\varepsilon_t$ are mutually independent. For all $j=1,\dots,m$, $\sigma_j^2 >0$ and $E(\varepsilon_{j,t}^4)<\infty$.
	\item \label{itm:statio} All eigenvalues of $A$ are strictly below 1 in absolute value.
	\item \label{itm:recursive} The first $j^*$ rows of $H$ are of the form $(\tilde{H},0_{j^* \times (m-j^*)})$, where $\tilde{H}$ is a $j^* \times j^*$ lower triangular matrix with 1's on the diagonal. In particular, we require $j^* \leq n$. \label{itm:asn_triang}
	\item \label{itm:S} $S \equiv \var(\tilde{y}_t)$ is non-singular, where $\tilde{y}_t \equiv (I-AL)^{-1}H\varepsilon_t$ is the stationary solution to \eqref{eqn:model} when $\alpha(L)=0$. Specifically, $\ve(S) = (I-A \otimes A)^{-1}\ve(\Sigma)$, where $\Sigma \equiv HDH'$.
	\item \label{itm:abssum} $\alpha(L)$ is absolutely summable.
	\item $\zeta>0$.
\end{enumerate}
\end{asn}
\alert{The assumptions that the shocks are mutually and serially independent are made to simplify the exposition; we prove formally in Supplemental Appendix C.1 that our results on the robustness of LP and on the asymptotic bias of VAR go through for a large class of conditionally heteroskedastic shock processes.} The assumptions on $H$ correspond to recursive (also known as Cholesky) identification of the shock of interest $\varepsilon_{j^*,t}$, with a unit effect normalization $H_{j^*,j^*}=1$. A special case is when the shock is directly observed, which corresponds to ordering it first (i.e., $j^*=1$). \alert{Supplemental Appendix C.2 shows that our results extend also to identification via external instruments or proxies \citep{Stock2018}.} Absolute summability of $\alpha(L)$ is a weak regularity condition ensuring the vector MA($\infty$) process $\alpha(L)\varepsilon_t$ is well-defined \citep[Proposition 3.1.1]{Brockwell1991}. \alert{Finally, the assumption that $\zeta>0$ restricts attention to local misspecification, as discussed earlier.}

The impulse response of interest is defined as
\[\theta_{h,T} \equiv e_{i^*,n}'\left(A^h H + T^{-\zeta}\sum_{\ell=1}^h A^{h-\ell} H\alpha_\ell\right)e_{j^*,m} = E[y_{i^*,t+h} \mid \varepsilon_{j^*,t}=1]-E[y_{i^*,t+h} \mid \varepsilon_{j^*,t}=0],\]
where $e_{i,n}$ denotes the $n$-dimensional unit vector with a 1 in position $i$. The first term in the parenthesis is the usual VAR impulse response formula, while the second term arises from the MA component. Importantly, and consistent with our focus on the consequences of dynamic misspecification, we do not treat the VAR misspecification as non-classical measurement error that should be ignored for structural analysis; instead, the true causal model has a VARMA form (with small but potentially non-zero MA terms), and we care about the full transmission mechanism of shocks in this model.

\paragraph{Additional lags.}
Our framework covers local-to-SVAR($p$) models of the form
\begin{equation} \label{eqn:model_p}
	\check{y}_t = \sum_{\ell=1}^p \check{A}_\ell \check{y}_{t-\ell} + \check{H}[I + T^{-\zeta}\alpha(L)]\varepsilon_t,
\end{equation}
where $\check{y}_t$ is $\check{n}$-dimensional, the $\check{A}_\ell$ matrices are $\check{n} \times \check{n}$, and $\check{H}$ is $\check{n} \times m$ and satisfies \cref{asn:model}(\ref{itm:recursive}). This fits into the original model \eqref{eqn:model} if we set $n=\check{n}p$ and define the companion form representation
\[y_t = \begin{pmatrix}
	\check{y}_t \\
	\check{y}_{t-1} \\
	\check{y}_{t-2} \\
	\vdots \\
	\check{y}_{t-p+1}
\end{pmatrix},\quad A = \begin{pmatrix}
	\check{A}_1 & \check{A}_2 & \dots & \check{A}_{p-1} & \check{A}_p \\
	I & 0 & \dots & 0 & 0 \\
	0 & I & \dots & 0 & 0 \\
	\vdots & & \ddots &  & \vdots \\
	0 & 0 & \dots & I & 0
\end{pmatrix},\quad H = \begin{pmatrix}
	\check{H} \\
	0 \\
	0 \\
	\vdots \\
	0
\end{pmatrix}.\]
In particular, we can allow the estimation lag length $p$ to exceed the true minimal lag length $p_0$ of the model by setting $\check{A}_\ell=0$ for $\ell>p_0$. This fact will prove useful when we consider what happens as the lag length of the estimated VAR is increased.

\paragraph{Types of misspecification.}
Our local-to-SVAR model \eqref{eqn:model} with MA misspecification covers several empirically relevant types of model misspecification. While essentially all modern discrete-time, linearized macro models have VARMA representations, they usually cannot be represented exactly as finite-order VAR models \citep[e.g.,][Chapter 6.2]{Kilian2017}. Even if the true DGP were a finite-order VAR, dynamic misspecification of the estimation model can give rise to MA terms, for example due to under-specification of the lag length or failing to control for some of the variables in the true system. Relatedly, MA terms may appear because of a failure of invertibility of the shocks \citep{Alessi2011}.  VARMA representations can also arise from temporal or cross-sectional aggregation of finite-order VAR models, including contamination by classical measurement error \citep{Granger1976,Lutkepohl1984}. In all of these cases, if the number of lags used for estimating the VAR is chosen to be sufficiently large, then the MA remainder will be small, consistent with the spirit of our locally misspecified model \eqref{eqn:model}.

In terms of structural shock identification, our framework accommodates both the case of a well-identified shock \alert{(or instrument/proxy, see Supplemental Appendix C.2)} but misspecification in other parts of the model, as well as misspecification in the structural shock identification itself. Key to this generality is that we allow the $m \times m$ MA polynomial $\alpha(L)$ to be arbitrary. To see this, consider the case $j^*=1$, so interest centers on the dynamic causal effects of the first shock $\varepsilon_{1,t}$. If the first row of $\alpha(L)$ is zero, then $\varepsilon_{1,t}$ is well-identified as the reduced-form residual in the first equation of the VAR. If the first row of $\alpha(L)$ is non-zero, then the reduced-form residual will be contaminated by lagged shocks, thus allowing for the possibility that shock identification itself is not entirely accurate.

\subsection{Estimators}
\label{sec:estimators}

We consider two estimators of the impulse response $\theta_{h,T}$ using the data $\lbrace y_t \rbrace_{t=1}^T$:
\begin{enumerate}[1.]
	\item The \emph{LP} estimator is the coefficient $\hat{\beta}_h$ in a regression of $y_{i^*,t+h}$ on $y_{j^*,t}$, controlling for $\underline{y}_{j^*,t} \equiv (y_{1,t},\dots,y_{j^*-1,t})'$ (i.e., the variables ordered before $y_{j^*,t}$, if any) and lagged data:
	\begin{equation} \label{eqn:lp_reg}
	y_{i^*,t+h} = \hat{\beta}_h y_{j^*,t} + \hat{\omega}_h'\underline{y}_{j^*,t} + \hat{\gamma}_h' y_{t-1} + \hat{\xi}_{i^*,h,t},
	\end{equation}
	where $\hat{\xi}_{i^*,h,t}$ is the least-squares residual. Recall from the previous subsection that if we are estimating an SVAR($p$) specification in the data $\check{y}_t$, then the vector $y_{t-1}$ actually contains $p$ lags $\check{y}_{t-1},\dots,\check{y}_{t-p}$.
	\item The \emph{VAR} estimator is defined as the response of $y_{i^*,t+h}$ with respect to the $j^*$-th recursively orthogonalized innovation, where the magnitude of the innovation is normalized such that $y_{j^*,t}$ increases by one unit on impact:
	\[\hat{\delta}_h \equiv e_{i^*,n}'\hat{A}^h \hat{\nu},\]
	where
	\[\hat{A} \equiv \left(\sum_{t=2}^T y_t y_{t-1}'\right)\left(\sum_{t=2}^T y_{t-1}y_{t-1}'\right)^{-1},\quad \hat{\nu} \equiv \hat{C}_{j^*,j^*}^{-1}\hat{C}_{\bullet, j^*},\]
	and $\hat{C}_{\bullet, j^*}$ is the $j^*$-th column of the lower triangular Cholesky factor $\hat{C}$ of the covariance matrix $\hat{\Sigma} \equiv \frac{1}{T}\sum_{t=1}^T \hat{u}_t \hat{u}_t'=\hat{C}\hat{C}'$ of the residuals $\hat{u}_t \equiv y_t - \hat{A}y_{t-1}$. Again, in the case of an SVAR($p$) specification, the above formulae operate on the companion form.
\end{enumerate}

Note that the two estimators coincide at the impact horizon: $\hat{\beta}_0=\hat{\delta}_0$ (see Lemma E.5 in
Supplemental Appendix E).

It is well known that conventional confidence intervals based on both these estimators would have correct asymptotic coverage in a well-specified VAR model. However, the presence of the additional MA term in the model \eqref{eqn:model} means that, in principle, both the LP and VAR estimators ought to control for infinitely many lags of the data, rather than just one. Nevertheless, as we will now establish, this dynamic misspecification has much more serious consequences for the VAR procedure than for LP.

\section{Robust local projections, fragile VARs}
\label{sec:asy}

This section shows that the conventional LP confidence interval is robust to large amounts of misspecification. In contrast, the conventional VAR confidence interval has fragile coverage, except when it is asymptotically as wide as the LP interval, as will be the case with sufficiently large lag length.

\subsection{Large-sample distributions and confidence interval coverage}

We begin by characterizing the large-sample distributions of the LP and VAR estimators.

\paragraph{The robustness of LPs.}
\label{sec:lp_robust}

Our first main result establishes that the large-sample distribution of the LP estimator is \alert{unaffected by large amounts of misspecification}.
\begin{prop} \label{thm:lp}
	Under \cref{asn:model},
	\[\hat{\beta}_h -\theta_{h,T} = \frac{1}{\sigma_{j^*}^2}\frac{1}{T}\sum_{t=1}^T \xi_{i^*,h,,t} \varepsilon_{j^*,t} + \alert{O_p(T^{-2\zeta})} + o_p(T^{-1/2}),\]
	where
	\[\xi_{h,t} = (\xi_{1,h,t},\dots,\xi_{n,h,t})' \equiv A^h \overline{H}_{j^*}\overline{\varepsilon}_{j^*,t} + \sum_{\ell=1}^h A^{h-\ell} H\varepsilon_{t+\ell},\]
	with $\overline{H}_{j^*} \equiv (H_{\bullet, j^*+1},\dots,H_{\bullet, m})$ and $\overline{\varepsilon}_{j^*,t} \equiv (\varepsilon_{j^*+1,t},\dots,\varepsilon_{m,t})'$.
\end{prop}

\begin{proof}
See \cref{subsection:Proof_thm_lp}.
\end{proof}

\alert{The result implies that the first-order asymptotic behavior of LP does not depend on the misspecification parameter $\alpha(L)$, provided $\zeta>1/4$ so that $O_p(T^{-2\zeta}) = o_p(T^{-1/2})$.} Though this robustness property of LP is with respect to local (i.e., asymptotically vanishing) misspecification, it is still quantitatively meaningful, given that MA terms of order $T^{-\zeta}$ with $\zeta \in (1/4,1/2)$ in the model \eqref{eqn:model} can be detected with probability 1 asymptotically by conventional VAR model specification tests, such as the Hausman test considered in \cref{subsec:hausman}.

Why is LP robust to misspecification of such large magnitude? We will offer two mathematically equivalent pieces of intuition, with our discussion throughout deliberately heuristic. The classic omitted variable bias (OVB) formula suggests that the bias of the LP impulse response estimator $\hat{\beta}_h$ in the regression \eqref{eqn:lp_reg} is proportional to the product of two factors: (i) the direct effect of omitted lags on $y_{i^*,t+h}$, and (ii) the covariance of the residualized regressor of interest $y_{j^*,t}-E[y_{j^*,t} \mid \underline{y}_{j^*,t},y_{t-1}]$ with the omitted lags. The factor (i) is of order $T^{-\zeta}$ in our local-to-SVAR model \eqref{eqn:model}. The factor (ii) is also of order $T^{-\zeta}$, since the residualized regressor equals $\varepsilon_{j^*,t} + O_p(T^{-\zeta})$ under \cref{asn:model}(\ref{itm:asn_triang}), and the shock $\varepsilon_{j^*,t}$ is uncorrelated with any lagged data. Hence, the OVB is of order $T^{-2\zeta}$, so when $\zeta>1/4$, the bias of the estimator is negligible relative to the standard deviation (which is of order $T^{-1/2}$, as in the correctly specified case). This argument relies on the LP regression controlling for the most important lags of the data (i.e., $y_{t-1}$); without lagged controls, one or both factors in the OVB formula may not be small \citep{GonzalesCasasus2025}.

The preceding intuition is a special case of the \emph{double robustness} property of partially linear regressions, see Example 1.1 in \citetalias{Chernozhukov2018} and Example 1 in \citetalias{Chernozhukov2022}. We will now argue that this property applies also to LP, again settling for a heuristic argument. For notational simplicity, set $j^*=1$ so $\underline{y}_{j^*,t}=0$. Consider any dynamic model (for example a VARMA($p,q$)) that implies the following LP representation:
\begin{equation*}
y_{i^*,t+h} = \theta_{0,h} y_{1,t} + \gamma_0(y^{t-1}) + \xi_{i^*,h,t},\quad \text{where}\quad \xi_{i^*,h,t} \independent y^t \equiv (y_t,y_{t-1},\dots).
\end{equation*}
Here $\theta_{0,h}$ is the true impulse response, $\gamma_0(\cdot)$ is a function of lagged data, and ``$\independent$'' signifies independence. Define $\nu_0(y^{t-1}) \equiv E[y_{1,t} \mid y^{t-1}]$. By applying the Frisch-Waugh lemma to the regression \eqref{eqn:lp_reg}, we see that the LP estimator $\hat{\beta}_h$ is the sample analogue of the solution $\theta_{0,h}$ to the moment condition
\begin{equation*}
E[\lbrace y_{i^*,t+h}-\theta_{0,h} y_{1,t} - \gamma_0(y^{t-1}) \rbrace \lbrace y_{1,t} - \nu_0(y^{t-1}) \rbrace] = 0.
\end{equation*}
If we evaluate the moment on the left-hand side at arbitrary functions $\gamma(\cdot)$ and $\nu(\cdot)$ rather than at the true ones $\gamma_0(\cdot)$ and $\nu_0(\cdot)$, a simple calculation shows that it equals $E[\lbrace \gamma_0(y^{t-1})-\gamma(y^{t-1}) \rbrace \lbrace \nu_0(y^{t-1})-\nu(y^{t-1}) \rbrace]$.\footnote{We can write the moment as $E[\lbrace y_{i^*,t+h}-\theta_{0,h} y_{1,t} - \gamma_0(y^{t-1}) + \gamma_0(y^{t-1}) -\gamma(y^{t-1}) \rbrace \lbrace y_{1,t} - \nu(y^{t-1}) \rbrace] = E[\lbrace \gamma_0(y^{t-1}) -\gamma(y^{t-1}) \rbrace \lbrace y_{1,t} - \nu_0(y^{t-1}) + \nu_0(y^{t-1}) - \nu(y^{t-1}) \rbrace]$, since $y_{i^*,t+h}-\theta_{0,h} y_{1,t} - \gamma_0(y^{t-1})=\xi_{i^*,h,t}$ is independent of $y^t$ (orthogonality would suffice if $\nu(\cdot)$ were linear). The claim now follows from $E[y_{1,t} - \nu_0(y^{t-1}) \mid y^{t-1}]=0$ by definition of $\nu_0(\cdot)$.} Hence, the moment condition is satisfied at the true impulse response parameter $\theta_{0,h}$ as long as \emph{either} $\gamma=\gamma_0$ or $\nu=\nu_0$, making the LP estimator \emph{doubly robust}: it is consistent if we correctly specify either the controls $\gamma(y^{t-1})$ in the outcome equation or the controls $\nu(y^{t-1})$ in the implicit first-stage regression that isolates the shock $\varepsilon_{j^*,t} = y_{j^*,t}-\nu(y^{t-1})$. Because of double robustness, and as argued more generally by \citetalias{Chernozhukov2018} (and confirmed by our proof), it turns out that estimation error in $\gamma_0$ and $\nu_0$ only affects the asymptotic distribution of $\hat{\beta}_h$ through the \emph{product} of the estimation errors $\|\hat{\gamma}-\gamma_0\| \times \|\hat{\nu}-\nu_0\|$. In our local-to-SVAR model \eqref{eqn:model}, both terms in this product are of order $T^{-\zeta}$ due to the omitted lags. The product is then of order $T^{-2\zeta}$ and thus asymptotically negligible \alert{when $\zeta>1/4$}, consistent with our earlier intuition.

\paragraph{The fragility of VARs.}
In contrast to LP, the VAR estimator is fragile.

\begin{prop} \label{thm:var}
	Under \cref{asn:model},
	\begin{align*}
	\hat{\delta}_h -\theta_{h,T} &= \tr\left\lbrace S^{-1} \Psi_h HT^{-1}\sum_{t=1}^T \varepsilon_t \tilde{y}_{t-1}'\right\rbrace + \frac{1}{\sigma_{j^*}^2}e_{i^*,n}'A^h T^{-1}\sum_{t=1}^T \xi_{0,t} \varepsilon_{j^*,t} \\
		&\quad + T^{-\zeta}\abias(\hat{\delta}_h) + o_p(T^{-1/2}+T^{-\zeta}),
	\end{align*}
	where
	\[\abias(\hat{\delta}_h) \equiv \tr\left\lbrace S^{-1} \Psi_h H\sum_{\ell=1}^{\infty} \alpha_\ell DH'(A')^{\ell-1}\right\rbrace - e_{i^*,n}'\sum_{\ell=1}^h A^{h-\ell} H\alpha_\ell e_{j^*,m},\]
	\[\Psi_h \equiv \sum_{\ell=1}^h A^{h-\ell}H_{\bullet, j^*} e_{i^*,n}'A^{\ell-1},  \]
and $\lbrace \tilde{y}_t \rbrace$ and $S$ are defined in \cref{asn:model}.
\end{prop}
\begin{proof}
See \cref{subsection:Proof_thm_var}.
\end{proof}

\alert{The convergence rate $T^{-\min\lbrace 1/2,\zeta\rbrace}$ of the VAR estimator is weakly slower than the $T^{-\min\lbrace 1/2,2\zeta\rbrace}$ rate achieved by LP.} This is because the VAR estimator suffers from bias of order $T^{-\zeta}$, while the stochastic terms of order $T^{-1/2}$ are the same as they would be in a correctly specified SVAR($p$) model.\footnote{The first stochastic term captures sampling uncertainty in the reduced-form impulse responses $\hat{A}^h$, while the second term captures uncertainty in the structural impact response vector $\hat{\nu}$.} The VAR bias is only asymptotically negligible if $\zeta>1/2$, a much smaller degree of robustness than shown above for LP. The case $\zeta=1/2$ is of particular interest, as then the bias and standard deviation are of the same asymptotic order \citep[see also][]{Schorfheide2005}. MA terms of order $T^{-1/2}$ can be detected with asymptotic probability strictly between 0 and 1 by specification tests, as will be shown in \cref{subsec:hausman}.

The asymptotic bias is due to two forces: first, the coefficient matrix $\hat{A}$ is biased due to the endogeneity caused by the MA terms, and second, the VAR estimator extrapolates the horizon-$h$ impulse response based on a parametric formula $\hat{A}^h$ that does not hold exactly in the true VARMA model \eqref{eqn:model}. This is more easily seen in the special case of a univariate model $y_t = \rho y_{t-1} + [1+T^{-\zeta}\alpha(L)]\varepsilon_t$ with $n=m=1$, in which case
\[\abias(\hat{\delta}_h) \equiv \underbrace{h\rho^{h-1}}_{\frac{\partial (\rho^h)}{\partial \rho}}\underbrace{(1-\rho^2)\sum_{\ell=1}^\infty \rho^{\ell-1}\alpha_{\ell}}_{\abias(\hat{\rho})=\frac{\cov(\alpha(L)\varepsilon_t,\tilde{y}_{t-1})}{\var(\tilde{y}_{t-1})}}-\underbrace{\sum_{\ell=1}^h \rho^{h-\ell}\alpha_\ell}_{\theta_{h,T}-\rho^h},\]
where $\hat{\rho}=\hat{A}$ is the AR(1) coefficient from an OLS regression of $y_t$ on $y_{t-1}$.\footnote{Lag augmentation of the VAR impulse response estimator as in \citet{Inoue2020} may reduce the first term in the bias formula, but it does not affect the second term.} \alert{While the dynamic responses estimated by the VAR are prone to bias, the shock identification \emph{per se} is not, in the sense that the VAR's estimated impact (horizon-0) response is identical to the doubly robust LP estimate (see \cref{sec:estimators}).} 

\paragraph{Confidence intervals.}
The preceding results imply that the conventional LP confidence interval is robust to misspecification while the conventional VAR interval is not. We define the level-($1-a$) LP and VAR confidence intervals using the standard formulae:
\begin{equation} \label{eqn:ci}
\ci(\hat{\beta}_h) \equiv \left[\hat{\beta}_h \pm z_{1-a/2}\sqrt{\avar(\hat{\beta}_h)/T}\right],\quad \ci(\hat{\delta}_h) \equiv \left[\hat{\delta}_h \pm z_{1-a/2}\sqrt{\avar(\hat{\delta}_h)/T}\right].
\end{equation}
Here $z_{1-a/2}$ is the $1-a/2$ quantile of the standard normal distribution, and $\avar(\hat{\beta}_h)$ and $\avar(\hat{\delta}_h)$ are the asymptotic variances of the leading (order-$T^{-1/2}$) stochastic terms in the representations of the LP and VAR estimators in \cref{thm:lp,thm:var}; explicit formulae for the asymptotic variances are given in \cref{thm:asy_var} in \cref{app:cov}, which also implies that $\avar(\hat{\beta}_h) \geq \avar(\hat{\delta}_h)$. None of the results below would change if we replaced the asymptotic variances with the conventional consistent estimates of these (that assume correct specification, as implemented in standard econometric software packages).\footnote{Under \cref{asn:model}, homoskedastic standard errors suffice. For LP, \alert{Heteroskedasticity and Autocorrelation Robust inference would generally be required under the weaker assumptions in Supplemental Appendix C.1, though simple heteroskedasticity-robust standard errors suffice under the assumptions discussed by \citet{MontielOlea2021} and \citet{Xu2023}.}}

\begin{cor} \label{thm:coverage}
Under \cref{asn:model} \alert{and $\zeta>1/4$}, $\lim_{T\to\infty} P(\theta_{h,T} \in \ci(\hat{\beta}_h)) = 1-a$. If moreover $\avar(\hat{\delta}_h)>0$ and $\abias(\hat{\delta}_h) \neq 0$, then $\lim_{T\to\infty} P(\theta_{h,T} \in \ci(\hat{\delta}_h)) = \lim_{T\to\infty} \lbrace 1-r\left(T^{1/2-\zeta}b_h;z_{1-a/2}\right) \rbrace$, where $b_h \equiv \abias(\hat{\delta}_h)/\sqrt{\avar(\hat{\delta}_h)}$, $r(b;c) \equiv P_{Z \sim N(0,1)}(|Z+b| > c)=\Phi(-c-b)+\Phi(-c+b)$, and $\Phi(\cdot)$ is the standard normal distribution function.
\end{cor}

\begin{proof}
Considering separately the three cases $\zeta \in (1/4,1/2)$, $\zeta=1/2$, and $\zeta>1/2$, the result is an immediate consequence of \cref{thm:lp,thm:var}.
\end{proof}

LP robustly controls coverage \alert{when $\zeta>1/4$}, while the VAR confidence interval generically has coverage converging to zero for $\zeta \in (1/4,1/2)$, and strictly below the nominal level $1-a$ for $\zeta = 1/2$. Intuitively, the VAR confidence interval has the right width (the same as in the correctly specified case) but the wrong location due to the bias.

\subsection{Hausman misspecification test}
\label{subsec:hausman}

To aid in interpreting the magnitude of the local misspecification in our set-up, we consider a \citet{Hausman1978} test of correct specification of the VAR model that compares the VAR and LP impulse response estimates. This test rejects for large values of $\sqrt{T}|\hat{\beta}_h-\hat{\delta}_h|/\sqrt{\avar(\hat{\beta}_h)-\avar(\hat{\delta}_h)}$. A test of this kind was proposed by \citet{Stock2018} in the context of testing for invertibility.

\begin{prop} \label{thm:hausman}
	Impose \cref{asn:model}, \alert{$\zeta>1/4$}, and $\avar(\hat{\beta}_h) > \avar(\hat{\delta}_h) > 0$. Then the asymptotic rejection probability of the Hausman test equals
	\[\lim_{T \to \infty} P\left(\frac{\sqrt{T}|\hat{\beta}_h-\hat{\delta}_h|}{\sqrt{\avar(\hat{\beta}_h)-\avar(\hat{\delta}_h)}}>z_{1-a/2}\right) = \lim_{T\to\infty} r\left(\frac{T^{1/2-\zeta}b_h}{\sqrt{\avar(\hat{\beta}_h)/\avar(\hat{\delta}_h)-1}} ;z_{1-a/2}\right),\]
	where $b_h$ and $r(\cdot,\cdot)$ were defined in \cref{thm:coverage}.
\end{prop}
\begin{proof}
Considering separately the three cases $\zeta \in (1/4,1/2)$, $\zeta=1/2$, and $\zeta>1/2$, the result follows from \cref{thm:lp,thm:var} as well as \cref{thm:asy_var} in \cref{app:cov}.
\end{proof}
As claimed previously, the Hausman test is consistent against MA misspecification of order $T^{-\zeta}$ with $\zeta \in (1/4,1/2)$, except in the knife-edge case where $\abias(\hat{\delta}_h)=0$. When $\zeta=1/2$ and $\abias(\hat{\delta}_h) \neq 0$, the asymptotic rejection probability is strictly between the significance level $a$ and 1. In \cref{sec:varithmetic} we will use the Hausman test to quantify the difficulty of detecting especially pernicious types of model misspecification.

\subsection{Lag length selection}
\label{subsec:lags}

\alert{We now elaborate further on the role of the estimation lag length. We first discuss the properties of LP when the lag length is selected using standard information criteria. We then show that increasing the lag length robustifies VAR inference by rendering it equivalent with LP inference.}

\paragraph{Lag length selection for LP.}
\alert{Supplemental Appendix C.3 shows that the LP confidence interval maintains correct asymptotic coverage when the lag length $p$ is selected via the Bayesian Information Criterion (BIC), provided that $\zeta \geq 1/2$. Specifically, the BIC should be applied to an auxiliary VAR in the observed data series $\check{y}_t$; the selected lag length then determines the number $p$ of lags to control for in subsequent LP inference (which otherwise discards the auxiliary VAR). The resulting LP inference is robust to model selection errors that are known to cause difficulties for VAR inference (\citealp{Leeb2005}; \citealp[chapter 2.6.5]{Kilian2017}). At a high level, the greater reliability of LP inference with data-dependent lag length is a consequence of the double robustness discussed earlier, see \citet{Belloni2013} and \citetalias{Chernozhukov2018}.}

\alert{While the BIC suffices in theory for valid local projection inference, we follow \citet{Kilian2017} and recommend that researchers employ the more conservative Akaike Information Criterion (AIC) in practice. The reason is that, in finite samples, the downsides of under-specifying the lag length outweigh the slight inefficiency associated with over-selecting the lag length. \cref{sec:sim} demonstrates that this lag length selection procedure delivers LP confidence intervals with accurate coverage in realistic DGPs.}

\paragraph{Long-lag VARs.}
One simple way to remove the asymptotic bias of the VAR estimator is to control for sufficiently many lags\alert{---typically many more lags than indicated by conventional information criteria}. This is because in this case the estimator is asymptotically equivalent with the LP estimator. See \citet{Plagborg2021} and \citet{Xu2023} for related results in models without explicit MA misspecification.

\begin{cor} \label{thm:lag}
Suppose the model \eqref{eqn:model_p} written in companion form \eqref{eqn:model} satisfies \cref{asn:model} \alert{and $\zeta>1/4$}. Let $\tilde{\check{y}}_t$ denote the stationary solution to equation \eqref{eqn:model_p} when $\alpha(L)=0$. If $\varepsilon_{j^*,t-\ell} \in \sspan(\tilde{\check{y}}_{t-1},\dots,\tilde{\check{y}}_{t-p})$ for all $\ell=1,\dots,h$, then $\abias(\hat{\delta}_h)=0$ and $\avar(\hat{\delta}_h)=\avar(\hat{\beta}_h)$. In particular, these results obtain if either of the following two sufficient conditions hold:
\begin{enumerate}[i)]
	\item The model is a local-to-SVAR($p_0$) model (i.e., $\check{A}_\ell=0$ for $p_0<\ell\leq p$) and $h \leq p-p_0$, where $p$ is the estimation lag length.
	\item The shock of interest is directly observed and ordered first (i.e., $j^*=1$ and $\check{A}_{1,j,\ell}=0$ for all $j,\ell$), and $h \leq p$.
\end{enumerate}
\end{cor}
\begin{proof}
See \cref{subsection:proof_thm_lag}.
\end{proof}

\alert{We see that, the larger the impulse horizon $h$ of interest, the larger is the estimation lag length $p$ required for bias reduction.} In fact, \cref{sec:varithmetic} shows that the \emph{only} way to guarantee that the asymptotic bias of the VAR estimator is zero is to control for so many lags that LP and VAR are asymptotically equivalent.

\section{VAR inference under bounded misspecification}
\label{sec:varithmetic}

To show that the fragility of VARs is likely to matter in practice, we now investigate the worst-case properties of VAR procedures under a tight constraint on the amount of misspecification. We prove that the conventional VAR confidence interval is robust if, \alert{\emph{and only if},} LP and VAR intervals coincide asymptotically. VARs with short-to-moderate lag lengths suffer from severe coverage distortions even for small amounts of misspecification that are hard to rule out either economically or statistically. Beyond increasing the lag length, an alternative strategy to fix VAR undercoverage is to use a larger bias-aware critical value; however, we show that the resulting confidence interval is usually wider than the LP interval. Finally, we show that all conclusions extend to the case of joint inference on multiple impulse responses.

Throughout this section we set $\zeta=1/2$ so that the asymptotic bias-variance trade-off between LP and VAR is non-trivial.

\subsection{Worst-case bias and mean-squared error}
\label{subsec:wc_bias_mse}

Building towards our main results on VAR coverage distortions, we begin by deriving the worst-case bias and mean-squared error of the VAR estimator.

\paragraph{Misspecification bound.}
To quantify the amount of misspecification in the local-to-SVAR model \eqref{eqn:model} with $\zeta=1/2$, we define the \emph{noise-to-signal ratio} 
\[\tr\left\lbrace \var(T^{-1/2}\alpha(L)\varepsilon_t)\var(\varepsilon_t)^{-1} \right\rbrace = \tr\left\lbrace \left(T^{-1}\sum_{\ell=1}^\infty \alpha_\ell D \alpha_\ell'\right)D^{-1}\right\rbrace = T^{-1}\|\alpha(L)\|^2,\]
where we define the norm $\|\alpha(L)\| \equiv \sqrt{\sum_{\ell=1}^\infty \tr\lbrace D \alpha_\ell'D^{-1}\alpha_\ell\rbrace}$. Suppose we are willing to impose \emph{a priori} that the noise-to-signal ratio is at most $M^2/T$ for some constant $M \in (0,\infty)$. For small $M^2/T$, this roughly means that a fraction $M^2/T$ of the variance of the model's error term is due to the misspecification. This corresponds to restricting the parameter space for $\alpha(L)$ to all absolutely summable lag polynomials that satisfy $\|\alpha(L)\|\leq M$. In the following we will consider the worst-case properties of the VAR estimator over this parameter space, treating the other (consistently estimable) parameters $(A,H,D)$ as fixed.

\paragraph{Worst-case bias.}

\begin{prop} \label{thm:bias_bound}
Impose \cref{asn:model}, $\zeta=1/2$, and $\avar(\hat{\delta}_h) > 0$. Then
\[\max_{\alpha(L) \colon \|\alpha(L)\| \leq M} |b_h| = M\sqrt{\frac{\avar(\hat{\beta}_h)}{\avar(\hat{\delta}_h)}-1},\]
where we recall the definition $b_h=\abias(\hat{\delta}_h)/\sqrt{\avar(\hat{\delta}_h)}$. Recall also that $\avar(\hat{\beta}_h)$ and $\avar(\hat{\delta}_h)$ do not depend on $\alpha(L)$.
\end{prop}

\begin{proof}
The claim is a special case of Proposition C.2 in Supplemental Appendix C.4.
\end{proof}

Under our bound $M^2/T$ on the noise-to-signal ratio, the worst-case (scaled) VAR bias is a simple function of $M$ and of the relative asymptotic precision $\avar(\hat{\beta}_h)/\avar(\hat{\delta}_h)$ of the VAR estimator vs.\ LP. These two quantities are ``sufficient statistics'' for the worst-case bias regardless of the number $n$ of variables in the VAR, the lag length $p$, the specific VAR parameters $(A,H,D)$, and the horizon $h$. Hence, our subsequent analysis of the worst-case properties of VAR procedures depends only on $M$ and on the relative precision, allowing us to concisely present analytical results that cover a wide range of local-to-SVAR models without having to resort to simulations that inevitably only cover a finite number of DGPs. 

\cref{thm:bias_bound} shows that \alert{VAR estimators must trade off efficiency and robustness}: the worst-case VAR bias is small precisely when the VAR estimator has nearly the same variance as LP. While the worst-case bias can be reduced by increasing the VAR estimation lag length $p$, the proposition shows that this can \emph{only} happen at the expense of increasing the variance. If we include so many lags that the worst-case bias is zero (cf.\ \cref{thm:lag}), then the VAR estimator must \emph{necessarily} be asymptotically equivalent with LP.

\paragraph{Worst-case mean squared error.}
For future reference we briefly discuss how the worst-case mean squared error (MSE) of the VAR estimator depends on the imposed bound on misspecification. Based on \cref{thm:lp,thm:var} as well as \cref{thm:asy_var}, we define the asymptotic MSE of the VAR and LP estimators as follows:
\[\aMSE(\hat{\beta}_h) \equiv \avar(\hat{\beta}_h),\quad \aMSE(\hat{\delta}_h) \equiv \abias(\hat{\delta}_h)^2 + \avar(\hat{\delta}_h).\]
\begin{cor} \label{thm:mse}
	Impose \cref{asn:model} and $\zeta=1/2$. Then
	\[\sup_{\alpha(L) \colon \|\alpha(L)\|\leq M} \lbrace \aMSE(\hat{\delta}_h)-\aMSE(\hat{\beta}_h)\rbrace = (M^2-1)\lbrace \avar(\hat{\beta}_h)-\avar(\hat{\delta}_h)\rbrace.\]
\end{cor}
\begin{proof}
See \cref{subsection:proof_thm_mse}.
\end{proof}
In words, the worst-case MSE regret of VAR relative to LP is proportional to the variance reduction of VAR relative to LP, with a proportionality constant of $M^2-1$. If $M > 1$ (corresponding to a noise-to-signal ratio greater than $1/T$), the worst-case MSE of VAR thus strictly exceeds the MSE of LP. From here it is also straightforward to recover the minimax optimal way to average LP and VAR estimates.

\begin{cor} \label{thm:model_avg}
	Impose \cref{asn:model}, $\zeta=1/2$, and $\avar(\hat{\beta}_h)>\avar(\hat{\delta}_h)$. Consider the model-averaging estimator $\hat{\theta}_h(\omega) \equiv \omega \hat{\beta}_h + (1-\omega)\hat{\delta}_h$, and denote its asymptotic MSE by $\aMSE(\hat{\theta}_h(\omega))$. Then
	\[\argmin_{\omega \in \mathbb{R}} \sup_{\alpha(L) \colon \|\alpha(L)\|\leq M} \aMSE(\hat{\theta}_h(\omega)) = \frac{M^2}{1+M^2}.\]
\end{cor}
\begin{proof}
See \cref{subsection:proof_thm_model_avg}.
\end{proof}
If $M=1$, it is minimax optimal to weight the LP and VAR estimates equally. If $M=2$ (corresponding to a noise-to-signal ratio of $4/T$), the LP estimator receives 80\% weight.

\subsection{Worst-case coverage}
\label{sec:var_coverage}

We now turn to our main area of interest: the worst-case asymptotic coverage of the conventional VAR confidence interval under our bound on the amount of misspecification. This turns out to take a very simple form.

\begin{cor} \label{thm:coverage_worstcase}
	Impose \cref{asn:model}, $\zeta=1/2$, and $\avar(\hat{\delta}_h)>0$. Then
	\[\inf_{\alpha(L) \colon \|\alpha(L)\|\leq M} \lim_{T\to\infty} P(\theta_{h,T} \in \ci(\hat{\delta}_h)) = 1-r\left(M\sqrt{\avar(\hat{\beta}_h)/\avar(\hat{\delta}_h)-1};z_{1-a/2}\right).\]
\end{cor}
\begin{proof}
This is an immediate consequence of \cref{thm:coverage,thm:bias_bound}.
\end{proof}

Based on this corollary, \cref{fig:coverage_worstcase} provides a complete characterization of the robustness-efficiency trade-off for VAR confidence intervals. It plots the worst-case coverage probability as a function of the ratio of standard errors for VAR and LP, given significance level $a=10\%$ and different values of $M$. The shaded area depicts an empirically relevant range of standard error ratios obtained in four empirical applications from \citet{Ramey2016}.\footnote{We replicate \citeauthor{Ramey2016}'s identification schemes for monetary, tax, government spending, and technology shocks. The shaded area shows the 10th to 90th percentiles of standard error ratios at horizons exceeding 1 year. See the online replication materials for details.} We see that, even for $M=1$ (corresponding to a noise-to-signal ratio of $1/T$), the worst-case coverage probability is below 48\% whenever the asymptotic standard deviation of the VAR estimator is less than half that of LP---a value that is typical in applied work. Further, at the bottom end of the empirically relevant range, the worst-case coverage probability is essentially zero as soon as $M \geq 1$. It is only at the very right side of the figure---when the VAR includes enough lags to remove nearly all bias, thus increasing the standard error almost to that of LP---that the VAR confidence interval has coverage close to the nominal level.

\begin{figure}[tp]
	\centering
	\includegraphics[width=0.75\linewidth]{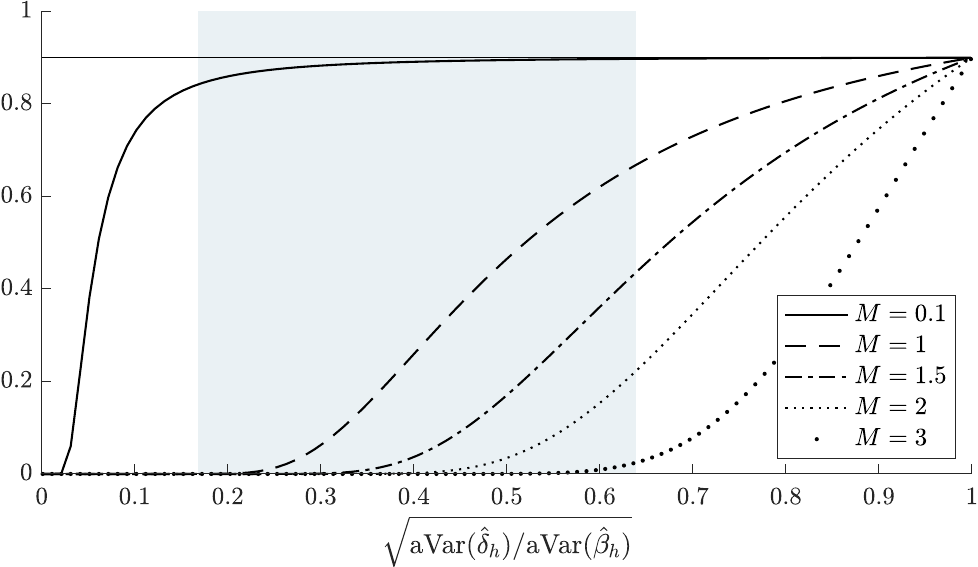}
	\caption{Worst-case asymptotic coverage probability of the conventional 90\% VAR confidence interval. Horizontal axis: relative asymptotic standard deviation of VAR vs.\ LP. Different lines: different bounds $M$ on $\|\alpha(L)\|$. Shaded area: empirical 10th--90th percentile range of relative standard errors based on \citet{Ramey2016}, see the online replication materials for details. The solid horizontal line marks the nominal coverage probability $1-a=90\%$.} \label{fig:coverage_worstcase}
\end{figure}

The potential for VAR undercoverage documented here may not be so concerning if the worst-case misspecification can be ruled out on economic theory grounds, or if it is easily detectable statistically. We now argue that neither appears to be the case.

\paragraph{Economic theory.}
The shape and magnitude of the least favorable misspecification is difficult to rule out generally based on economic theory. The least favorable MA polynomial $\alpha^\dagger(L;h,M)=\sum_{\ell=1}^\infty \alpha_{\ell,h,M}^\dagger L^\ell$ for VAR coverage is the same as the least favorable one for bias (i.e., the $\alpha(L)$ that achieves the maximum in \cref{thm:bias_bound}). Since $\abias(\hat{\delta}_h)$ is linear in $\alpha(L)$, the least favorable choice given the constraint $\|\alpha(L)\|\leq M$ follows from the Cauchy-Schwarz inequality (see the proof of Proposition C.2 in Supplemental Appendix C.4):
\begin{equation} \label{eqn:lf}
\alpha_{\ell,h,M}^\dagger \propto D^{1/2}H'\Psi_h'S^{-1}A^{\ell-1}HD^{1/2} - \mathbbm{1}(\ell \leq h)\sigma_{j^*}^{-1}D^{1/2}H'(A')^{h-\ell}e_{i^*,n}e_{j^*,m}',\quad \ell \geq 1,
\end{equation}
where the constant of proportionality (which does not depend on the lag $\ell$) is chosen so that $\|\alpha^\dagger(L;h,M)\|=M$. Note that the shape of the least favorable MA polynomial depends on the particular horizon $h$ of interest but not on $M$; i.e., the bound $M^2/T$ on the noise-to-signal ratio only scales the polynomial up or down.

We note two main properties of the least favorable misspecification. First, the magnitude of the MA coefficients $\alpha_{\ell,h,M}^\dagger$ decays exponentially as $\ell\to\infty$. In other words, not only is the overall magnitude of the least favorable model misspecification small (as imposed in the noise-to-signal bound), the MA coefficients at long lags are in fact particularly small. Second, numerical examples shown in \cref{app:least_fav} suggest that the MA coefficients tend to be largest in magnitude at horizon $h$, displaying either a hump-shaped pattern as a function of $\ell$---consistent with economic theories of adjustment costs or learning---or a single zig-zag pattern---consistent with theories of overshooting or lumpy adjustment. We thus view MA dynamics of the worst-case form as empirically and theoretically relevant.\footnote{However, the least favorable MA polynomial derived above need not be of interest to researchers who trust that some equations in their SVAR specification are exactly correctly specified, as this imposes the additional restrictions that some linear combinations of the rows of the MA polynomial $\alpha(L)$ equal zero.}

\paragraph{Statistical tests.}
The least favorable misspecification is also difficult to detect statistically. \cref{thm:hausman,thm:bias_bound} imply that, for $\alpha(L)=\alpha^\dagger(L;h,M)$, the asymptotic rejection probability of the Hausman test of correct VAR specification equals $r(M;z_{1-a/2})$. When $M=1$ (corresponding to a noise-to-signal ratio of $1/T$), the odds of the Hausman test \emph{failing} to reject the misspecification are nearly 3-to-1 at significance level $a=10\%$, since $r(1;z_{0.95}) = 26\%$. At significance level $a=5\%$, the odds are nearly 5-to-1, since $r(1;z_{0.975})=17\%$. Standard \emph{ex post} model misspecification tests are thus unlikely to indicate a problem even if the potential for undercoverage is severe.

Rather than committing \emph{a priori} to a parameter space for $\alpha(L)$ through choice of $M$, we can also ask a different question: across all possible types and magnitudes of misspecification, what is the worst-case probability that the conventional VAR confidence interval fails to cover the true impulse response, yet we fail to reject correct specification of the VAR model?
\begin{cor} \label{thm:jointprob_worstcase}
Impose \cref{asn:model}, $\zeta=1/2$, and $\avar(\hat{\beta}_h)>\avar(\hat{\delta}_h)>0$. Consider the joint event $\mathcal{A}_T$ that $\theta_{h,T} \notin \ci(\hat{\delta}_h)$ and the Hausman test in \cref{thm:hausman} fails to reject misspecification. Then
\[\sup_{\alpha(L)}\; \lim_{T\to\infty} P(\mathcal{A}_T) = \sup_{b \geq 0}\; r(b;z_{1-a/2})\left\lbrace 1 - r\left(\frac{b}{\sqrt{\avar(\hat{\beta}_h)/\avar(\hat{\delta}_h)-1}} ;z_{1-a/2}\right)\right\rbrace,\]
where the supremum on the left-hand side is taken over all absolutely summable lag polynomials $\alpha(L)$.
\end{cor}
\begin{proof}
See \cref{subsection:proof_thm_jointprob_worstcase}.
\end{proof}

\begin{figure}[tp]
	\centering
	\includegraphics[width=0.75\linewidth]{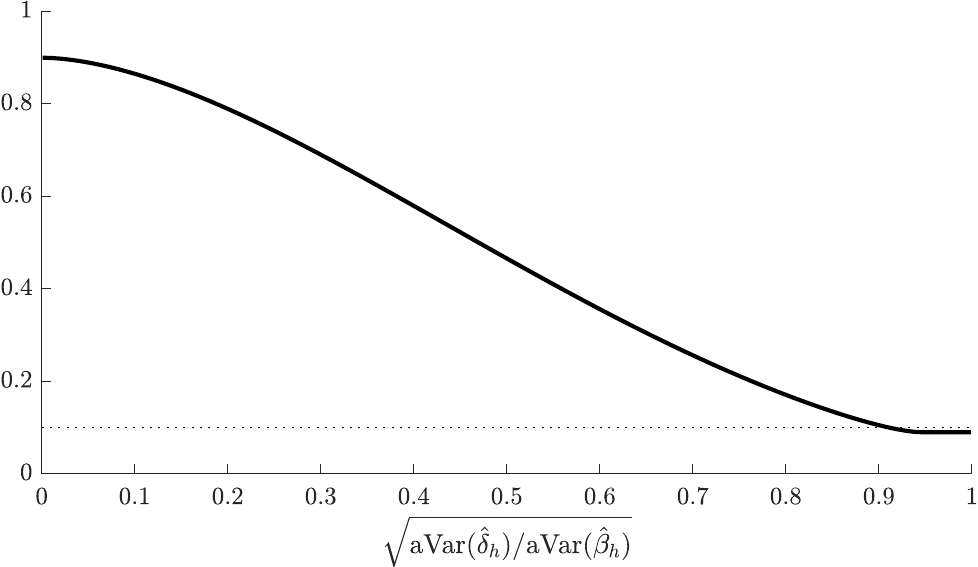}  
	\caption{Worst-case asymptotic probability of the joint event that the conventional VAR confidence interval fails to cover the true impulse response and yet the Hausman test fails to reject misspecification. Horizontal axis: relative asymptotic standard deviation of VAR vs.\ LP. The dotted horizontal line marks the nominal significance level $a=10\%$.} \label{fig:jointprob_worstcase}
\end{figure}

\cref{fig:jointprob_worstcase} plots this worst-case probability for a significance level of $a=10\%$, which by \cref{thm:jointprob_worstcase} depends only on the ratio $\avar(\hat{\delta}_h)/\avar(\hat{\beta}_h)$. Under correct specification, the probability of the joint event is equal to $a(1-a)$ ($=9\%$ when $a=10\%$). With misspecification, the joint probability instead exceeds 46\% when the asymptotic standard deviation of the VAR estimator is less than half that of the LP estimator. As $\avar(\hat{\delta}_h)/\avar(\hat{\beta}_h) \to 0$, the worst-case joint probability approaches $1-a$. We thus again see that statistical tests may fail to warn against the potential for severe VAR coverage distortions.

\subsection{Bias-aware inference}
\label{sec:var_biasaware}

Rather than removing bias by increasing the lag length (thus ensuring equivalence with LP), an alternative way to fix the undercoverage of the conventional VAR confidence interval is to adjust the critical value upward to compensate for the bias, as suggested in a general setting by \citet{Armstrong2021}. Suppose again that we restrict the misspecification $\alpha(L)$ to satisfy $\|\alpha(L)\|\leq M$. Then we define the \emph{bias-aware} VAR confidence interval
\begin{equation*}
\ci_B(\hat{\delta}_h;M) \equiv \left[\hat{\delta}_h \pm \cv_{1-a}\left(M\sqrt{\frac{\avar(\hat{\beta}_h)}{\avar(\hat{\delta}_h)}-1}\right)\sqrt{\avar(\hat{\delta}_h)/T}\right],
\end{equation*}
where the bias-aware critical value $\cv_{1-a}(b)$ is given by the number such that $r(b;\cv_{1-a}(b))=a$, and $r(\cdot,\cdot)$ is defined in \cref{thm:coverage}. By construction, this bias-aware confidence interval has correct (but potentially conservative) asymptotic coverage.
\begin{cor} \label{thm:coverage_biasaware}
	Impose \cref{asn:model}, $\zeta=1/2$, and $\avar(\hat{\delta}_h)>0$. Then
	\[\inf_{\alpha(L) \colon \|\alpha(L)\|\leq M}\lim_{T\to\infty} P(\theta_{h,T} \in \ci_B(\hat{\delta}_h;M)) = 1-a.\]
\end{cor}
\begin{proof}
The result follows immediately from \cref{thm:var,thm:bias_bound}.
\end{proof}
It turns out, however, that a very tight bound $M$ on the signal-to-noise ratio is required for the bias-aware VAR interval to be shorter than the LP interval. \cref{fig:rel_length} plots the relative interval length as a function of the relative asymptotic standard deviation of VAR and LP, for a significance level of $a = 10\%$ and for different misspecification bounds $M$. The figure shows that $M$ has to be quite small---apparently below 1---for the bias-aware VAR length to dominate the LP length regardless of the DGP and horizon. Even for $M=1.5$, bias-aware VAR is at best only moderately shorter than LP. Finally, for values of $M$ above 2 (corresponding to a noise-to-signal ratio above $4/T$), bias-aware VAR is dominated by LP.

\begin{figure}[tp]
	\centering
	\includegraphics[width=0.75\linewidth]{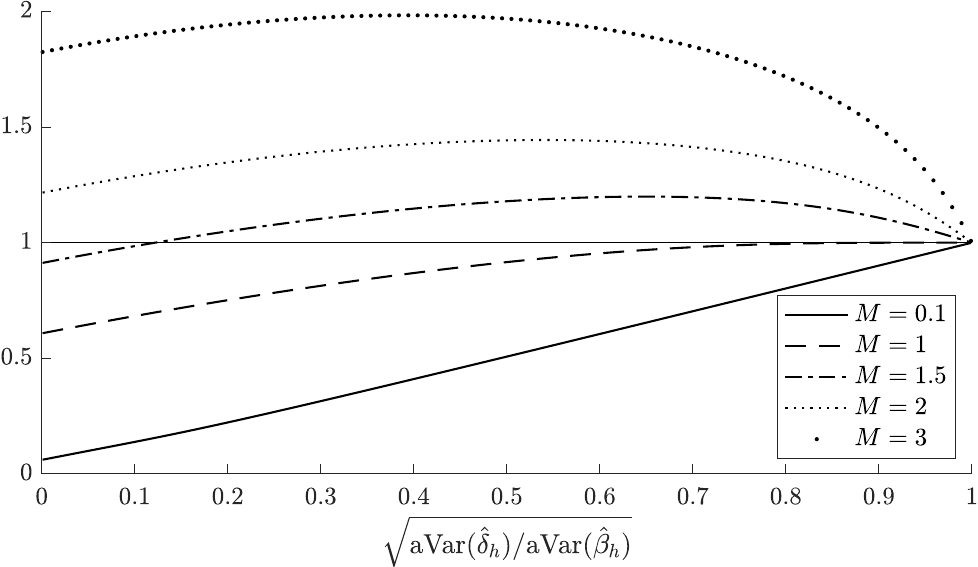}  
	\caption{Relative length of bias-aware VAR confidence interval vs.\ conventional LP interval. Significance level $a=10\%$. Horizontal axis: relative asymptotic standard deviation of VAR vs.\ LP. Different lines: different bounds $M$ on $\|\alpha(L)\|$. The solid horizontal line marks the value 1.} \label{fig:rel_length}
\end{figure}

In \cref{app:biasaware_opt} we furthermore show that the conventional LP confidence interval is at worst slightly wider than a more efficient bias-aware confidence interval centered at the model averaging estimator $\hat{\theta}_h(\omega) = \omega \hat{\beta}_h + (1-\omega)\hat{\delta}_h$, introduced in \cref{thm:model_avg} above. Even if the weight $\omega$ is chosen to optimize confidence interval length, the gains relative to the LP interval are very small when $M \geq 2$ (corresponding to a noise-to-signal ratio above $4/T$).

We thus conclude that, while bias-aware VAR inference is possible in theory, in practice the gains relative to the simpler LP interval are small at best, unless we put an extremely tight bound on the noise-to-signal ratio.

\subsection{Inference on multiple impulse responses}
\label{sec:joint}

Since the least favorable MA polynomial derived in \cref{sec:var_coverage} depends on the horizon $h$ of interest, one might hope that VARs would not be as prone to bias and thereby undercoverage if interest centers on \emph{multiple} impulse responses. \alert{Unfortunately, Supplemental Appendix C.4 shows that this is not the case. There we consider inference on a vector of impulse responses for any combination of response variables $i$, shocks $j$, and horizons $h$. Generalizing \cref{thm:bias_bound}, we show that the worst-case norm of the bias is non-negligible if the VAR offers efficiency gains for \emph{any} linear combination of the parameters of interest. This implies that the conventional VAR confidence interval has fragile coverage even if the target parameter is a linear combination of impulse responses (such as the integral or sum across multiple horizons, as in the fiscal multiplier applications reviewed in \citealp{Ramey2016}). The conventional Wald confidence ellipsoid centered at the VAR estimator is similarly fragile.}

\section{Practical relevance}
\label{sec:sim}

\alert{This section establishes the practical relevance of our theoretical conclusions by comprehensively reviewing current practice for lag length selection in the applied VAR literature, coupled with a simulation study calibrated to the application in \citet{Kaenzig2021}.}

\subsection{Review of current practice for VAR lag length selection}
\label{subsec:practice}

\alert{To evaluate lag length selection practice in the applied VAR literature, we created a comprehensive list of articles published between January 2015 and June 2025 in six top economics journals: \emph{American Economic Review}, \emph{American Economic Journal: Macroeconomics}, \emph{Econometrica}, \emph{Journal of Political Economy}, \emph{Quarterly Journal of Economics}, and \emph{Review of Economic Studies}. We restrict attention to papers that use VARs on time series data (not panel data) to estimate \emph{structural} impulse response functions. This yields 81 papers total. For each paper, we picked a single specification as the ``main'' one.\footnote{In a few cases, the main use of VARs in a paper was only reported in the appendix.} To be conservative, if multiple specifications received equal attention in the main analysis, we picked the one with the largest lag length. The full list of papers, together with the recorded information about the VAR specifications, is provided in the online replication materials.}

\alert{Our findings suggest that the theoretical results of the preceding sections are likely to have bite in practice, as typical VAR estimation lag lengths in applied papers are short or moderate. The modal lag length is 4 in quarterly data and 12 in monthly data, the mean is slightly below the mode, and less than 10\% of papers employ lag lengths greater than or equal to twice the modal values.\footnote{\alert{Additionally, all the various empirical VAR specifications reported in the handbook chapters of \citet{Ramey2016} and \citet{Stock2016} are consistent with these patterns.}} 20\% of papers select the lag length in a data-dependent way, often using information criteria. On average across papers, the estimation lag length is only 28\% as large as the the longest reported impulse horizon. We conclude that few applied papers follow the recommendation of \citet[][pp.\ 58--66]{Kilian2017} to use long lag lengths and avoid information criteria. This suggests that VAR inference results reported in much of the applied literature could be subject to the fragility we highlighted in the preceding sections. In fact, since around 40\% of papers employ Bayesian shrinkage, the estimation bias could be even larger than indicated by the lag length alone.}

\subsection{Empirically calibrated simulation study}
\label{subsec:simul_oil}

\alert{We now show through simulations that our asymptotic results are informative about the performance of LPs and VARs in an empirically relevant finite-sample setting when the lag length is selected as in current applied practice. Our DGP is calibrated to the oil news shock application in \citet{Kaenzig2021}.}

\paragraph{Set-up.}
\alert{The DGP is a VAR estimated on the dataset of \citet{Kaenzig2021}, using a somewhat longer lag length than that used in the paper. The data series are that paper's oil shock proxy, the real price of oil, world oil production, world oil inventories, world industrial production, U.S. industrial production, and the U.S. consumer price index (CPI). Whereas \citeauthor{Kaenzig2021} employs 12 lags, we estimate a recursively identified VAR($18$) by OLS on his data and use this as the simulation DGP, with i.i.d.\ Gaussian shocks. We do not claim that the VAR(18) DGP is more ``realistic'' than a VAR(12) estimated on the same data, but we contend that it is desirable that confidence intervals should have reliable coverage in both these DGPs.}

\alert{The parameters of interest are the impulse responses of CPI to the observed oil shock. To be consistent with our theory, we use an ``internal instruments'' specification that orders the proxy first in the VAR; this differs from the ``external instruments'' specification used by \citeauthor{Kaenzig2021}. The sample size is $T = 720$ months. We will entertain different choices of the estimation lag length $p$ for the LP and VAR estimators. We report results for both delta method and bootstrap confidence intervals. Results are based on 10,000 Monte Carlo simulations. See Supplemental Appendix D for implementation details.}

\paragraph{Results.}
\alert{\cref{fig:oil_simulations} shows that our theoretical results on LP robustness and VAR fragility are practically relevant. The figure depicts the coverage probabilities (left panel) and median confidence interval length (right panel) for VARs (in red, solid and dashed) and LPs (in blue, solid and dashed). The top panel fixes the estimation lag length at $p = 12$, while the bottom panel selects the lag length by AIC. Both these choices are frequently encountered in the applied literature, as documented earlier. Given these conventional lag lengths, VAR confidence intervals tend to be shorter than LP intervals, but quite materially undercover, with coverage falling below 60\% at medium and long horizons. LP instead attains close to the nominal coverage level of 90\% throughout, as expected. The mean lag length selected by AIC is $9.7$, evidently insufficient to guard against dynamic misspecification.}

\begin{figure}[t!]
\centering
{\textsc{\small Lag length $p=12$}} \\
\vspace{0.6\baselineskip}
\begin{subfigure}{\textwidth}
\centering
\includegraphics[width=0.825\textwidth]{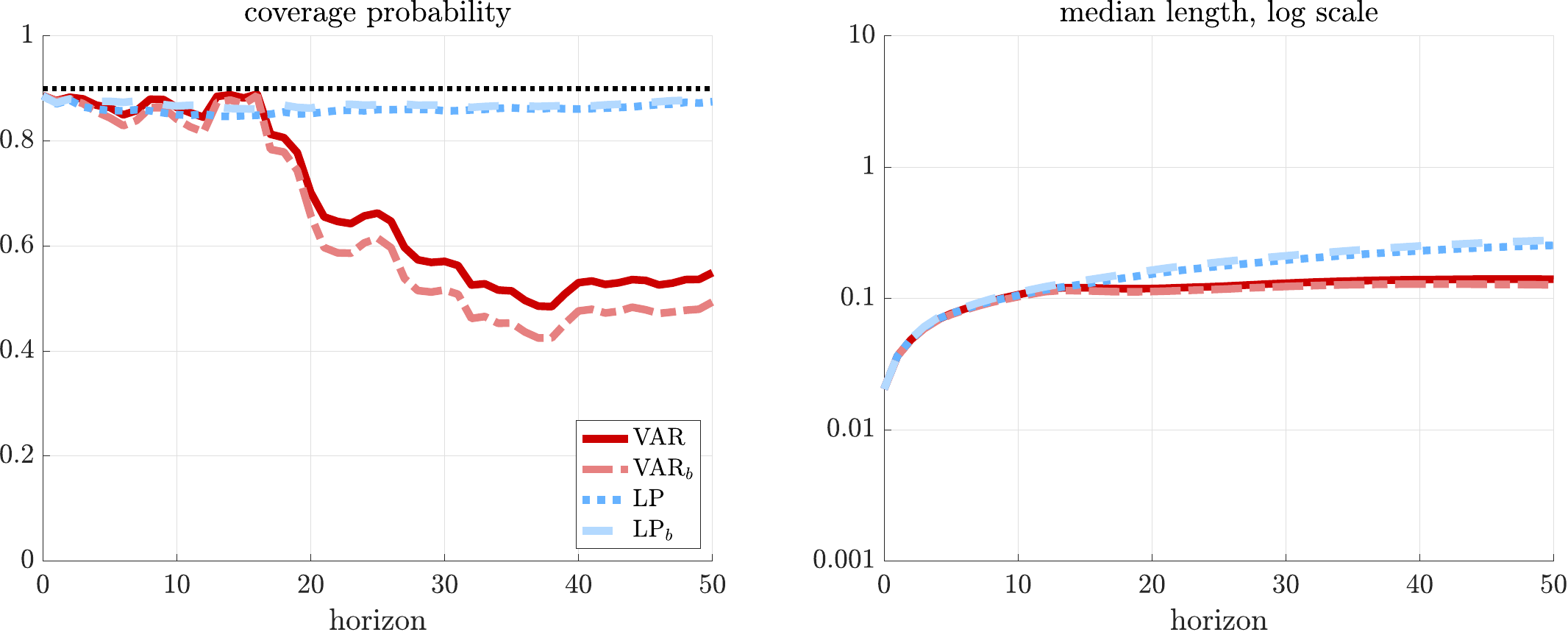}\vspace{0.2cm}
\end{subfigure}
\centering
{\textsc{\small Lag length via AIC}} \\
\vspace{0.6\baselineskip}
\begin{subfigure}{\textwidth}
\centering
\includegraphics[width=0.825\textwidth]{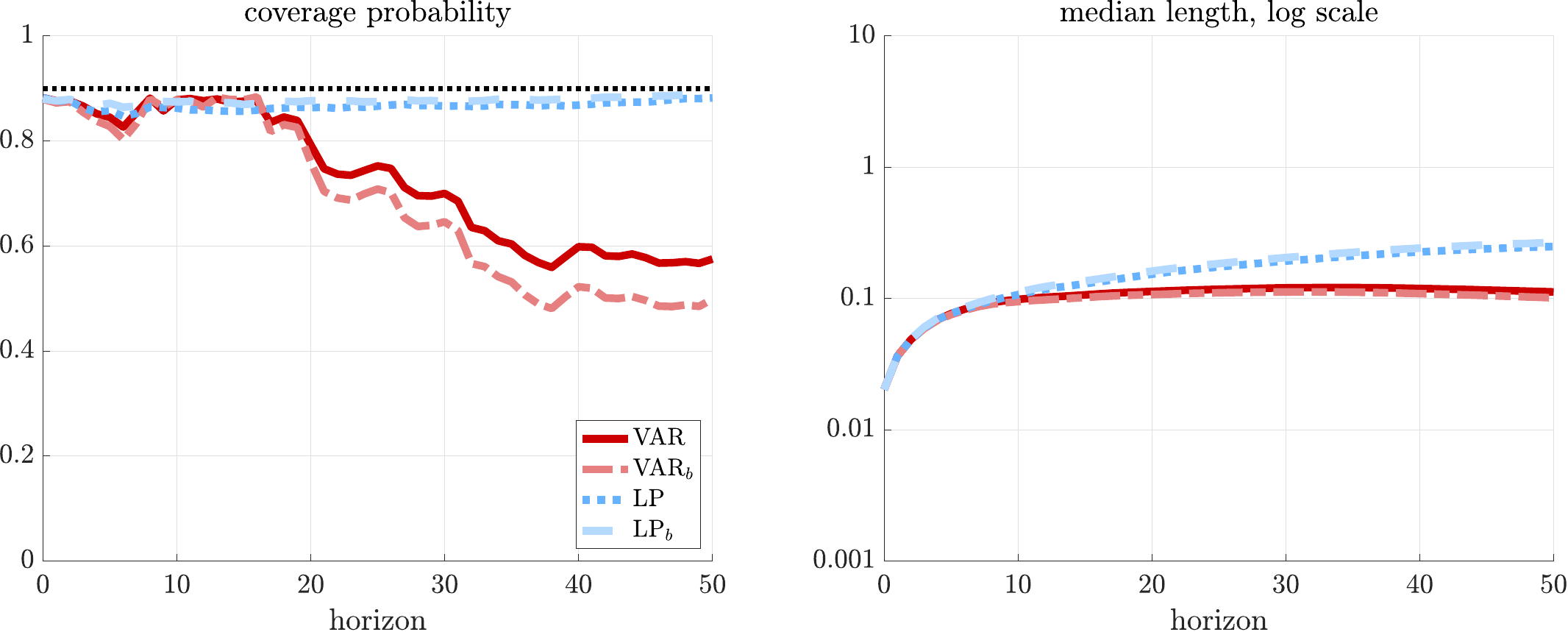}\vspace{0.2cm}
\end{subfigure}
\caption{Coverage probability (left) and median length (right) for VAR (red) and LP (blue) nominal 90\% confidence intervals computed via the delta method or bootstrap (the latter are indicated with subscript ``b'' in the legends). Lag length: fixed at $p = 12$ in the top panel, and selected using AIC in the bottom panel.}
\label{fig:oil_simulations}
\end{figure}

\alert{Supplemental Appendix D illustrates that the VAR under-coverage can be ameliorated by increasing the lag length beyond what is typically used in current applied practice, at the expense of higher variance, consistent with \cref{subsec:lags}. The supplement also reports that the VAR estimator achieves lower MSE than LP, suggesting that---despite the poor performance of the VAR confidence interval---the larger bias of the VAR estimator may not compromise its usefulness as a \emph{point estimator} \citep[see also][]{Li2024}.}

\alert{\citet{MPQW2025} find that the qualitative conclusions above extend to a wide range of empirically calibrated simulation DGPs based on richly specified dynamic factor models: VAR confidence intervals fail to adequately control coverage in a sizable fraction of DGPs, while LP confidence intervals robustly maintain accurate coverage. They document that LP is not only more robust to lag length selection, but also to the choice of control variables, consistent with the theory in this paper.}

\section{Conclusion}
\label{sec:conc}

Our theoretical results suggest the following practical take-aways:

\begin{enumerate}[1.]

\item When the goal is to construct confidence intervals for impulse responses that have accurate coverage in a wide range of empirically relevant DGPs---as opposed to minimizing MSE---then the smaller bias of LPs documented in simulations by \citet{Li2024} is more valuable than the smaller variance enjoyed by VAR estimators.

\item Researchers who use LP should control for those lags of the data that are strong predictors of the outcome or impulse variables. This is important \alert{not only when the shock is recursively identified, but} even if the researcher directly observes a near-perfect proxy for the shock of interest. However, \alert{unlike for VAR inference}, it is \emph{not} necessary to get the lag length or \alert{set of control variables} exactly right to achieve correct coverage. To select the number of lags to control for in the LP, \alert{we recommend running an auxiliary VAR in all variables used in the analysis and selecting the lag length to minimize the AIC; the auxiliary VAR is only used as a device to select the lag length and is otherwise discarded.}

\item \alert{With the moderate lag lengths typical in current applied practice, VAR confidence intervals will only have accurate coverage at short horizons, and only because they are approximately equivalent with LP intervals at these horizons. If, at some horizons of interest,} an estimated VAR yields confidence intervals that are substantially narrower than the corresponding LP intervals, we recommend increasing the VAR lag length until that is no longer the case, to guarantee robust confidence interval coverage. Conventional tests of correct VAR specification do not suffice to guard against coverage distortions.

\end{enumerate}

Is there a way forward for VAR inference, beyond just including a large number of lags? We showed how to construct a VAR confidence interval with a bias-aware critical value that robustly controls coverage, but found that it will typically lead to wider confidence intervals than LP. Another option would be to estimate VARMA models rather than pure VARs, though this would be computationally expensive, and the bias-variance trade-off relative to LPs is unclear. In principle, VAR procedures may work better under additional restrictions on the misspecification, such as shape restrictions on the impulse response functions.\footnote{Given any convex parameter space for the misspecification MA polynomial $\alpha(L)$, the worst-case bias of the VAR estimator (see \cref{thm:var}) can be computed using convex programming.} However, it appears that detailed application-specific restrictions would be required to generate a negligible worst-case bias, since we have shown that the least favorable misspecification in our baseline analysis \alert{cannot generally be ruled out based on economic theory}. Rather than restricting the parameter space, future research could instead investigate weakening the coverage requirement, e.g., only requiring a certain coverage probability \emph{on average} over a set of horizons \citep{Armstrong2022}, or by changing the target for inference from the true impulse response function to a smooth projection of this function \citep{Genovese2008}. Finally, a subjectivist Bayesian VAR modeler need only worry about our negative results if their prior on potential misspecification attaches significant weight to MA processes that imply large VAR biases.

\appendix
\begin{appendices}
\numberwithin{equation}{section}
\numberwithin{figure}{section}
\numberwithin{lem}{section}
\numberwithin{prop}{section}
\numberwithin{cor}{section}

\section{Further theoretical results}
\label{app:further_results}

\subsection{Least favorable misspecification}
\label{app:least_fav}

\cref{fig:worstcase_ar1} plots some numerical examples of the least favorable MA polynomial $\alpha^\dagger(L;h,M)=\sum_{\ell=1}^\infty \alpha^\dagger_{\ell,h,M} L^\ell$ discussed in \cref{sec:var_coverage}. We focus here on a univariate local-to-AR(1) model $y_t = \rho y_{t-1} + [1+T^{-1/2}\alpha(L)]\varepsilon_t$, though unreported numerical experiments suggest that the qualitative features mentioned below also apply to multivariate models. Recall that the least favorable MA coefficients depend on the horizon $h$ of interest, while $M$ only influences the overall scale of the coefficients, and not their shape as a function of $\ell$. The figure shows that the shape of the coefficients either takes the form of a hump or of a single zig-zag pattern, with the largest absolute value of the coefficients generally occurring at $\ell=h$. Notice that we can flip the signs of all coefficients without changing the absolute value of the bias.

\begin{figure}[t]
\centering \vspace{0.5\baselineskip}
\includegraphics[width=\textwidth]{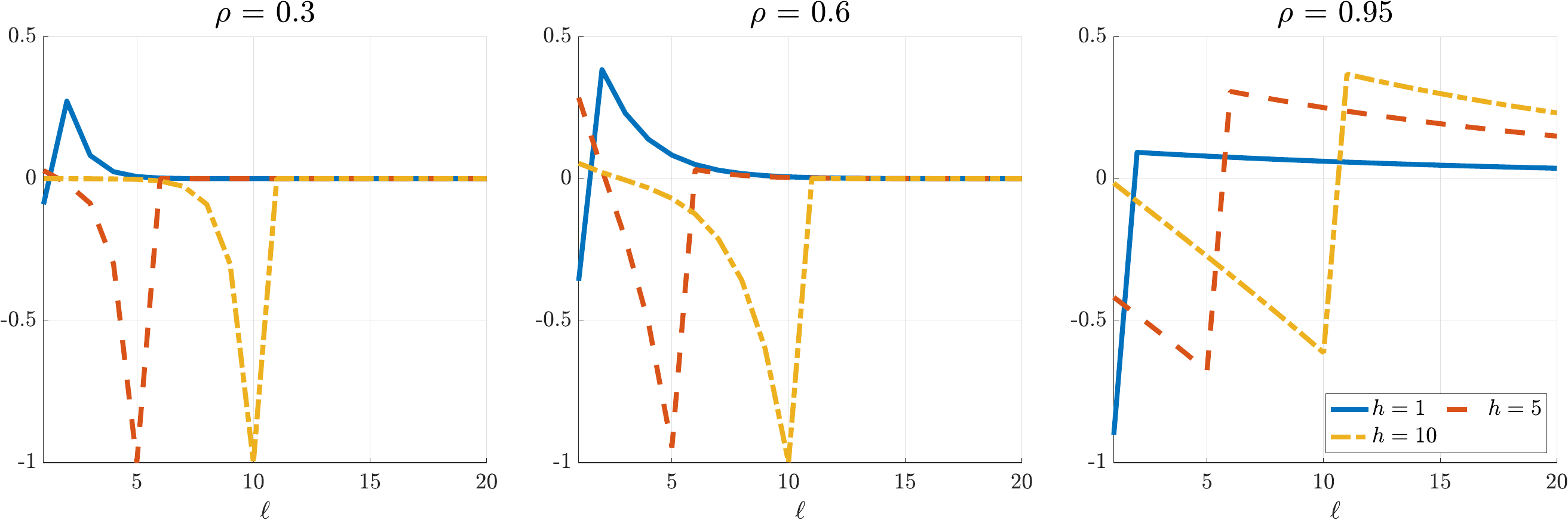}
\caption{Least favorable $\alpha^\dagger(L;h)$ for horizons $h \in \{ 1, 5, 10 \}$ for local-to-AR(1) models with different persistence parameters $\rho$ (left, middle, and right panel).}
\label{fig:worstcase_ar1}
\end{figure}

\subsection{More efficient bias-aware confidence interval}
\label{app:biasaware_opt}

Generalizing the bias-aware VAR confidence interval in \cref{sec:var_biasaware}, consider a bias-aware confidence interval that is centered at the model averaging estimator $\hat{\theta}_h(\omega)=\omega \hat{\beta}_h+(1-\omega)\hat{\delta}_h$ from \cref{thm:model_avg}:
\[\ci_B(\hat{\theta}_h(\omega);M) \equiv \left[\hat{\theta}_h(\omega) \pm \cv_{1-a}\left(\frac{(1-\omega)M\tau}{\sqrt{1+\omega^2\tau^2}}\right)\sqrt{(1+\omega^2\tau^2) \avar(\hat{\delta}_h)/T}\right],\]
where $\tau \equiv \sqrt{\avar(\hat{\beta}_h)/\avar(\hat{\delta}_h)-1}$. This interval equals the conventional LP interval when $\omega=1$ and the bias-aware VAR interval when $\omega=0$.

\begin{cor} \label{thm:coverage_biasaware_opt}
	Impose \cref{asn:model}, $\zeta=1/2$, and $\avar(\hat{\delta}_h)>0$. Then, for any $\omega \in [0,1]$,
	\[\inf_{\alpha(L) \colon \|\alpha(L)\|\leq M}\lim_{T\to\infty} P(\theta_{h,T} \in \ci_B(\hat{\theta}_h(\omega);M)) = 1-a.\]
\end{cor}
\begin{proof}
The result follows from \cref{thm:lp,thm:var,thm:bias_bound}, \cref{thm:asy_var}, and the same calculations as in the proof of \cref{thm:model_avg}.
\end{proof}

Even if we choose the weight $\omega$ to minimize confidence interval length, the resulting bias-aware interval tends to be nearly as long as the LP interval. The length-optimal weight $\omega=\omega^*$ is given by
\[\omega^* \equiv \argmin_{\omega \in [0,1]}\; \cv_{1-a}\left(\frac{(1-\omega)M\tau}{\sqrt{1+\omega^2\tau^2}}\right)\sqrt{1+\omega^2\tau^2}.\]
\cref{fig:weight_opt} shows this optimal weight as a function of $M$ and the relative asymptotic standard deviation of the VAR and LP estimators, while \cref{fig:rel_length_opt} shows the length of the resulting optimal bias-aware confidence interval relative to the length of the conventional LP interval. We see that, for $M \geq 2$, there is little gain from reporting the optimal bias-aware interval rather than the LP interval, regardless of the relative precision of VAR and LP. An additional observation is that, for $M \geq 1.5$, the length-optimal $\omega^*$ is numerically close to the MSE-optimal weight $M^2/(1+M^2)$ derived in \cref{thm:model_avg}.

\begin{figure}[p]
	\centering
	\includegraphics[width=0.75\linewidth]{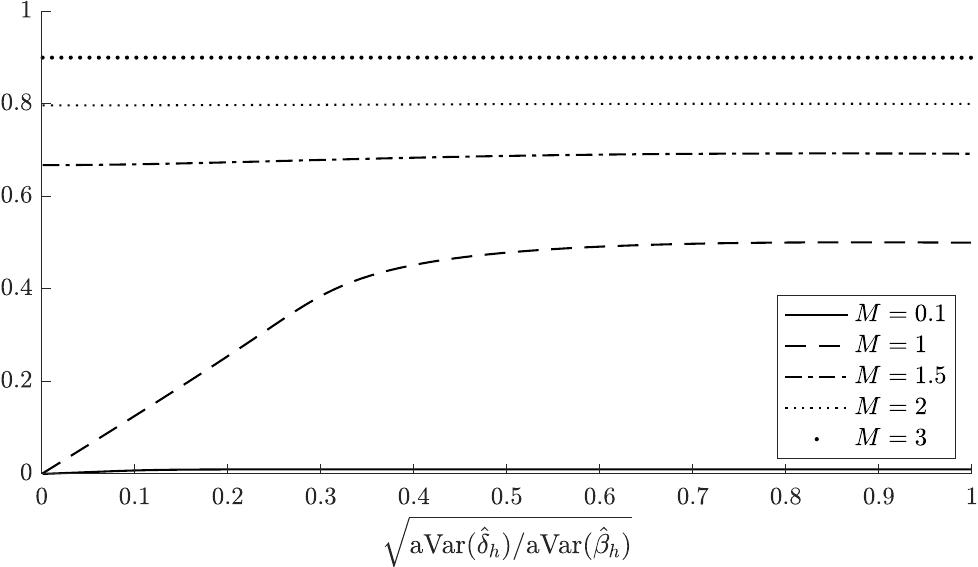}
	\caption{Length-optimal weight on LP in bias-aware confidence interval. Significance level $a=10\%$. Horizontal axis: relative asymptotic standard deviation of VAR vs.\ LP. Different lines: different bounds $M$ on $\|\alpha(L)\|$.} \label{fig:weight_opt}
	
	\bigskip
	
	\includegraphics[width=0.75\linewidth]{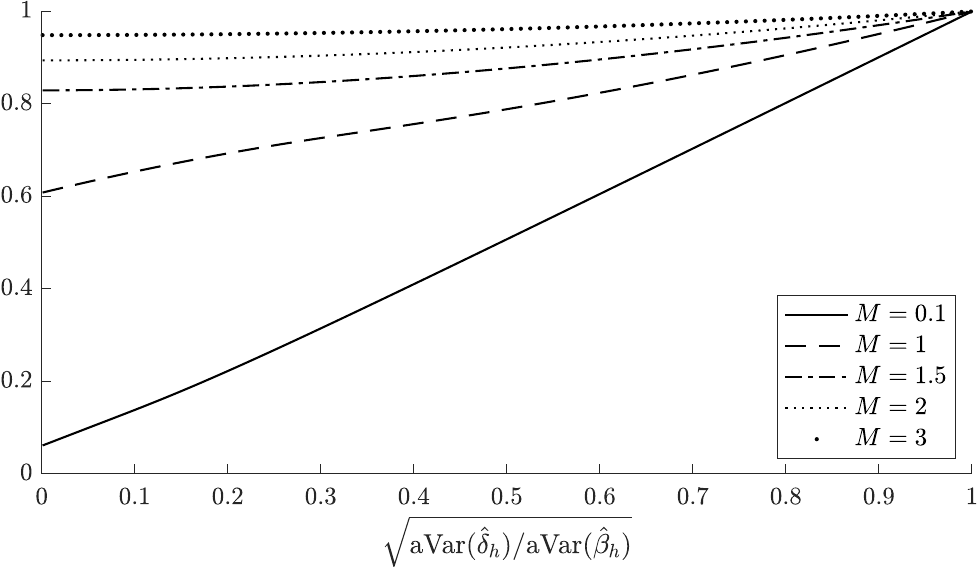}
	\caption{Relative length of optimal bias-aware confidence interval vs.\ conventional LP interval. Significance level $a=10\%$. Horizontal axis: relative asymptotic standard deviation of VAR vs.\ LP. Different lines: different bounds $M$ on $\|\alpha(L)\|$.} \label{fig:rel_length_opt}
	
\end{figure}

\subsection{Covariance structure of LP and VAR estimators}
\label{app:cov}

The following result provides the asymptotic variance-covariance matrix of the LP and VAR estimators in the general multi-dimensional set-up of \cref{sec:joint}. Define $\Psi_{i^*,j^*,h}$ as in \cref{thm:var}, but making the dependence on $(i^*,j^*)$ explicit in the notation.

\begin{cor} \label{thm:asy_var}
Impose \cref{asn:model}, with part (\ref{itm:asn_triang}) holding for all shock indices $j_1^*,\dots,j_k^*$. Then for any $a,b \in \lbrace 1,\dots,k \rbrace$,
\begin{align*}
\acov(\hat{\beta}_{i_a^*,j_a^*,h_a},\hat{\beta}_{i_b^*,j_b^*,h_b}) &= \mathbbm{1}(j_a^*=j_b^*) \sigma_{j_a^*}^{-2} \left(\psi_{a,b} + \sum_{\ell=1}^{\min\lbrace h_a,h_b\rbrace} e_{i_a^*,n}'A^{h_a-\ell}\Sigma(A')^{h_b-\ell}e_{i_b^*,n}\right), \\
\acov(\hat{\delta}_{i_a^*,j_a^*,h_a},\hat{\delta}_{i_b^*,j_b^*,h_b}) &= \mathbbm{1}(j_a^*=j_b^*) \sigma_{j_a^*}^{-2} \psi_{a,b} + \tr\left(\Psi_{i_a^*,j_a^*,h_a}\Sigma \Psi_{i_b^*,j_b^*,h_b}' S^{-1}\right), \\
\acov(\hat{\beta}_{i_a^*,j_a^*,h_a},\hat{\delta}_{i_b^*,j_b^*,h_b}) &= \acov(\hat{\delta}_{i_a^*,j_a^*,h_a},\hat{\delta}_{i_b^*,j_b^*,h_b}),
\end{align*}
where
\[\psi_{a,b} \equiv e_{i_a^*,n}'A^{h_a}\overline{H}_{j_a^*}\overline{D}_{j_a^*}\overline{H}_{j_a^*}'(A')^{h_b}e_{i_b^*,n},\]
and the ``$\acov$'' notation refers to elements of the asymptotic variance-covariance matrix in Equation (C.2) in Supplemental Appendix C.4. In particular, $\acov(\hat{\beta}_{i_a^*,j_a^*,h_a}-\hat{\delta}_{i_a^*,j_a^*,h_a},\hat{\delta}_{i_b^*,j_b^*,h_b})=0$.
\end{cor}
\begin{proof}
See \cref{subsection:proof_asy_var}.
\end{proof}

\section{Proofs}
\label{app:proofs}

Lemmas whose name begins with ``E'' can be found in Supplemental Appendix E.

\subsection{Proof of \texorpdfstring{\cref{thm:lp}}{Proposition \ref{thm:lp}}} \label{subsection:Proof_thm_lp}
Lemma E.1 shows that we can represent
\begin{equation} \label{eqn:lp_repr}
y_{i^*,t+h} = \theta_{h,T}\varepsilon_{j^*,t} + \underline{B}_{h,y}'\underline{y}_{j^*,t} + B_{h,y}' y_{t-1} + \xi_{i^*,h,t} + T^{-\zeta} \Theta_h(L)\varepsilon_t,
\end{equation}
where the expressions for the coefficient matrices and the $1 \times n$ two-sided lag polynomial $\Theta_h(L)=\sum_{\ell=-\infty}^\infty \Theta_{h,\ell} L^\ell$ are given in Lemma E.1.

Let $\hat{x}_{h,t}$ be the residual in a regression of $y_{j^*,t}$ on $\underline{y}_{j^*,t}$ and $y_{t-1}$, using data points $1,2,\dots,T-h$. By definition, $\hat{x}_{h,t}$ is in-sample orthogonal to $\underline{y}_{j^*,t}$ and $y_{t-1}$. Hence,
\begin{align*}
	\hat{\beta}_h &= \frac{\sum_{t=1}^{T-h}y_{i^*,t+h}\hat{x}_{h,t}}{\sum_{t=1}^{T-h}\hat{x}_{h,t}^2} \\
	&= \theta_{h,T} + \frac{\sum_{t=1}^{T-h}(y_{i^*,t+h}-\theta_{h,T}\hat{x}_{h,t} - \underline{B}_{h,y}'\underline{y}_{j^*,t} - B_{h,y}' y_{t-1})\hat{x}_{h,t}}{\sum_{t=1}^{T-h}\hat{x}_{h,t}^2} \quad \text{by orthogonality} \\
	&= \theta_{h,T} + \frac{T^{-1}\sum_{t=1}^{T-h}(y_{i^*,t+h}-\theta_{h,T}\hat{x}_{h,t} - \underline{B}_{h,y}'\underline{y}_{j^*,t} - B_{h,y}' y_{t-1})\hat{x}_{h,t}}{\sigma_{j^*}^2 + o_p(1)} \quad \text{by Lemma E.4(v)}\\
	&= \theta_{h,T} + \frac{T^{-1}\sum_{t=1}^{T-h}(y_{i^*,t+h}-\theta_{h,T}\varepsilon_{j^*,t} - \underline{B}_{h,y}'\underline{y}_{j^*,t} - B_{h,y}' y_{t-1})\hat{x}_{h,t} + O_p(T^{-2\zeta}) +o_p(T^{-1/2})}{\sigma_{j^*}^2 + o_p(1)} \\
	&\;\quad \text{by Lemma E.4(iv)} \\
	&= \theta_{h,T} + \frac{T^{-1}\sum_{t=1}^{T-h}(\xi_{i^*,h,t} + T^{-\zeta} \Theta_h(L)\varepsilon_t)\hat{x}_{h,t} + O_p(T^{-2\zeta}) + o_p(T^{-1/2})}{\sigma_{j^*}^2 + o_p(1)} \quad \text{by \eqref{eqn:lp_repr}} \\
	&= \theta_{h,T} + \frac{T^{-1}\sum_{t=1}^{T-h}(\xi_{i^*,h,t} + T^{-\zeta} \Theta_h(L)\varepsilon_t)\varepsilon_{j^*,t} + O_p(T^{-2\zeta}) + o_p(T^{-1/2})}{\sigma_{j^*}^2 + o_p(1)} \\
	&\;\quad \text{by Lemma E.4(iii) and (vi)} \\
	&= \theta_{h,T} + \frac{T^{-1}\sum_{t=1}^{T-h}\xi_{i^*,h,t}\varepsilon_{j^*,t} + O_p(T^{-2\zeta}) + o_p(T^{-1/2})}{\sigma_{j^*}^2 + o_p(1)} \quad \text{by Lemma E.1},
\end{align*}
and the result follows. \qed

\subsection{Proof of \texorpdfstring{\cref{thm:var}}{Proposition \ref{thm:var}}} \label{subsection:Proof_thm_var}
Note first that
\begin{align*}
	\hat{\delta}_h-e_{i^*,n}'A^h H_{\bullet, j^*} &= e_{i^*,n}'\hat{A}^h\hat{\nu}-e_{i^*,n}'A^h H_{\bullet, j^*} \\
	&= e_{i^*,n}'\hat{A}^h H_{\bullet, j^*} - e_{i^*,n}'A^h H_{\bullet, j^*}  + e_{i^*,n}'\hat{A}^h (\hat{\nu}-H_{\bullet, j^*}). 
\end{align*}
Lemma E.2 shows that $\hat{A}-A = O_p(T^{-\zeta}+T^{-1/2})$. By \citet[Table 7, p.\ 208]{Magnus2007},
\[\left( \frac{\partial (e_{i^*,n}'A^h H_{\bullet, j^*})}{\partial \ve(A)} \right)' = (H_{\bullet, j^*}' \otimes e_{i^*,n}')\left(\sum_{\ell=1}^h (A')^{h-\ell} \otimes A^{\ell-1}\right) = \sum_{\ell=1}^h H_{\bullet, j^*}'(A')^{h-\ell} \otimes e_{i^*,n}'A^{\ell-1},\]
so
\begin{align*}
	\hat{\delta}_h-e_{i^*,n}'A^h H_{\bullet, j^*} 
	&= \left(\sum_{\ell=1}^h H_{\bullet, j^*}'(A')^{h-\ell} \otimes e_{i^*,n}'A^{\ell-1}\right)\ve(\hat{A}-A) + e_{i^*,n}'A^h(\hat{\nu}-H_{\bullet, j^*}) + o_p(T^{-\zeta}+T^{-1/2}) \\
	&= \sum_{\ell=1}^h e_{i^*,n}'A^{\ell-1}(\hat{A}-A)A^{h-\ell}H_{\bullet, j^*} + e_{i^*,n}'A^h(\hat{\nu}-H_{\bullet, j^*}) + o_p(T^{-\zeta}+T^{-1/2})\\
	&= \tr\left\lbrace \Psi_h (\hat{A}-A)\right\rbrace + e_{i^*,n}'A^h(\hat{\nu}-H_{\bullet, j^*}) + o_p(T^{-\zeta}+T^{-1/2}),
\end{align*}
where  $\Psi_h \equiv \sum_{\ell=1}^h A^{h-\ell}H_{\bullet, j^*} e_{i^*,n}'A^{\ell-1}$. Lemma E.2 further implies that
 \begin{align*}
 \tr\left\lbrace \Psi_h (\hat{A}-A) \right\rbrace &= T^{-\zeta}  \tr\left\lbrace S^{-1} \Psi_h H\sum_{\ell=1}^{\infty} \alpha_{\ell}DH'(A')^{\ell-1} \right\rbrace  \\
 &+ \tr\left\lbrace S^{-1} \Psi_h HT^{-1}\sum_{t=1}^T \varepsilon_t \tilde{y}_{t-1}'\right\rbrace + o_p(T^{-\zeta}),
 \end{align*}
where $S$ was defined in \cref{asn:model}. Lemma E.3 shows that
\[\hat{\nu}-H_{\bullet, j^*} = \frac{1}{\sigma_{j^*}^2}T^{-1}\sum_{t=1}^T \xi_{0,t} \varepsilon_{j^*,t} + o_p(T^{-\zeta}+T^{-1/2}).\]
Using the definition of $\theta_{h,T}$ and re-arranging terms gives the desired result. \qed

\subsection{Proof of \texorpdfstring{\cref{thm:lag}}{Corollary \ref{thm:lag}}} \label{subsection:proof_thm_lag}
Use the notation $E^*(z \mid w) = \cov(z,w)\var(w)^{-1}w$ for mean-square projection. Then
\begin{align*}
	\sigma_{j^*}^2 \Psi_h' S^{-1}\tilde{y}_{t-1} &= \bigg(\sum_{\ell=1}^h (A')^{h-\ell} e_{i^*,n} \underbrace{\sigma_{j^*}^2 H_{\bullet,j^*}'(A')^{\ell-1}}_{=\cov(\varepsilon_{j^*,t-\ell},\tilde{y}_{t-1})}\bigg) S^{-1}\tilde{y}_{t-1} \\
	&= \sum_{\ell=1}^h (A')^{h-\ell} e_{i^*,n} E^*(\varepsilon_{j^*,t-\ell} \mid \tilde{y}_{t-1}) \\
	&= \sum_{\ell=1}^h (A')^{h-\ell} e_{i^*,n} \varepsilon_{j^*,t-\ell},
\end{align*}
where the last equality uses $\varepsilon_{j^*,t-\ell} \in \sspan(\tilde{\check{y}}_{t-1},\dots,\tilde{\check{y}}_{t-p})$ for $\ell=1,\dots,h$. Thus,
\begin{align*}
\var(\varepsilon_t'H' \Psi_h' S^{-1}\tilde{y}_{t-1}) &= \var\left(\frac{1}{\sigma_{j^*}^2}\sum_{\ell=1}^h \varepsilon_{j^*,t-\ell}\varepsilon_t'H' (A')^{h-\ell} e_{i^*,n}\right) \\
&= \frac{1}{\sigma_{j^*}^4}\sum_{\ell=1}^h\var\left(\varepsilon_{j^*,t-\ell}\varepsilon_t'H' (A')^{h-\ell} e_{i^*,n}\right) \\
&= \frac{1}{\sigma_{j^*}^4} \sum_{\ell=1}^h E(\varepsilon_{j^*,t-\ell}^2) \var(\varepsilon_t'H' (A')^{h-\ell} e_{i^*,n}) \\
&= \frac{1}{\sigma_{j^*}^2} \var\left(e_{i^*,n}'\sum_{\ell=1}^h A^{h-\ell}H\varepsilon_{t+\ell} \right).
\end{align*}
It now follows as in the proof of \cref{thm:asy_var} that $\avar(\hat{\beta}_h)=\avar(\hat{\delta}_h)$. Then \cref{thm:bias_bound} implies that $\abias(\hat{\delta}_h)=0$. \qed

\subsection{Proof of \texorpdfstring{\cref{thm:mse}}{Corollary \ref{thm:mse}}} \label{subsection:proof_thm_mse}
By \cref{thm:bias_bound}, $\sup_{\alpha(L) \colon \|\alpha(L)\|\leq M} \abias(\hat{\delta}_h;\alpha(L))^2 = M^2\lbrace\avar(\hat{\beta}_h)-\avar(\hat{\delta}_h)\rbrace$. The result follows. \qed

\subsection{Proof of \texorpdfstring{\cref{thm:model_avg}}{Corollary \ref{thm:model_avg}}} \label{subsection:proof_thm_model_avg}
Write $\hat{\theta}_h(\omega) = \hat{\delta}_h + \omega(\hat{\beta}_h-\hat{\delta}_h)$. By \cref{thm:asy_var}, the two terms are asymptotically independent of each other, and the second term has asymptotic variance $\omega^2\lbrace \avar(\hat{\beta}_h)-\avar(\hat{\delta}_h) \rbrace$. Hence,
\[\aMSE(\hat{\theta}_h(\omega)) = \lbrace (1-\omega)\abias(\hat{\delta}_h)\rbrace^2 + \avar(\hat{\delta}_h) + \omega^2\lbrace\avar(\hat{\beta}_h)-\avar(\hat{\delta}_h)\rbrace.\]
By \cref{thm:bias_bound}, the supremum of the above expression over $\alpha(L)$ satisfying $\|\alpha(L)\| \leq M$ equals
\[(1-\omega)^2 M^2\lbrace\avar(\hat{\beta}_h)-\avar(\hat{\delta}_h)\rbrace + \avar(\hat{\delta}_h) + \omega^2\lbrace\avar(\hat{\beta}_h)-\avar(\hat{\delta}_h)\rbrace.\]
To find the $\omega$ that minimizes the above expression, we can equivalently minimize the function $(1-\omega)^2M^2+\omega^2$. The result follows. \qed

\subsection{Proof of \texorpdfstring{\cref{thm:jointprob_worstcase}}{Corollary \ref{thm:jointprob_worstcase}}} \label{subsection:proof_thm_jointprob_worstcase}
\cref{thm:bias_bound} implies that the absolute relative VAR bias $|b_h|$ can be made to take any value in $[0,\infty)$ as $\alpha(L)$ varies over the set of all absolutely summable lag polynomials. The corollary then follows from \cref{thm:coverage,thm:hausman,thm:asy_var}. \qed

\subsection{Proof of \texorpdfstring{\cref{thm:asy_var}}{Corollary A.2}}
\label{subsection:proof_asy_var}
We first use \cref{thm:lp} to compute $\acov(\hat{\beta}_{i_a^*,j_a^*,h_a},\hat{\beta}_{i_b^*,j_b^*,h_b})$. Define $\xi_{j^*,h,t} = (\xi_{1,j^*,h,t},\dots,\xi_{n,j^*,h,t})'$ as in \cref{thm:lp}, but making the dependence on both $i^*$ and $j^*$ explicit in the notation. Observe that
\[E[\xi_{i_a^*,j_a^*,h_a,t}\varepsilon_{j_a^*,t} \xi_{i_b^*,j_b^*,h_b,s}\varepsilon_{j_b^*,s}]=0 \quad \text{for all } s \neq t.\]
Hence,
\[\acov(\hat{\beta}_{i_a^*,j_a^*,h_a},\hat{\beta}_{i_b^*,j_b^*,h_b}) = \frac{1}{\sigma_{j_a^*}^2\sigma_{j_b^*}^2} E[\xi_{i_a^*,j_a^*,h_a,t}\varepsilon_{j_a^*,t} \xi_{i_b^*,j_b^*,h_b,t}\varepsilon_{j_b^*,t}].\]
If $j_a^* < j_b^*$, then $\varepsilon_{j_a^*,t}$ is independent of all the other terms in the above expectation, so the expectation equals zero; similarly if $j_a^* > j_b^*$. Now consider the case $j_a^*=j_b^*$:
\begin{align*}
\acov(\hat{\beta}_{i_a^*,j_a^*,h_a},\hat{\beta}_{i_b^*,j_a^*,h_b}) &= \frac{1}{\sigma_{j_a^*}^4} E[\xi_{i_a^*,j_a^*,h_a,t} \xi_{i_b^*,j_a^*,h_b,t}\varepsilon_{j_a^*,t}^2] \\
&= \frac{1}{\sigma_{j_a^*}^4} E[\xi_{i_a^*,j_a^*,h_a,t} \xi_{i_b^*,j_a^*,h_b,t}]E[\varepsilon_{j_a^*,t}^2] \\
&= \frac{1}{\sigma_{j_a^*}^2} E[\xi_{i_a^*,j_a^*,h_a,t} \xi_{i_b^*,j_a^*,h_b,t}] \\
&= \frac{1}{\sigma_{j_a^*}^2} \Bigg(E[e_{i_a^*,n}'A^{h_a}\overline{H}_{j_a^*}\overline{\varepsilon}_{j_a^*,t}\overline{\varepsilon}_{j_a^*,t}'\overline{H}_{j_a^*}'(A')^{h_b}e_{i_b^*,n}] \\
&\qquad\qquad + E\left[e_{i_a^*,n}'\sum_{\ell_1=1}^{h_a}\sum_{\ell_2=1}^{h_b}A^{h_a-\ell_1}H\varepsilon_{t+\ell_1}\varepsilon_{t+\ell_2}'H'(A')^{h_b-\ell_2} e_{i_b^*,n}\right] \Bigg) \\
&= \frac{1}{\sigma_{j_a^*}^2}\left(\psi_{a,b} + \sum_{\ell=1}^{\min\lbrace h_a,h_b\rbrace} e_{i_a^*,n}'A^{h_a-\ell}\Sigma (A')^{h_b-\ell}e_{i_b^*,n} \right),
\end{align*}
as claimed.

We now derive $\acov(\hat{\delta}_{i_a^*,j_a^*,h_a},\hat{\delta}_{i_b^*,j_b^*,h_b})$ using \cref{thm:var}. Observe that the vector process $(\varepsilon_t' \otimes \tilde{y}_{t-1}',\xi_{j_a^*,0,t}'\varepsilon_{j_a^*,t},\xi_{j_b^*,0,t}'\varepsilon_{j_b^*,t})'$ is a martingale difference sequence with respect to the filtration generated by $\lbrace \varepsilon_t \rbrace$. Moreover, $E[(\varepsilon_t \otimes \tilde{y}_{t-1})\xi_{j^*,0,t}'\varepsilon_{j^*,t}]=0$ for any $j^*$. Hence,
\begin{align*}
\acov(\hat{\delta}_{i_a^*,j_a^*,h_a},\hat{\delta}_{i_b^*,j_b^*,h_b}) &= E\left[\tr\left(S^{-1}\Psi_{i_a^*,j_a^*,h_a} H \varepsilon_t \tilde{y}_{t-1}'\right)\tr\left(S^{-1}\Psi_{i_b^*,j_b^*,h_b} H \varepsilon_t \tilde{y}_{t-1}'\right) \right] \\
&\qquad + \frac{1}{\sigma_{j_a^*}^2\sigma_{j_b^*}^2}E\left[e_{i_a^*,n}'A^{h_a}\xi_{j_a^*,0,t}\xi_{j_b^*,0,t}'(A')^{h_b}e_{i_b^*,n}\varepsilon_{j_a^*,t}\varepsilon_{j_b^*,t} \right].
\end{align*}
The second term on the right-hand side above equals $\mathbbm{1}(j_a^*=j_b^*)\sigma_{j_a^*}^{-2}\psi_{a,b}$, by similar arguments as in the earlier LP calculation. The first term on the right-hand side above equals
\begin{align*}
&E\left[\tilde{y}_{t-1}'S^{-1}\Psi_{i_a^*,j_a^*,h_a} H \varepsilon_t \varepsilon_t' H' \Psi_{i_b^*,j_b^*,h_b}' S^{-1} \tilde{y}_{t-1}  \right] \\
&= \tr\left(E\left[\tilde{y}_{t-1}\tilde{y}_{t-1}'S^{-1}\Psi_{i_a^*,j_a^*,h_a} H \varepsilon_t \varepsilon_t' H' \Psi_{i_b^*,j_b^*,h_b}' S^{-1}  \right] \right) \\
&= \tr\left(E\left[\tilde{y}_{t-1}\tilde{y}_{t-1}'\right]S^{-1}\Psi_{i_a^*,j_a^*,h_a} H E\left[\varepsilon_t \varepsilon_t'\right] H' \Psi_{i_b^*,j_b^*,h_b}' S^{-1} \right) \\
&= \tr\left(S S^{-1}\Psi_{i_a^*,j_a^*,h_a} H D H' \Psi_{i_b^*,j_b^*,h_b}' S^{-1} \right) \\
&= \tr\left(\Psi_{i_a^*,j_a^*,h_a} \Sigma \Psi_{i_b^*,j_b^*,h_b}' S^{-1}\right),
\end{align*}
as claimed.

Finally, we compute $\acov(\hat{\beta}_{i_a^*,j_a^*,h_a},\hat{\delta}_{i_b^*,j_b^*,h_b})$ using \cref{thm:lp,thm:var}. Using arguments similar to above, we obtain
\begin{align*}
&\acov(\hat{\beta}_{i_a^*,j_a^*,h_a},\hat{\delta}_{i_b^*,j_b^*,h_b}) \\
&= \frac{1}{\sigma_{j_a^*}^2} \sum_{s=-\infty}^\infty E\left[e_{i_a^*,n}'\sum_{\ell=1}^{h_a} A^{h_a-\ell}H\varepsilon_{t+\ell} \varepsilon_{j_a^*,t} \tr\left(S^{-1}\Psi_{i_b^*,j_b^*,h_b} H \varepsilon_{t+s} \tilde{y}_{t+s-1}'\right) \right] + \mathbbm{1}(j_a^*=j_b^*)\sigma_{j_a^*}^{-2}\psi_{a,b}.
\end{align*}
The first term on the left-hand side above equals
\begin{align*}
&\frac{1}{\sigma_{j_a^*}^2}\sum_{\ell=1}^{h_a} \sum_{s=-\infty}^\infty E\left[ e_{i_a^*,n}' A^{h_a-\ell}H\varepsilon_{t+\ell}\varepsilon_{t+s}' H' \Psi_{i_b^*,j_b^*,h_b}' S^{-1} \tilde{y}_{t+s-1}\varepsilon_{j_a^*,t} \right] \\
&= \frac{1}{\sigma_{j_a^*}^2}\sum_{\ell=1}^{h_a} E\left[e_{i_a^*,n}' A^{h_a-\ell}H\varepsilon_{t+\ell}\varepsilon_{t+\ell}' H' \Psi_{i_b^*,j_b^*,h_b}' S^{-1} \tilde{y}_{t+\ell-1}\varepsilon_{j_a^*,t} \right] \\
&= \frac{1}{\sigma_{j_a^*}^2}\sum_{\ell=1}^{h_a} e_{i_a^*,n}' A^{h_a-\ell}HE[\varepsilon_{t+\ell}\varepsilon_{t+\ell}'] H' \Psi_{i_b^*,j_b^*,h_b}' S^{-1} E[\tilde{y}_{t+\ell-1}\varepsilon_{j_a^*,t}] \\
&= \frac{1}{\sigma_{j_a^*}^2}\sum_{\ell=1}^{h_a} e_{i_a^*,n}' A^{h_a-\ell}HD H' \Psi_{i_b^*,j_b^*,h_b}' S^{-1} A^{\ell-1}H_{\bullet,j_a^*}\sigma_{j_a^*}^2 \\
&= \tr\bigg( \underbrace{\sum_{\ell=1}^{h_a} A^{\ell-1}H_{\bullet,j_a^*} e_{i_a^*,n}' A^{h_a-\ell}}_{=\sum_{\ell=1}^{h_a} A^{h_a-\ell}H_{\bullet,j_a^*} e_{i_a^*,n}' A^{\ell-1}=\Psi_{i_a^*,j_a^*,h_a}}\Sigma \Psi_{i_b^*,j_b^*,h_b}' S^{-1} \bigg).
\end{align*}
It follows that $\acov(\hat{\beta}_{i_a^*,j_a^*,h_a},\hat{\delta}_{i_b^*,j_b^*,h_b})=\acov(\hat{\delta}_{i_a^*,j_a^*,h_a},\hat{\delta}_{i_b^*,j_b^*,h_b})$, as claimed. \qed

\clearpage
\phantomsection
\addcontentsline{toc}{section}{References}
\bibliography{ref}

\end{appendices}

\end{document}


\begin{frontmatter}

\title{Supplement to ``Doubly Robust Local Projections
and Some Unpleasant VARithmetic''}
\runtitle{Supplement to ``Doubly Robust Local Projections\dots''}

\begin{aug}
%
%
%
\author[add1]{\fnms{Jos\'{e} Luis}~\snm{Montiel Olea}\ead[label=e1]{montiel.olea@gmail.com}}
\author[add2]{\fnms{Mikkel}~\snm{Plagborg-M{\o}ller}\ead[label=e2]{mikkelpm@uchicago.edu}}
\author[add3]{\fnms{Eric}~\snm{Qian}\ead[label=e3]{ericqian@princeton.edu}}
\author[add4]{\fnms{Christian K.}~\snm{Wolf}\ead[label=e4]{ckwolf@mit.edu}}
\address[add1]{%
\orgdiv{Department of Economics},
\orgname{Cornell University}}

\address[add2]{%
\orgdiv{Department of Economics},
\orgname{University of Chicago}}

\address[add3]{%
\orgdiv{Department of Economics},
\orgname{Princeton University}}

\address[add4]{%
\orgdiv{Department of Economics},
\orgname{Massachusetts Institute of Technology}}
\end{aug}


\end{frontmatter}

\begin{appendix}
\setcounter{section}{2}

\section{Further theoretical results}

\subsection{Heteroskedasticity}
\label{app:hetero}

The conclusions of Propositions 3.1 and 3.2 do not require shock independence, either cross-sectionally or across time. In particular, our proofs of these propositions and the auxiliary lemmas in \cref{app:proof_details} replace Assumption 2.1(i) by the following:\footnote{It is an interesting topic for future research to investigate whether the other results in Sections 3 and 4 also hold under the weaker condition.}

\begin{asn} \label{asn:mds}
$\varepsilon_t$ is a strictly stationary martingale difference sequence with respect to its natural filtration; $\var(\varepsilon_t) = D \equiv \diag(\sigma_1^2,\dots,\sigma_m^2)$; $\sigma_j>0$ for all $j=1,\dots,m$; $E[\|\varepsilon_t\|^4]<\infty$; and $\sum_{\ell=1}^\infty \sum_{\tau=1}^\infty \|\cov(\varepsilon_t \otimes \varepsilon_t, \varepsilon_{t-\ell} \otimes \varepsilon_{t-\tau})\|<\infty$.
\end{asn}

\cref{asn:mds} strictly weakens Assumption 2.1(i) by allowing the shocks to be conditionally heteroskedastic and by weakening mutual shock independence to orthogonality. The last part of \cref{asn:mds} is a vector version of the fourth-order cumulant summability condition of \citet[Assumption A1]{Kuersteiner2001}. It restricts the higher-order dependence of the shocks, consistent with many stationary models of conditional heteroskedasticity.\footnote{One sufficient condition is finite dependence, i.e., there exists an integer $K$ such that $\lbrace \varepsilon_s \rbrace_{s \geq t+K}$ is independent of $\lbrace \varepsilon_s \rbrace_{s \leq t}$. Another set of sufficient conditions is that $E(\varepsilon_t \mid \lbrace \varepsilon_s \rbrace_{s \neq t})=0$ and $\lbrace \varepsilon_t \otimes \varepsilon_t \rbrace$ has absolutely summable autocovariance function. See also \citet[Remark 2, p.\ 362]{Kuersteiner2001}.}

\subsection{External instruments and proxies}
\label{app:proxy}

Our framework can accommodate identification via an external instrument (also known as a proxy) by a simple reparametrization. To see this, let the proxy be ordered first in $y_t$. Set $j^*=1$, and replace Assumption 2.1(iii) with the following:
\begin{asn} \label{asn:proxy}
The first column of $A$ consists of zeros, except possibly the first element; the first row of $H$ equals $(1,0,0,\dots,0,1)$; the last column of $H$ consists of zeros, except the first element; and the last column of $\alpha(L)$ consists of zeros.
\end{asn}
\noindent This assumption imposes the following restrictions:
\begin{itemize}
\item The proxy $y_{1,t}$ equals
\[y_{1,t} = A_{1,\bullet} y_{t-1} + \varepsilon_{1,t} + \varepsilon_{m,t} + T^{-\zeta}[\alpha_{1,\bullet}(L) + \alpha_{m,\bullet}(L)]\varepsilon_t,\]
which generalizes Assumption 4 in \citet{Plagborg2021} to allow for local contamination by lagged shocks. The last shock $\varepsilon_{m,t}$ is viewed as measurement error or noise ($v_t$ in the notation of \citealp{Plagborg2021}).
\item The dynamics of $(y_{2,t},\dots,y_{n,t})$ (i.e., with the proxy excluded) follow a VAR(1), up to the local misspecification in the form of lags of $(\varepsilon_{1,t},\dots,\varepsilon_{m-1,t})$.
\item The proxy measurement error $\varepsilon_{m,t}$ is orthogonal to all leads and lags of $(y_{2,t},\dots,y_{n,t})$.
\end{itemize}
Now transform the shocks from $\varepsilon_t$ to $\tilde{\varepsilon}_t=(\tilde{\varepsilon}_{1,t},\dots,\tilde{\varepsilon}_{m,t})'$ as follows:
\[\tilde{\varepsilon}_{1,t} \equiv \varepsilon_{1,t} + \varepsilon_{m,t}; \quad \tilde{\varepsilon}_{j,t} \equiv \varepsilon_{j,t} \; \text{ for } j=2,\dots,m-1; \quad \tilde{\varepsilon}_{m,t} \equiv  \varepsilon_{m,t} - \frac{\sigma_m^2}{\sigma_1^2+\sigma_m^2}(\varepsilon_{1,t}+\varepsilon_{m,t}).\]
By construction, the elements of $\tilde{\varepsilon}_t$ are mutually orthogonal. Note that
\[\varepsilon_{1,t} = \frac{\sigma_1^2}{\sigma_1^2+\sigma_m^2}\tilde{\varepsilon}_{1,t} - \tilde{\varepsilon}_{m,t};\quad 
\varepsilon_{j,t} = \tilde{\varepsilon}_{j,t} \; \text{ for } j=2,\dots,m-1; \quad \varepsilon_{m,t} = \frac{\sigma_m^2}{\sigma_1^2+\sigma_m^2}\tilde{\varepsilon}_{1,t} + \tilde{\varepsilon}_{m,t};\]
write $Q$ for the $m \times m$ matrix such that $\varepsilon_t = Q\tilde{\varepsilon}_t$. We can then re-express the VARMA$(1,\infty)$ model for $y_t$ in Equation (2.1) as
\begin{equation} \label{eqn:system_reparam}
y_t = Ay_{t-1} + \tilde{H}[I + T^{-\zeta}\tilde{\alpha}(L)]\tilde{\varepsilon}_t,
\end{equation}
where
\[\tilde{H} \equiv HQ = \begin{pmatrix}
1 & 0 & 0 & \cdots & 0 & 0 \\
\frac{\sigma_1^2}{\sigma_1^2+\sigma_m^2}H_{2,1} & H_{2,2} & H_{2,3} & \cdots & H_{2,m-1} & -H_{2,1} \\
\vdots & & & & &  \vdots \\
\frac{\sigma_1^2}{\sigma_1^2+\sigma_m^2}H_{n,1} & H_{n,2} & H_{n,3} & \cdots & H_{n,m-1} & -H_{n,1} \\
\end{pmatrix},\]
and $\tilde{\alpha}(L) \equiv Q^{-1}\alpha(L)Q$. Under \cref{asn:proxy} above, the impulse responses of $(y_{2,t},\dots,y_{n,t})$ with respect to $\tilde{\varepsilon}_{1,t}$ in the reparametrized system \eqref{eqn:system_reparam} equal $\sigma_1^2/(\sigma_1^2+\sigma_m^2)$ times the impulse responses of $(y_{2,t},\dots,y_{n,t})$ with respect to $\varepsilon_{1,t}$ in the original parametrization in Equation (2.1). This follows by inspection of the elements $\tilde{H}_{i,1}$ for $i \geq 2$, the assumption that $A_{i,1}=0$ for $i \geq 2$, and the fact that the first column of $\tilde{H}\tilde{\alpha}(L)$ equals $H\alpha(L)Q_{\bullet,1} = \frac{\sigma_1^2}{\sigma_1^2+\sigma_m^2} H\alpha_{\bullet,1}(L)$ (here we use the assumption that the last column of $\alpha(L)$ is zero).

If the original shocks (and measurement error) $\varepsilon_t$ satisfy \cref{asn:mds}, then the transformed shocks $\tilde{\varepsilon}_t$ do also, provided that the fourth-order cumulant condition holds for the transformed shocks (recall that \cref{asn:mds} only requires the shocks to be mutually orthogonal, not independent). The transformed system \eqref{eqn:system_reparam} therefore satisfies \cref{asn:mds} and Assumption 2.1(ii)--(v), with $\tilde{H}$ and $\tilde{\alpha}(L)$ in place of $H$ and $\alpha(L)$. Hence, Propositions 3.1 and 3.2 apply. We conclude that LPs on the proxy $y_{1,t}$---and recursive VARs with the proxy ordered first---consistently estimate the true impulse responses of $(y_{2,t},\dots,y_{n,t})$ with respect to $\varepsilon_{1,t}$, up to the scale factor $\frac{\sigma_1^2}{\sigma_1^2+\sigma_m^2}$. This scale factor, which is the same across all response variables $i^*$ and horizons $h$, reflects attenuation bias caused by the measurement error in the proxy. It gets canceled out if one reports relative (i.e., unit-effect-normalized) impulse responses, as explained by \citet[Section 3.3]{Plagborg2021}. Of course, though the VAR estimator is consistent, it suffers from asymptotic bias of order $T^{-\zeta}$, while the LP estimator has bias of the smaller order $T^{-2\zeta}$.

In summary, the main results in the paper carry over to identification via an external instrument/proxy. A caveat is that the instrument/proxy must be strong, in the sense that $\sigma_1^2=\var(\varepsilon_{1,t})$ is not close to zero; otherwise, we will end up dividing by a number close to zero when computing relative impulse responses.

\subsection{Data-dependent lag selection}
\label{app:info_crit}

Here we argue that local projection inference is more robust to data-dependent lag selection errors than conventional VAR inference. The following proposition establishes the properties of conventional information criteria in our class of DGPs.

\begin{prop} \label{thm:infocrit}
Assume that the $\check{n}$-dimensional process $\lbrace \check{y}_t \rbrace$ is a stationary solution of the local-to-SVAR($p_0$) model in Equation (2.2). Assume also that $\check{A}_{p_0} \neq 0$ so that $p_0$ is the minimal true autoregressive lag order, $\check{H}D\check{H}'$ is positive definite, and the process (written in companion form as on p.\ 7 of the main paper) satisfies Assumption 2.1.

Let $\hat{\check{\Sigma}}(p)$ denote the $\check{n} \times \check{n}$ sample residual variance-covariance matrix from a least-squares VAR($p$) regression on the data $(\check{y}_1,\dots,\check{y}_T)$. Let $\lbrace g_T \rbrace_T$ be a deterministic scalar sequence. Fix a maximal lag length $K \geq p_0$. Suppose we select the lag length by minimizing an information criterion:
\[\hat{p} \equiv \argmin_{0 \leq p \leq K}\; \left\lbrace \log\det\left(\hat{\check{\Sigma}}(p)\right) + p \times g_T \right\rbrace.\]
Then the following statements hold:
\begin{enumerate}[i)]
\item \label{itm:infocrit_under} If $g_T \to 0$, then $P(\hat{p} < p_0) \to 0$ as $T \to \infty$.
\item \label{itm:infocrit_over} If $Tg_T \to \infty$ and $\zeta \geq 1/2$, then $P(\hat{p} > p_0) \to 0$ as $T \to \infty$.
\end{enumerate}
\end{prop}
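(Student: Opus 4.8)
The plan is to write $\mathrm{IC}(p) \equiv \log\det(\hat{\check{\Sigma}}(p)) + p\, g_T$ and, for each candidate $p \neq p_0$, to show that the sign of $\mathrm{IC}(p) - \mathrm{IC}(p_0)$ is pinned down with probability approaching one; a union bound over the finite set $\{0,\dots,K\}$ then delivers both statements. The common preliminary step is to establish, via a law-of-large-numbers argument under \cref{asn:mds} and stationarity, that $\hat{\check{\Sigma}}(p)$ converges in probability to $\check{\Sigma}^*(p)$, the population best-linear-VAR($p$)-predictor variance of the limiting (exact VAR($p_0$)) DGP, uniformly over $p \leq K$. Here I would lean on the auxiliary lemmas of \cref{app:proof_details}, which already supply convergence of the sample autocovariances and the central limit theorem for them (the fourth-order cumulant summability in \cref{asn:mds} is exactly what makes the sample autocovariances $\sqrt{T}$-consistent).

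For part \ref{itm:infocrit_under} (underfitting), fix $p < p_0$. Because $\check{A}_{p_0} \neq 0$ and $\check{H}D\check{H}' \succ 0$, omitting lag $p_0$ strictly inflates the limiting residual variance, so $c_p \equiv \log\det\check{\Sigma}^*(p) - \log\det\check{\Sigma}^*(p_0) > 0$. The local contamination enters only at order $T^{-\zeta}$ and hence does not affect this first-order gap. By consistency and continuity of $\log\det$, $\log\det\hat{\check{\Sigma}}(p) - \log\det\hat{\check{\Sigma}}(p_0) \to_p c_p$, while the penalty difference $(p - p_0) g_T \to 0$ since $g_T \to 0$. Thus $\mathrm{IC}(p) - \mathrm{IC}(p_0) \to_p c_p > 0$, so $P(\mathrm{IC}(p) < \mathrm{IC}(p_0)) \to 0$. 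Note this part does not require $\zeta \geq 1/2$.

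For part \ref{itm:infocrit_over} (overfitting), fix $p_0 < p \leq K$. Now the limiting residual variances coincide, $\check{\Sigma}^*(p) = \check{\Sigma}^*(p_0) = \check{H}D\check{H}'$, so the first-order term vanishes and the \emph{rate} of the reduction is what matters. The crux is the expansion
\[\log\det\hat{\check{\Sigma}}(p_0) - \log\det\hat{\check{\Sigma}}(p) = O_p(T^{-1}) + O(T^{-2\zeta}),\]
which I would obtain by a partitioned-regression (Frisch--Waugh) argument: the reduction is a quadratic form in the least-squares estimates of the incremental AR coefficients at lags $p_0+1,\dots,p$. These estimates decompose into a deterministic best-linear-predictor component of order $T^{-\zeta}$ (the finite-$T$ DGP projects the local MA contamination $T^{-\zeta}\alpha(L)\varepsilon_t$ onto these lags) plus a sampling component of order $T^{-1/2}$; squaring yields the two rates, and a Taylor expansion of $\log\det$ around $\check{H}D\check{H}'$ preserves them. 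When $\zeta \geq 1/2$ we have $T^{-2\zeta} \leq T^{-1}$, so the reduction is $O_p(T^{-1})$, whence
\[T\big[\mathrm{IC}(p) - \mathrm{IC}(p_0)\big] = (p - p_0)\, T g_T - O_p(1) \to_p +\infty,\]
using $T g_T \to \infty$. Hence $P(\mathrm{IC}(p) < \mathrm{IC}(p_0)) \to 0$.

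The main obstacle is this overfitting expansion: one must control simultaneously the $O(T^{-\zeta})$ bias that the local misspecification injects into the incremental AR coefficients and their $O_p(T^{-1/2})$ sampling error, and verify that the quadratic-form reduction inherits rate $\max(T^{-2\zeta}, T^{-1})$. This is precisely where $\zeta \geq 1/2$ enters: it guarantees that the misspecification-driven reduction $T^{-2\zeta}$ does not dominate the $T^{-1}$ sampling reduction that the condition $T g_T \to \infty$ is calibrated to beat. (For $\zeta < 1/2$ one would instead need the stronger penalty condition $T^{2\zeta} g_T \to \infty$.)
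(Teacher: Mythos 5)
Your proposal is correct, and its skeleton matches the paper's: consistency plus a strict population log-determinant gap handles underfitting (where, as you note, no restriction on $\zeta$ is needed; the paper gets this step by using \cref{thm:ytilde} to reduce sample moments to those of the exact VAR($p_0$) process and then citing the standard argument in \citet[Proposition 4.2]{Lutkepohl2005}), while for overfitting one shows the reduction in $\log\det$ is $O_p(T^{-1})$ when $\zeta \geq 1/2$, so that $Tg_T \to \infty$ dominates. Where you genuinely differ is the mechanism for the overfitting bound. You invoke the exact Frisch--Waugh identity, writing $\hat{\check{\Sigma}}(p_0) - \hat{\check{\Sigma}}(p)$ as a quadratic form in the estimated incremental coefficients on lags $p_0+1,\dots,p$, each of which is a $T^{-\zeta}$ pseudo-true projection component plus $O_p(T^{-1/2})$ noise. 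The paper instead compares every $\hat{\check{\Sigma}}(p)$ with $p \geq p_0$ to a single common reference matrix: it bounds
\[
\frac{1}{T}\sum_{t=1}^T \|\hat{u}_t - H\varepsilon_t\|^2 \leq 2\|\hat{A}-A\|^2\,\frac{1}{T}\sum_{t=1}^T\|y_{t-1}\|^2 + 2T^{-2\zeta}\|H\|^2\,\frac{1}{T}\sum_{t=1}^T\|\alpha(L)\varepsilon_t\|^2 = O_p(T^{-1}+T^{-2\zeta}),
\]
which gives $\hat{\check{\Sigma}}(p) = T^{-1}\sum_t H\varepsilon_t\varepsilon_t'H' + O_p(T^{-1})$ for all $p \geq p_0$ simultaneously, with no partitioned-regression algebra. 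The two routes rest on the same substantive ingredient---the coefficient error rate $O_p(T^{-1/2}+T^{-\zeta})$ of \cref{thm:A}, which applies to the companion-form VAR($p$) because the true autoregressive coefficients beyond lag $p_0$ are exactly zero (the misspecification sits entirely in the MA part)---so to complete your sketch you would need to supply precisely that rate for the incremental coefficients, plus $O_p(1)$ for the partialled second-moment matrix, which is the same work in different clothing. What your route buys is transparency about the $T^{-2\zeta}$-versus-$T^{-1}$ trade-off, including your correct closing remark that $\zeta < 1/2$ would require the stronger penalty condition $T^{2\zeta}g_T \to \infty$; what the paper's route buys is brevity and a bound that is uniform across all $p \geq p_0$ in one stroke.
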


The proposition implies that, when applied to a VAR in the data $\lbrace \check{y}_t \rbrace$, both the BIC ($g_T = \check{n}^2(\log T)/T$) and AIC ($g_T=2\check{n}^2/T$) select $p_0$ or more lags with high probability in large samples, and in fact the BIC selects exactly $p_0$ lags asymptotically when $\zeta \geq 1/2$.\footnote{For additional intuition, consider the critical case $\zeta=1/2$ where the moving average coefficients are of order $T^{-1/2}$. Under smoothness assumptions on the density of the shocks $\varepsilon_t$, the local-to-SVAR($p_0$) DGP is then contiguous to an exact SVAR($p_0$) DGP \citep[for formal results, see][]{Hallin1988,Hallin1989}. Since the BIC is consistent for $p_0$ in latter DGP, contiguity implies that the BIC also selects $p_0$ lags with probability approaching 1 in the former DGP.} Hence, in the latter case, it follows from Corollary 3.1 that it is pointwise asymptotically valid to report a local projection confidence interval that controls for $\hat{p}$ lags of the data, where $\hat{p}$ is selected by applying the BIC to auxiliary VAR regressions as defined above.

Unlike LP inference, VAR inference is sensitive to minor model selection errors. \citet{Leeb2005} show that VAR confidence intervals with lag length selected by BIC or AIC fail to control coverage uniformly over the VAR parameter space. The breakdown in performance happens for (a sequence of) DGPs that satisfy a VAR($\tilde{p}_0$) model where the first $p_0$ lags have large coefficients and the remaining $\tilde{p}_0-p_0$ lags have small coefficients of order $T^{-1/2}$; this implies a VARMA($p_0,\infty$) representation where the moving average coefficients are of order $T^{-1/2}$. Because the BIC or AIC cannot reliably detect the small coefficients, they will tend to select fewer than $\tilde{p}_0$ lags (cf.\ \cref{thm:infocrit}). The small coefficients on the omitted lags impart an asymptotic bias in the VAR estimator that can cause large coverage distortions for the associated confidence interval, consistent with Corollary 3.1. However, if in this context the BIC-selected lag length is instead used for \emph{local projection} inference, the bias imparted by the omitted lags is much smaller than for the VAR estimator and is in fact asymptotically negligible: this is precisely the message of Proposition 3.1. Though our arguments fall short of proving that local projection inference with data-dependent lag length is \emph{uniformly} valid over some parameter space, they nevertheless show that LP inference remains valid under the types of drifting parameter sequences that \citet{Leeb2005} show are responsible for the non-uniformity of VAR inference.

\subsubsection{Proof of \texorpdfstring{\cref{thm:infocrit}}{Proposition C.1}}

The proof follows standard arguments for exact VAR models, see \citet[Chapter 4.3.2]{Lutkepohl2005} and references therein. We merely show that extra terms induced by the vanishing moving average process are asymptotically negligible under our assumptions. Let $\|\cdot\|$ denote the Frobenius norm.

\cref{thm:ytilde} implies that $T^{-1}\sum_{t=1}^T \check{y}_t \check{y}_{t-\ell}' = T^{-1}\sum_{t=1}^T \tilde{\check{y}}_t \tilde{\check{y}}_{t-\ell}' + O_p(T^{-\zeta})$ for any $\ell \geq 0$, where $\lbrace \tilde{\check{y}}_t \rbrace$ is a process that satisfies the VAR($p_0)$ model with no moving average term ($\alpha(L)=0$), see also Corollary 3.2. It follows from least-squares algebra that the probability limit of $\hat{\check{\Sigma}}(p)$ for any fixed $p$ is the same as it would be in an exact VAR($p_0$) DGP with no moving average term. Statement (\ref{itm:infocrit_under}) of the proposition then follows immediately from standard arguments for lag length estimation in exact VAR models \citep[Proposition 4.2]{Lutkepohl2005}.

To prove statement (\ref{itm:infocrit_over}), assume $p \geq p_0$. It suffices to show that $\hat{\check{\Sigma}}(p) = \hat{\check{\Sigma}}_0 + O_p(T^{-1})$ for some data-dependent matrix $\hat{\check{\Sigma}}_0$ independent of $p$, since this implies\footnote{Actually, in order to apply the delta method to the log determinant, we also use that $\plim \hat{\check{\Sigma}}_0 = \check{H}D\check{H}'$, a matrix that is non-singular by assumption.}
\[P\left(\log\det\left(\hat{\check{\Sigma}}(p)\right) + p g_T > \log\det\left(\hat{\check{\Sigma}}(p_0)\right) + p_0 g_T\right) = P\big((p-p_0)Tg_T > O_p(1)\big) \to 1\]
when $p>p_0$. Let $y_t$ denote the $(\check{n}p)$-dimensional companion form vector obtained by stacking $p$ lags of $\check{y}_t$, as in Equation (2.2). Then $\hat{\check{\Sigma}}(p)$ is the upper left $\check{n} \times \check{n}$ block of the $(\check{n}p) \times (\check{n}p)$ matrix $\hat{\Sigma}=T^{-1}\sum_{t=1}^T \hat{u}_t\hat{u}_t'$ defined in Section 2.2. To finish the proof, we show that $T^{-1}\sum_{t=1}^T \|\hat{u}_t-H\varepsilon_t\|^2 = O_p(T^{-1})$, which by Cauchy-Schwarz implies $\hat{\Sigma} = T^{-1}\sum_{t=1}^T H\varepsilon_t\varepsilon_t'H' + O_p(T^{-1})$, as needed.

Note that when the estimation lag length $p$ weakly exceeds the true autoregressive order $p_0$, all results in our paper apply, as noted in the discussion surrounding Equation (2.2). Hence, using the definition of $u_t$ in \cref{thm:ytilde},
\begin{align*}
\frac{1}{T}\sum_{t=1}^T \|\hat{u}_t-H\varepsilon_t\|^2 &\leq \frac{2}{T}\sum_{t=1}^T \|\underbrace{\hat{u}_t-u_t}_{(A-\hat{A})y_{t-1}}\|^2 + \frac{2}{T}\sum_{t=1}^T \|\underbrace{u_t-H\varepsilon_t}_{T^{-\zeta}H\alpha(L)\varepsilon_t}\|^2 \\
&\leq 2\underbrace{\|\hat{A}-A\|^2}_{O_p((T^{-1/2}+T^{-\zeta})^2)} \underbrace{\frac{1}{T}\sum_{t=1}^T \|y_{t-1}\|^2}_{O_p(1)} + 2T^{-2\zeta} \|H\|^2\underbrace{\frac{1}{T}\sum_{t=1}^T \|\alpha(L)\varepsilon_t\|^2}_{O_p(1)} \\
&= O_p(T^{-1} + T^{-2\zeta}),
\end{align*}
where the three $O_p(\cdot)$ statements in the penultimate line rely on \cref{thm:A,thm:y_second_moment} and Assumption 2.1(v), respectively. When $\zeta \geq 1/2$, the right-hand side above is $O_p(T^{-1})$. \qed

\subsection{Inference on multiple impulse responses}
\label{app:joint}

This subsection generalizes the worst-case bias formula in Proposition 4.1 to the multi-dimensional case and derives the worst-case coverage of the Wald confidence ellipsoid.

\subsubsection{Set-up}
We consider inference on any combination of impulse responses for various horizons $h$, response variables $i^*$, and shocks $j^*$. When referring to impulse responses and estimators of these, we need to make the response variable and shock explicit in the notation. Thus, we write $\theta_{i^*,j^*,h,T}$, $\hat{\beta}_{i^*,j^*,h}$, and $\hat{\delta}_{i^*,j^*,h}$, with the definitions being the same as in Section 2. Let $k$ denote the total number of impulse responses of interest. We refer to the list of impulse responses by the collection of triples $\lbrace (i_a^*,j_a^*,h_a) \rbrace_{a=1}^k$ indexing the response variable, shock variable, and horizon, respectively. Define the $k$-dimensional vectors of true impulse responses and LP and VAR estimators:
\[\boldsymbol{\theta}_T \equiv \begin{pmatrix}
\theta_{i_1^*,j_1^*,h_1,T} \\
\vdots \\
\theta_{i_k^*,j_k^*,h_k,T}
\end{pmatrix},\quad \hat{\boldsymbol{\beta}} \equiv \begin{pmatrix}
\hat{\beta}_{i_1^*,j_1^*,h_1} \\
\vdots \\
\hat{\beta}_{i_k^*,j_k^*,h_k}
\end{pmatrix}, \quad \hat{\boldsymbol{\delta}} \equiv \begin{pmatrix}
\hat{\delta}_{i_1^*,j_1^*,h_1} \\
\vdots \\
\hat{\delta}_{i_k^*,j_k^*,h_k}
\end{pmatrix}.\]
It follows from Propositions 3.1 and 3.2 that, when $\zeta=1/2$,
\begin{equation} \label{eqn:asy_normal_joint}
\sqrt{T}\begin{pmatrix}
\hat{\boldsymbol{\beta}} - \boldsymbol{\theta}_T \\
\hat{\boldsymbol{\delta}} - \boldsymbol{\theta}_T
\end{pmatrix}
\stackrel{d}{\to} N\left( \begin{pmatrix}
0_{k \times 1} \\
\abias(\hat{\boldsymbol{\delta}})
\end{pmatrix} , \begin{pmatrix}
\avar(\hat{\boldsymbol{\beta}}) & \acov(\hat{\boldsymbol{\beta}},\hat{\boldsymbol{\delta}}) \\
\acov(\hat{\boldsymbol{\delta}},\hat{\boldsymbol{\beta}}) & \avar(\hat{\boldsymbol{\delta}})
\end{pmatrix} \right),
\end{equation}
for a $k$-dimensional vector $\abias(\hat{\boldsymbol{\delta}})$ (defined in the proof of \cref{thm:bias_general} below) and $k \times k$ matrices $\avar(\hat{\boldsymbol{\beta}})$, $\avar(\hat{\boldsymbol{\delta}})$, and $\acov(\hat{\boldsymbol{\beta}},\hat{\boldsymbol{\delta}})$ given in Corollary A.2 in Appendix A.3. This corollary also implies that the difference $\hat{\boldsymbol{\beta}}-\hat{\boldsymbol{\delta}}$ is asymptotically independent of $\hat{\boldsymbol{\delta}}$, which is not surprising given the general arguments of \citet{Hausman1978} and the facts that (i) the asymptotic variances of the estimators are the same as in the model with $\alpha(L)=0$ and (ii) the VAR estimator is the quasi-MLE in such a model. It follows that $\avar(\hat{\boldsymbol{\beta}}) \geq \avar(\hat{\boldsymbol{\delta}})$ in the positive semidefinite sense.

\subsubsection{Worst-case bias}
The following result generalizes the univariate worst-case bias formula in Proposition 4.1.

\begin{prop} \label{thm:bias_general}
Impose Assumption 2.1, with part (iii) holding for all shock indices $j_1^*,\dots,j_k^*$, and let $\zeta=1/2$. Let $R$ be a constant matrix with $k$ columns. Then
\[\max_{\alpha(L) \colon \|\alpha(L) \| \leq M} \|R\abias(\hat{\boldsymbol{\delta}})\|^2 = M^2 \lambda_{\max}\! \left( R[\avar(\hat{\boldsymbol{\beta}})-\avar(\hat{\boldsymbol{\delta}})]R' \right),\]
where $\lambda_{\max}(B)$ denotes the largest eigenvalue of the matrix $B$.
\end{prop}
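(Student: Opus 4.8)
The plan is to write $\abias(\hat{\boldsymbol{\delta}})$ as a linear image of the moving-average coefficients, reduce the constrained maximization to a Rayleigh quotient, and then identify the resulting Gram matrix with the variance difference $\avar(\hat{\boldsymbol{\beta}}) - \avar(\hat{\boldsymbol{\delta}})$. Stacking the coefficient matrices of $\alpha(L)$ into a single vector $a \equiv \ve(\alpha(L))$, normalized so that $\|a\| = \|\alpha(L)\|$, the bias derived in this proof (the multivariate analogue of the expression behind Proposition 4.1, obtained from the VAR-omission argument underlying Proposition 3.2) is linear, $\abias(\hat{\boldsymbol{\delta}}) = \Psi a$ for a constant matrix $\Psi$ not depending on $a$. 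Hence $\|R\abias(\hat{\boldsymbol{\delta}})\|^2 = a'\Psi'R'R\Psi a$, and maximizing a quadratic form over a Euclidean ball gives
\[
\max_{\|a\|\le M} a'\Psi'R'R\Psi a = M^2\,\lambda_{\max}\!\big(\Psi'R'R\Psi\big) = M^2\,\lambda_{\max}\!\big(R\Psi\Psi'R'\big),
\]
where the final step uses that $X'X$ and $XX'$ share their nonzero eigenvalues, with $X = R\Psi$. The proposition therefore reduces to the single matrix identity $\Psi\Psi' = \avar(\hat{\boldsymbol{\beta}}) - \avar(\hat{\boldsymbol{\delta}})$.

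Establishing this identity is the heart of the argument and the step I expect to be the main obstacle. The conceptual reason it holds is the doubly robust/Hausman structure already used in the set-up. Let $\phi_t^{\beta}$ and $\phi_t^{\delta}$ be the influence functions of $\hat{\boldsymbol{\beta}}$ and $\hat{\boldsymbol{\delta}}$ in the exact-VAR model $\alpha(L)=0$, and set $\psi_t \equiv \phi_t^{\beta} - \phi_t^{\delta}$, so that $\var(\psi_t) = \avar(\hat{\boldsymbol{\beta}}) - \avar(\hat{\boldsymbol{\delta}})$, the relevant identity being the Hausman orthogonality $\acov(\hat{\boldsymbol{\beta}},\hat{\boldsymbol{\delta}}) = \avar(\hat{\boldsymbol{\delta}})$. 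Because $\zeta = 1/2$, the moving-average coefficients are $O(T^{-1/2})$ and the local-to-VAR sequence is contiguous to the exact VAR; Le Cam's third lemma then gives the asymptotic mean shift of $\sqrt{T}(\hat{\boldsymbol{\beta}}-\hat{\boldsymbol{\delta}})$ under the perturbation as $\cov(\psi_t, S_t(a))$, where $S_t(a)$ is the linear-in-$a$ score of the $\alpha$-perturbation. Since this mean shift equals $-\abias(\hat{\boldsymbol{\delta}})$ while the LP bias vanishes by double robustness, we obtain $\Psi a = -\cov(\psi_t, S_t(a))$.

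It then remains to show that each component of $\psi_t$ lies in the closed span of the perturbation scores and that $a\mapsto S_t(a)$ is norm-preserving on that span relative to $\|a\| = \|\alpha(L)\|$; granting this, $\cov(\psi_t, S_t(a))$ ranges, as $S_t(a)$ fills the $M$-ball of the score subspace, over exactly the ellipsoid with Gram matrix $\var(\psi_t)$, so that the Gram matrix of the bias map coincides with $\var(\psi_t) = \avar(\hat{\boldsymbol{\beta}}) - \avar(\hat{\boldsymbol{\delta}})$. This norm-preservation step is the delicate part, since it ties the parametrization of $\alpha(L)$ to the geometry of the score space. The cleanest fully rigorous route, and the one I would write up, instead verifies $\Psi\Psi' = \avar(\hat{\boldsymbol{\beta}}) - \avar(\hat{\boldsymbol{\delta}})$ by direct substitution of the closed-form expressions for $\Psi$ and for $\avar(\hat{\boldsymbol{\beta}}) - \avar(\hat{\boldsymbol{\delta}})$ from Corollary A.2, the algebra being the matrix generalization of the scalar identity that underlies Proposition 4.1. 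Combined with the first paragraph, this identity completes the proof.
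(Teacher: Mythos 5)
Your skeleton matches the paper's proof: express $\abias(\hat{\boldsymbol{\delta}})$ as a linear map of the moving-average coefficients, turn the constrained maximization into a largest-eigenvalue problem for the Gram matrix of that map composed with $R$, and identify the Gram matrix with $\avar(\hat{\boldsymbol{\beta}})-\avar(\hat{\boldsymbol{\delta}})$. However, the two steps you treat as routine are precisely where the paper does its work, and as written both are gaps. The first is the optimization itself: $\alpha(L)=\sum_{\ell=1}^{\infty}\alpha_\ell L^\ell$ has infinitely many coefficient matrices, so your $a$ lives in a sequence (Hilbert) space, not a finite-dimensional Euclidean ball, and ``$\lambda_{\max}(\Psi'R'R\Psi)$'' is the spectrum of an operator, not a matrix. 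One must verify that the bias map is bounded (equivalently, that $\sum_{\ell}\Upsilon_\ell\Upsilon_\ell'$ converges, in the paper's notation), that the supremum over the ball is attained, and that it equals the largest eigenvalue of the finite-dimensional matrix $\sum_\ell R\Upsilon_\ell\Upsilon_\ell'R'$; the paper proves exactly this in \cref{thm:bias_general_aux1}, by exhibiting an explicit maximizer and establishing the upper bound via truncation, a Cauchy--Schwarz tail bound, and continuity of eigenvalues. This gap is fixable but not cosmetic. A related subtlety is hidden in your phrase ``normalized so that $\|a\|=\|\alpha(L)\|$'': the bias is linear in the rescaled coefficients $\tilde{\alpha}_\ell = D^{-1/2}\alpha_\ell D^{1/2}$ (vectorized as $\ve(\tilde{\alpha}_\ell')$), and it is only in these coordinates that the constraint set is a Euclidean ball \emph{and} the Gram matrix comes out as the variance gap.

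The more serious gap is the identity $\Psi\Psi'=\avar(\hat{\boldsymbol{\beta}})-\avar(\hat{\boldsymbol{\delta}})$, which you correctly call the heart of the argument but never establish. Your Le Cam route is heuristic: contiguity and the third lemma require LAN-type smoothness conditions on the shock density that are not part of Assumption 2.1 (the paper invokes contiguity only informally, in a footnote, citing results that need such density assumptions), and you explicitly leave open the norm-preservation/score-span step, which is exactly where an argument of this kind would succeed or fail. Your fallback---direct substitution of the closed forms from Proposition 3.2 and Corollary A.2---is indeed the paper's route (\cref{thm:bias_general_aux2}), but it is a genuine computation rather than an assertion: one expands $\sum_\ell \tr(\Xi_{i_a^*,j_a^*,h_a,\ell}'\Xi_{i_b^*,j_b^*,h_b,\ell})$ into four sums, collapses the first using the Lyapunov-type identity $\sum_{\ell\geq 1}A^{\ell-1}\Sigma(A')^{\ell-1}=S$, recognizes $\sum_{\ell=1}^{h}A^{\ell-1}H_{\bullet,j^*}e_{i^*,n}'A^{h-\ell}=\Psi_{i^*,j^*,h}$ in the two cross terms (each of which then cancels against the first term), and matches what remains to the covariance formulas of Corollary A.2. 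Since your write-up defers exactly this algebra, what you have proved is the reduction of the proposition to that identity, not the proposition itself.
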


The proposition shows that the worst-case squared norm of the bias of the VAR estimator $R\hat{\boldsymbol{\delta}}$ of $R\boldsymbol{\theta}_T$ is a function of two simple quantities: the bound $M$ on misspecification, and the largest eigenvalue of the difference $\avar(R\hat{\boldsymbol{\beta}})-\avar(R\hat{\boldsymbol{\delta}})$ between the variance-covariance matrices for the LP and VAR estimators. The latter eigenvalue equals $\max_{\|\varsigma\|=1} \lbrace  \avar(\varsigma'R\hat{\boldsymbol{\beta}})-\avar(\varsigma'R\hat{\boldsymbol{\delta}}) \rbrace$, i.e., the \emph{largest} efficiency gain for VAR over LP across all linear combinations (with norm 1) of the estimated parameters. Consequently, the worst-case bias is non-negligible if the VAR offers efficiency gains for \emph{any} linear combination of the parameters of interest, echoing our univariate results. When $R$ is a row vector, then the proposition implies that our conclusions from Section 4.2 extend to inference on any \emph{linear combination} of impulse responses.

\subsubsection{Worst-case coverage of confidence ellipsoid}
We next derive the coverage of the conventional Wald confidence ellipsoid based on the VAR estimator. The  level-($1-a$) confidence ellipsoid is given by
\[\ce(\hat{\boldsymbol{\delta}}) \equiv \left\lbrace \tilde{\boldsymbol{\theta}} \in \mathbb{R}^k \colon T(\hat{\boldsymbol{\delta}}-\tilde{\boldsymbol{\theta}})'\avar(\hat{\boldsymbol{\delta}})^{-1}(\hat{\boldsymbol{\delta}}-\tilde{\boldsymbol{\theta}}) \leq \chi_{1-a,k}^2 \right\rbrace,\]
where $\chi_{1-a,k}^2$ is the $1-a$ quantile of the $\chi^2$ distribution with $k$ degrees of freedom.

\begin{cor} \label{thm:coverage_joint}
Impose Assumption 2.1, with part (iii) holding for all shock indices $j_1^*,\dots,j_k^*$, and let $\zeta=1/2$. Assume also that $\avar(\hat{\boldsymbol{\delta}})$ is non-singular. Then
\[\min_{\alpha(L) \colon \|\alpha(L) \| \leq M} \lim_{T \to \infty} P(\boldsymbol{\theta}_T \in \ce(\hat{\boldsymbol{\delta}})) = F_k\left(\chi_{1-a,k}^2; M^2 \left[\lambda_{\max}(\avar(\hat{\boldsymbol{\beta}})\avar(\hat{\boldsymbol{\delta}})^{-1})-1\right] \right),\]
where $F_k(x;c)$ is the cumulative distribution function, evaluated at point $x \geq 0$, of a non-central $\chi^2$ distribution with $k$ degrees of freedom and non-centrality parameter $c \geq 0$.
\end{cor}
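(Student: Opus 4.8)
The plan is to fix the local-misspecification polynomial $\alpha(L)$, compute the limiting coverage via the continuous mapping theorem, and then minimize over $\alpha(L)$ by exploiting monotonicity of the noncentral $\chi^2$ distribution. Throughout I write $V_\beta \equiv \avar(\hat{\boldsymbol{\beta}})$, $V_\delta \equiv \avar(\hat{\boldsymbol{\delta}})$, and $b \equiv \abias(\hat{\boldsymbol{\delta}})$ (which depends on $\alpha(L)$). By definition, $P(\boldsymbol{\theta}_T \in \ce(\hat{\boldsymbol{\delta}})) = P(W_T \le \chi_{1-a,k}^2)$ where $W_T \equiv T(\hat{\boldsymbol{\delta}} - \boldsymbol{\theta}_T)' V_\delta^{-1}(\hat{\boldsymbol{\delta}} - \boldsymbol{\theta}_T)$ and $V_\delta$ is treated as the known population asymptotic variance entering the ellipsoid. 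The marginal of \eqref{eqn:asy_normal_joint} gives $\sqrt{T}(\hat{\boldsymbol{\delta}} - \boldsymbol{\theta}_T) \stackrel{d}{\to} N(b, V_\delta)$, so (using non-singularity of $V_\delta$) $V_\delta^{-1/2}\sqrt{T}(\hat{\boldsymbol{\delta}} - \boldsymbol{\theta}_T) \stackrel{d}{\to} N(V_\delta^{-1/2} b, I_k)$. The continuous mapping theorem then yields $W_T \stackrel{d}{\to} \chi_k^2(c)$, a noncentral $\chi^2$ with $k$ degrees of freedom and noncentrality $c \equiv b' V_\delta^{-1} b = \|V_\delta^{-1/2} b\|^2$, so that $\lim_{T\to\infty} P(\boldsymbol{\theta}_T \in \ce(\hat{\boldsymbol{\delta}})) = F_k(\chi_{1-a,k}^2; c)$.

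Next I reduce the minimization to maximizing the noncentrality. For fixed $x>0$ the map $c \mapsto F_k(x;c)$ is strictly decreasing, since a noncentral $\chi_k^2$ is stochastically increasing in its noncentrality parameter. Because $\chi_{1-a,k}^2 > 0$ and the limiting coverage depends on $\alpha(L)$ only through $c = c(\alpha(L))$, minimizing coverage over $\{\alpha(L)\colon \|\alpha(L)\| \le M\}$ is equivalent to maximizing $c(\alpha(L))$ over the same set:
\[\min_{\|\alpha(L)\| \le M} \lim_{T\to\infty} P(\boldsymbol{\theta}_T \in \ce(\hat{\boldsymbol{\delta}})) = F_k\!\left(\chi_{1-a,k}^2;\; \max_{\|\alpha(L)\| \le M} c(\alpha(L))\right).\]

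To evaluate the inner maximum, write $c = \|Rb\|^2$ with $R \equiv V_\delta^{-1/2}$, a constant matrix with $k$ columns. Then \cref{thm:bias_general} applies directly and gives $\max_{\|\alpha(L)\| \le M} c = M^2 \lambda_{\max}(R[V_\beta - V_\delta]R') = M^2 \lambda_{\max}(V_\delta^{-1/2} V_\beta V_\delta^{-1/2} - I_k)$. Since $V_\delta^{-1/2} V_\beta V_\delta^{-1/2}$ is similar to $V_\beta V_\delta^{-1}$ (conjugating by $V_\delta^{1/2}$), the two share eigenvalues, so $\lambda_{\max}(V_\delta^{-1/2} V_\beta V_\delta^{-1/2} - I_k) = \lambda_{\max}(V_\beta V_\delta^{-1}) - 1$. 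Substituting back delivers the stated noncentrality $M^2[\lambda_{\max}(\avar(\hat{\boldsymbol{\beta}})\avar(\hat{\boldsymbol{\delta}})^{-1}) - 1]$, completing the argument.

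The only genuinely delicate point is the interchange in the second step: I must ensure the inner maximum of $c(\alpha(L))$ is actually attained over the constraint set, so that the minimal coverage is achieved rather than merely approached, and that $F_k(\chi_{1-a,k}^2;\cdot)$ is evaluated at this attained maximum. This is exactly what \cref{thm:bias_general} guarantees, since it delivers the maximum in closed form over the same set $\|\alpha(L)\| \le M$; the continuity and strict monotonicity of $F_k$ in its noncentrality argument then let me pass the minimization through the CDF. The remaining steps—the continuous-mapping limit and the eigenvalue simplification—are routine.
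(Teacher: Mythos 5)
Your proof is correct and takes essentially the same route as the paper's: both reduce the minimal coverage to the maximal non-centrality parameter $\abias(\hat{\boldsymbol{\delta}})'\avar(\hat{\boldsymbol{\delta}})^{-1}\abias(\hat{\boldsymbol{\delta}})$ and then apply \cref{thm:bias_general} with $R=\avar(\hat{\boldsymbol{\delta}})^{-1/2}$, finishing with the identical similarity argument giving $\lambda_{\max}\!\left(\avar(\hat{\boldsymbol{\beta}})\avar(\hat{\boldsymbol{\delta}})^{-1}\right)-1$. The only difference is that you spell out the steps the paper treats as immediate from \eqref{eqn:asy_normal_joint}---the continuous-mapping limit of the Wald statistic and the strict monotonicity of $F_k$ in its non-centrality argument---which is a sound elaboration rather than a departure.
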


The worst-case coverage probability of the VAR confidence ellipsoid depends on three scalars: the bound $M$ on misspecification, the dimension $k$ of the ellipsoid, and the ``multivariate relative standard error''
\[\sqrt{\lambda_{\min}(\avar(\hat{\boldsymbol{\delta}})\avar(\hat{\boldsymbol{\beta}})^{-1})} = [\lambda_{\max}(\avar(\hat{\boldsymbol{\beta}})\avar(\hat{\boldsymbol{\delta}})^{-1})]^{-1/2} = \min_{\varsigma \in \mathbb{R}^k} \sqrt{\avar(\varsigma'\hat{\boldsymbol{\delta}})/\avar(\varsigma'\hat{\boldsymbol{\beta}})}.\]
Again, the worst-case coverage distortion is an increasing function of the \emph{largest} efficiency gain for VAR over LP across \emph{all} linear combinations of the impulse responses. Since VAR impulse response estimates are often highly correlated across horizons, this suggests that the VAR undercoverage can in fact be particularly severe in the multivariate case. Numerical calculations (available upon request from the authors) show that the coverage distortions can be severe even when $M=1$, regardless of the dimension $k$.

%


\subsubsection{Proof of \texorpdfstring{\cref{thm:bias_general}}{Proposition C.2}}
Define $\tilde{\alpha}_\ell = D^{-1/2}\alpha_\ell D^{1/2}$ for all $\ell \geq 1$. Notice that $\|\alpha(L)\|^2 = \sum_{\ell=1}^\infty \|\tilde{\alpha}_\ell\|^2$.

By Proposition 3.2, we have $\abias(\hat{\delta}_{i^*,j^*,h}) = \sum_{\ell=1}^\infty \tr(\Xi_{i^*,j^*,h,\ell}\tilde{\alpha}_\ell)$, where
\[\Xi_{i^*,j^*,h,\ell} \equiv D^{1/2}H'(A')^{\ell-1}S^{-1}\Psi_{i^*,j^*,h}HD^{1/2} - \mathbbm{1}(\ell \leq h)D^{-1/2}e_{j^*,m}e_{i^*,n}'A^{h-\ell}HD^{1/2}.\]
Since $\tr(\Xi_{i^*,j^*,h,\ell}\tilde{\alpha}_\ell) = \ve(\Xi_{i^*,j^*,h,\ell})'\ve(\tilde{\alpha}_\ell')$, we can write
\[\abias(\hat{\boldsymbol{\delta}}) = \sum_{\ell=1}^\infty \Upsilon_\ell \ve(\tilde{\alpha}_\ell'),\]
where
\[\Upsilon_{\ell} \equiv \Big(\ve(\Xi_{i_1^*,j_1^*,h_1,\ell}),\dots,\ve(\Xi_{i_k^*,j_k^*,h_k,\ell})\Big)'
\in \mathbb{R}^{k \times m^2}.\]
Hence,
\[\max_{\alpha(L) \colon \|\alpha(L)\| \leq M} \|R\abias(\hat{\boldsymbol{\delta}})\|^2 = \max_{\lbrace \tilde{\alpha}_\ell \rbrace_{\ell=1}^\infty \colon \sum_{\ell=1}^\infty \|\tilde{\alpha}_\ell'\|^2 \leq M^2} \left\|\sum_{\ell=1}^\infty R \Upsilon_\ell \ve(\tilde{\alpha}_\ell')\right\|^2.\]
\cref{thm:bias_general_aux1} below shows that the final expression above equals $M^2\lambda_{\max}(\sum_{\ell=1}^\infty R\Upsilon_\ell \Upsilon_\ell' R')$ (the lemma only explicitly considers the case $M=1$, but the general case then follows from the homogeneity of degree 1 of the norm). Finally, \cref{thm:bias_general_aux2} below shows that $\sum_{\ell=1}^\infty \Upsilon_\ell \Upsilon_\ell' = \avar(\hat{\boldsymbol{\beta}})-\avar(\hat{\boldsymbol{\delta}})$. This completes the proof of the proposition. The proof of \cref{thm:bias_general_aux1} shows that the maximum above is achieved when $\ve(\tilde{\alpha}_\ell') \propto \Upsilon_\ell' v$ (with the constant of proportionality being independent of $\ell$ and chosen to satisfy the norm constraint), where $v$ is the eigenvector corresponding to the largest eigenvalue of $R[\avar(\hat{\boldsymbol{\beta}})-\avar(\hat{\boldsymbol{\delta}})]R'$. In the univariate case $k=1$, this reduces to expression (4.1) in Section 4.2.
\qed

\begin{lem} \label{thm:bias_general_aux1}
Let $\mathcal{X}$ denote the set of sequences $\lbrace x_\ell \rbrace_{\ell=1}^\infty$ of $m \times m$ matrices $x_\ell$ satisfying $\sum_{\ell=1}^\infty \|x_\ell\|^2 \leq 1$. Let $\lbrace L_\ell \rbrace_{\ell=1}^\infty$ be a sequence of $r \times m^2$ matrices $L_\ell$ satisfying $\sum_{\ell=1}^\infty \|L_\ell\|^2<\infty$. Then
\begin{equation} \label{eqn:bias_general_aux1}
\max_{\lbrace x_\ell \rbrace_{\ell=1}^\infty \in \mathcal{X}} \left\|\sum_{\ell=1}^\infty L_\ell \ve(x_\ell) \right\|^2 = \lambda_{\max}\!\left(\sum_{\ell=1}^\infty L_\ell L_\ell' \right).
\end{equation}
\end{lem}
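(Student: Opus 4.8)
The plan is to recognize this as the variational characterization of the largest eigenvalue of the $r \times r$ positive semidefinite matrix $S \equiv \sum_{\ell=1}^\infty L_\ell L_\ell'$, which is well-defined because $\|L_\ell L_\ell'\| \leq \|L_\ell\|^2$ and $\sum_{\ell=1}^\infty \|L_\ell\|^2 < \infty$. First I would vectorize the problem: setting $v_\ell \equiv \ve(x_\ell) \in \mathbb{R}^{m^2}$, the Frobenius-norm constraint $\sum_\ell \|x_\ell\|^2 \leq 1$ becomes the Euclidean constraint $\sum_\ell \|v_\ell\|^2 \leq 1$, and the objective becomes $\|\sum_\ell L_\ell v_\ell\|^2$. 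Before optimizing, I would note that for any admissible sequence the series $w \equiv \sum_\ell L_\ell v_\ell$ converges absolutely, since by Cauchy--Schwarz $\sum_\ell \|L_\ell v_\ell\| \leq (\sum_\ell \|L_\ell\|^2)^{1/2}(\sum_\ell \|v_\ell\|^2)^{1/2} < \infty$, so the objective is well-defined throughout $\mathcal{X}$.

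For the upper bound, I would fix an admissible $\{v_\ell\}$ and project $w$ onto the unit vector $u = w/\|w\|$ (the bound being trivial when $w=0$), then apply Cauchy--Schwarz over the \emph{combined} index ranging over both $\ell$ and the $r$ vector components:
\[\|w\| = u'w = \sum_{\ell=1}^\infty (L_\ell'u)'v_\ell \leq \Big(\sum_{\ell=1}^\infty \|L_\ell'u\|^2\Big)^{1/2}\Big(\sum_{\ell=1}^\infty \|v_\ell\|^2\Big)^{1/2} \leq (u'Su)^{1/2} \leq \lambda_{\max}(S)^{1/2},\]
where I have used $\sum_\ell \|L_\ell'u\|^2 = u'(\sum_\ell L_\ell L_\ell')u = u'Su$ together with the constraint. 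Squaring yields $\|w\|^2 \leq \lambda_{\max}(S)$ uniformly over $\mathcal{X}$.

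For achievability, I would take a unit eigenvector $v$ of $S$ associated with $\lambda_{\max}(S)$ and, assuming $\lambda_{\max}(S)>0$, set $v_\ell = \lambda_{\max}(S)^{-1/2}L_\ell'v$, with the corresponding $x_\ell = \ve^{-1}(v_\ell)$ well-defined because $\ve$ is a bijection. This sequence meets the constraint with equality, as $\sum_\ell \|v_\ell\|^2 = \lambda_{\max}(S)^{-1}v'Sv = 1$, and it produces $\sum_\ell L_\ell v_\ell = \lambda_{\max}(S)^{-1/2}Sv = \lambda_{\max}(S)^{1/2}v$, whose squared norm is exactly $\lambda_{\max}(S)$. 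The degenerate case $\lambda_{\max}(S)=0$ forces $S=0$ and hence $L_\ell=0$ for all $\ell$, so both sides of \eqref{eqn:bias_general_aux1} vanish.

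I do not expect a serious obstacle here; the only genuinely subtle point is that Cauchy--Schwarz must be applied across the \emph{joint} $(\ell,\text{component})$ index rather than term-by-term in $\ell$, since a per-$\ell$ application would decouple the summands and fail to deliver the tight constant $\lambda_{\max}(S)$. The remaining care is purely bookkeeping---verifying absolute convergence of the infinite series so that the interchange of summation and inner product, and the identity $\sum_\ell \|L_\ell'u\|^2 = u'Su$, are legitimate---which is guaranteed by the summability hypotheses $\sum_\ell \|L_\ell\|^2 < \infty$ and $\sum_\ell \|v_\ell\|^2 \leq 1$.
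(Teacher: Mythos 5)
Your proof is correct, and your attainability argument---taking $\ve(x_\ell^*) = \lambda_{\max}(S)^{-1/2} L_\ell' v$ for a top unit eigenvector $v$ of $S \equiv \sum_{\ell=1}^\infty L_\ell L_\ell'$, plus the observation that $\lambda_{\max}(S)=0$ forces every $L_\ell$ to vanish---is the same as the paper's. Where you genuinely depart is the upper bound. The paper proceeds by truncation: it splits $\sum_{\ell=1}^\infty L_\ell \ve(x_\ell)$ into the first $K$ terms and a tail, bounds the tail by $\bigl(\sum_{\ell>K}\|L_\ell\|^2\bigr)^{1/2}$ via Cauchy--Schwarz, evaluates the head exactly using the finite-dimensional identities $\max_{\|x\|\leq 1}\|(L_1 \; \cdots \; L_K)x\|^2 = \lambda_{\max}\bigl((L_1 \; \cdots \; L_K)(L_1 \; \cdots \; L_K)'\bigr) = \lambda_{\max}\bigl(\sum_{\ell=1}^K L_\ell L_\ell'\bigr)$, and then sends $K\to\infty$, invoking continuity of eigenvalues. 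You instead give a one-shot duality argument: with $w \equiv \sum_\ell L_\ell \ve(x_\ell)$ and $u = w/\|w\|$, Cauchy--Schwarz over the joint index $(\ell,\text{component})$ yields $\|w\| = u'w \leq (u'Su)^{1/2} \leq \lambda_{\max}(S)^{1/2}$ directly. This is essentially the ``abstract functional analysis'' proof the authors say is available upon request---the operator-norm computation on $\ell^2$---written out in elementary terms; it buys brevity and dispenses with both the truncation bookkeeping and the eigenvalue-continuity limit, at the cost of needing the infinite-series interchanges that you correctly justify by absolute convergence. The paper's route, by contrast, confines all optimization to finite dimensions, where the eigenvalue characterization of a quadratic-form maximum is a textbook fact, and handles the infinite tail only through a crude bound that vanishes in the limit.
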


\begin{proof}
A short proof using abstract functional analysis is available upon request from the authors. Below we provide a more elementary proof.

The statement of the lemma is obvious if $\sum_{\ell=1}^\infty \|L_\ell\|^2=0$, in which case both sides of the above display equal 0. Hence, we may assume that the series $V \equiv \sum_{\ell=1}^\infty L_\ell L_\ell'$ converges to a non-zero matrix. Let $v$ be the unit-length eigenvector corresponding to the largest eigenvalue $\lambda \equiv \lambda_{\max}(V) \in (0,\infty)$ of $V$.

The purported maximum \eqref{eqn:bias_general_aux1} is achieved by the sequence $\lbrace x_\ell^* \rbrace$ given by $\ve(x_\ell^*) = \lambda^{-1/2} L_\ell'v$:
\[\left\|\sum_{\ell=1}^\infty L_\ell \ve(x_\ell^*) \right\|^2 = \left\|\lambda^{-1/2}\sum_{\ell=1}^\infty L_\ell L_\ell'v \right\|^2 = \lambda^{-1}\left\|Vv \right\|^2 = \lambda^{-1}\left\|\lambda v \right\|^2 = \lambda\|v\|^2 = \lambda,\]
and
\[\sum_{\ell=1}^\infty \|x_\ell^*\|^2 = \sum_{\ell=1}^\infty \ve(x_\ell^*)'\ve(x_\ell^*) = \lambda^{-1}v'\sum_{\ell=1}^\infty L_\ell L_\ell'v = \lambda^{-1}v'Vv = \lambda^{-1}\lambda = 1.\]
We complete the proof by showing that the left-hand side of \eqref{eqn:bias_general_aux1} is bounded above by the right-hand side. Let $K$ be an arbitrary positive integer. Then
\[\max_{\lbrace x_\ell \rbrace_{\ell=1}^\infty \in \mathcal{X}} \left\|\sum_{\ell=1}^\infty L_\ell \ve(x_\ell) \right\| \leq \max_{\lbrace x_\ell \rbrace_{\ell=1}^\infty \in \mathcal{X}} \left\|\sum_{\ell=1}^K L_\ell \ve(x_\ell) \right\| + \max_{\lbrace x_\ell \rbrace_{\ell=1}^\infty \in \mathcal{X}} \left\|\sum_{\ell=K+1}^\infty L_\ell \ve(x_\ell) \right\|.\]
The second term on the right-hand side is bounded above by $(\sum_{\ell=K+1}^\infty \|L_\ell\|^2)^{1/2}$ by Cauchy-Schwarz. As for the first term, standard results for the eigenvalues of finite-dimensional matrices yield
\begin{align*}
&\max_{\lbrace x_\ell \rbrace_{\ell=1}^\infty \in \mathcal{X}} \left\|\sum_{\ell=1}^K L_\ell \ve(x_\ell) \right\|^2 = \max_{x \in \mathbb{R}^{Km^2}\colon \|x\| \leq 1} \left\| \begin{pmatrix}
L_1 & L_2 & \cdots & L_K
\end{pmatrix}x \right\|^2 \\
&= \lambda_{\max}\!\left(\begin{pmatrix}
L_1 & \cdots & L_K
\end{pmatrix}'\begin{pmatrix}
L_1 & \cdots & L_K
\end{pmatrix} \right) = \lambda_{\max}\!\left(\begin{pmatrix}
L_1 & \cdots & L_K
\end{pmatrix} \begin{pmatrix}
L_1 & \cdots & L_K
\end{pmatrix}' \right) = \lambda_{\max}\!\left(\sum_{\ell=1}^K L_\ell L_\ell'\right).
\end{align*}
We have shown
\[\max_{\lbrace x_\ell \rbrace_{\ell=1}^\infty \in \mathcal{X}} \left\|\sum_{\ell=1}^\infty L_\ell \ve(x_\ell) \right\| \leq \left(\lambda_{\max}\left(\sum_{\ell=1}^K L_\ell L_\ell'\right)\right)^{1/2} + \left(\sum_{\ell=K+1}^\infty \|L_\ell\|^2\right)^{1/2}.\]
Now let $K \to \infty$. Since $\sum_{\ell=1}^\infty L_\ell L_\ell'$ is a convergent series, the first term on the right-hand side above converges to $\lambda^{1/2}$ by continuity of eigenvalues, while the second term converges to 0. This establishes the required bound.
\end{proof}

\begin{lem} \label{thm:bias_general_aux2}
Under the assumptions of \cref{thm:bias_general}, and using the notation in the proof of that proposition, we have
\[\sum_{\ell=1}^\infty \Upsilon_\ell \Upsilon_\ell' = \avar(\hat{\boldsymbol{\beta}})-\avar(\hat{\boldsymbol{\delta}}).\]
\end{lem}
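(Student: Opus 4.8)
The plan is to reduce the claimed $k \times k$ matrix identity to a scalar bilinear identity for each entry, and then to recognize that bilinear form as the asymptotic covariance of the difference estimator $\hat{\boldsymbol{\beta}} - \hat{\boldsymbol{\delta}}$. Writing $a = (i_a^*,j_a^*,h_a)$ for the $a$-th index triple and using $\ve(X)'\ve(Y) = \tr(X'Y)$, the $(a,b)$ entry of the left-hand side is $\sum_{\ell=1}^\infty \tr(\Xi_{i_a^*,j_a^*,h_a,\ell}' \Xi_{i_b^*,j_b^*,h_b,\ell})$, while the $(a,b)$ entry of the right-hand side is $\acov(\hat{\beta}_a,\hat{\beta}_b) - \acov(\hat{\delta}_a,\hat{\delta}_b)$. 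So it suffices to establish, for every pair $(a,b)$, that $\sum_{\ell=1}^\infty \tr(\Xi_{a,\ell}'\Xi_{b,\ell}) = \acov(\hat{\beta}_a,\hat{\beta}_b) - \acov(\hat{\delta}_a,\hat{\delta}_b)$.

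First I would invoke the orthogonality already recorded in the set-up: because $\hat{\boldsymbol{\beta}} - \hat{\boldsymbol{\delta}}$ is asymptotically independent of the efficient VAR estimator $\hat{\boldsymbol{\delta}}$---the \citet{Hausman1978} structure, valid here since the asymptotic variances coincide with those of the $\alpha(L)=0$ model and $\hat{\boldsymbol{\delta}}$ is the quasi-MLE in that model---the right-hand side collapses to $\acov(\hat{\beta}_a - \hat{\delta}_a, \hat{\beta}_b - \hat{\delta}_b)$. This turns the lemma into the single statement $\sum_\ell \Upsilon_\ell \Upsilon_\ell' = \avar(\hat{\boldsymbol{\beta}} - \hat{\boldsymbol{\delta}})$. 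The next step is to extract, from the proofs of Propositions 3.1 and 3.2, an asymptotically linear representation of the difference, which I expect to take the form $\sqrt{T}(\hat{\beta}_a - \hat{\delta}_a) = T^{-1/2}\sum_t \sum_{\ell \geq 1} \tr(\Xi_{a,\ell}' \xi_{t,\ell}) + o_p(1)$ up to the bias term, where $\xi_{t,\ell} \equiv D^{-1/2}\varepsilon_t \varepsilon_{t-\ell}' D^{-1/2}$ are the standardized lag-$\ell$ shock cross-products. The point is that the \emph{same} matrices $\Xi_{a,\ell}$ that govern the VAR bias in Proposition 3.2 reappear as the loadings of the difference on these cross-products: this is the local-asymptotic-normality duality whereby the moving-average local parameter $\alpha_\ell$ and the score direction carried by $\xi_{t,\ell}$ are conjugate, so the sensitivity of the bias to $\alpha_\ell$ and the loading of the sampling noise on $\xi_{t,\ell}$ are one and the same object.

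Granting this representation, the required covariance is $\acov(\hat{\beta}_a - \hat{\delta}_a, \hat{\beta}_b - \hat{\delta}_b) = \sum_{\ell,\tau} \tr(\Xi_{a,\ell}' \Omega_{\ell,\tau}[\Xi_{b,\tau}])$, where $\Omega_{\ell,\tau}$ is the long-run covariance of the cross-product sequence. Each $\lbrace \xi_{t,\ell} \rbrace_t$ is a martingale difference sequence under \cref{asn:mds}, since $E[\xi_{t,\ell} \mid \mathcal{F}_{t-1}] = 0$ for $\ell \geq 1$, and the fourth-order cumulant summability in \cref{asn:mds} guarantees the long-run covariance is finite and equals the contemporaneous one. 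Collapsing the double sum to the diagonal $\sum_\ell \tr(\Xi_{a,\ell}'\Xi_{b,\ell})$ then finishes the proof.

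The main obstacle is exactly this last collapse. Under conditional heteroskedasticity the contemporaneous second moment $E[\ve(\xi_{t,\ell})\ve(\xi_{t,\ell})'] = E[w_{t-\ell}w_{t-\ell}' \otimes E(w_t w_t' \mid \mathcal{F}_{t-1})]$, with $w_t \equiv D^{-1/2}\varepsilon_t$, need not equal $I_{m^2}$, so $\Omega_{\ell,\tau}$ is not literally $\mathbbm{1}(\ell=\tau)I$; the fourth-moment terms it contains must be shown to cancel in the \emph{difference} $\hat{\boldsymbol{\beta}} - \hat{\boldsymbol{\delta}}$ (equivalently, to be common to $\avar(\hat{\boldsymbol{\beta}})$ and $\avar(\hat{\boldsymbol{\delta}})$). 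Verifying this cancellation, and checking that the cross-$\ell$ covariances ($\tau \neq \ell$) likewise drop out, is where \cref{asn:mds} does its real work and is the delicate part. A more pedestrian but self-contained alternative, which sidesteps the representation and the cancellation, is to substitute the explicit variance formulas from Corollary A.2 directly, expand $\Xi_{a,\ell}$ into the four cross-products of its VAR and LP pieces, and sum each geometric series in $A$ over $\ell$---the VAR--VAR block summing to the stationary variance via the Lyapunov identity $\sum_{j \geq 0} A^j H D H' (A')^j = S$, and the mixed and LP--LP blocks giving the corresponding partial sums---then match term by term; the obstacle there is purely the bookkeeping of aligning the four summed blocks with the entries of $\avar(\hat{\boldsymbol{\beta}}) - \avar(\hat{\boldsymbol{\delta}})$.
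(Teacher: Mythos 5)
Your fallback (``pedestrian'') route is exactly the paper's proof; your primary route, as written, has genuine gaps. Concretely: (i) the asymptotically linear representation $\sqrt{T}(\hat{\beta}_a-\hat{\delta}_a) = T^{-1/2}\sum_t\sum_{\ell\geq 1}\tr(\Xi_{i_a^*,j_a^*,h_a,\ell}'\,\xi_{t,\ell}) + o_p(1)$ is conjectured (``I expect\dots'') rather than derived---extracting it from the proofs of Propositions 3.1 and 3.2 is the real technical content here, and without it the Hausman reduction $\acov(\hat{\beta}_a,\hat{\beta}_b)-\acov(\hat{\delta}_a,\hat{\delta}_b)=\acov(\hat{\beta}_a-\hat{\delta}_a,\hat{\beta}_b-\hat{\delta}_b)$ leaves you with nothing concrete to compute; (ii) the covariance collapse is left unverified. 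On (ii) you have also misread the hypotheses: \cref{thm:bias_general}, and hence \cref{thm:bias_general_aux2}, imposes Assumption 2.1---i.i.d.\ shocks---not \cref{asn:mds}. Under i.i.d.\ shocks, $\ve(\xi_{t,\ell})=w_{t-\ell}\otimes w_t$ has covariance exactly $I_{m^2}$, with zero correlation across $\ell\neq\tau$ and across $t$ (no fourth moments enter when $\ell,\tau\geq 1$), so the ``delicate'' heteroskedasticity cancellation you flag is a non-issue; but since the representation itself is unproven, the route remains incomplete either way.

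The paper runs precisely your alternative: write the $(a,b)$ entry of the left-hand side as $\sum_{\ell}\tr(\Xi_{i_a^*,j_a^*,h_a,\ell}'\Xi_{i_b^*,j_b^*,h_b,\ell})$, expand into four blocks, and match against Corollary A.2. The execution confirms your guess that only bookkeeping remains, and it turns on two identities: the VAR--VAR block collapses via $\sum_{\ell=1}^\infty A^{\ell-1}\Sigma(A')^{\ell-1}=S$ (your Lyapunov identity, since $\Sigma=HDH'$), and each mixed block collapses via the reindexing $\sum_{\ell=1}^{h} A^{\ell-1}H_{\bullet,j^*}e_{i^*,n}'A^{h-\ell}=\sum_{\ell=1}^{h} A^{h-\ell}H_{\bullet,j^*}e_{i^*,n}'A^{\ell-1}=\Psi_{i^*,j^*,h}$, so that both mixed blocks equal $-\tr(\Psi_{i_b^*,j_b^*,h_b}\Sigma\Psi_{i_a^*,j_a^*,h_a}'S^{-1})$; the LP--LP block is the finite sum $\mathbbm{1}(j_a^*=j_b^*)\sigma_{j_a^*}^{-2}\sum_{\ell=1}^{\min\lbrace h_a,h_b\rbrace}e_{i_b^*,n}'A^{h_b-\ell}\Sigma(A')^{h_a-\ell}e_{i_a^*,n}$, which requires no summation at all. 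The resulting expression matches $\acov(\hat{\beta}_a,\hat{\beta}_b)-\acov(\hat{\delta}_a,\hat{\delta}_b)$ term by term via Corollary A.2. So: had you committed to and executed your alternative, you would have reproduced the paper's argument; as submitted, your main line of attack rests on a plausible but unestablished duality heuristic, and that is where the proof is missing.
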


\begin{proof}
By definition of $\Upsilon_\ell$, it suffices to show that, for any indices $a,b \in \lbrace 1,\dots, k\rbrace$,
\begin{equation} \label{eqn:bias_general_aux2}
\sum_{\ell=1}^\infty \ve(\Xi_{i_a^*,j_a^*,h_a,\ell})'\ve(\Xi_{i_b^*,j_b^*,h_b,\ell}) = \acov(\hat{\beta}_{i_a^*,j_a^*,h_a},\hat{\beta}_{i_b^*,j_b^*,h_b}) - \acov(\hat{\delta}_{i_a^*,j_a^*,h_a},\hat{\delta}_{i_b^*,j_b^*,h_b}).
\end{equation}
Multiplying out terms, we find that the left-hand side above equals
\begin{align*}
\sum_{\ell=1}^\infty \tr(\Xi_{i_a^*,j_a^*,h_a,\ell}' \Xi_{i_b^*,j_b^*,h_b,\ell}) &= \sum_{\ell=1}^\infty \tr\left(A^{\ell-1}\Sigma(A')^{\ell-1}S^{-1}\Psi_{i_b^*,j_b^*,h_b}\Sigma \Psi_{i_a^*,j_a^*,h_a}' S^{-1}\right) \\
&\qquad - \sum_{\ell=1}^{h_b} \tr\left(A^{\ell-1}H_{\bullet,j_b^*}e_{i_b^*,n}'A^{h_b-\ell}\Sigma \Psi_{i_a^*,j_a^*,h_a}'S^{-1} \right) \\
&\qquad - \sum_{\ell=1}^{h_a} \tr\left(A^{\ell-1}H_{\bullet,j_a^*}e_{i_a^*,n}'A^{h_a-\ell}\Sigma \Psi_{i_b^*,j_b^*,h_b}'S^{-1} \right) \\
&\qquad + \mathbbm{1}(j_a^*=j_b^*) \sigma_{j_a^*}^{-2} \sum_{\ell=1}^{\min\lbrace h_a,h_b\rbrace} e_{i_b^*,n}'A^{h_b-\ell}\Sigma(A')^{h_a-\ell}e_{i_a^*,n}.
\end{align*}
We now evaluate each of the four terms on the right-hand side above. The \emph{first} term equals
\[\tr\Bigg(\underbrace{\sum_{\ell=1}^\infty A^{\ell-1}\Sigma(A')^{\ell-1}}_{=S}S^{-1}\Psi_{i_b^*,j_b^*,h_b}\Sigma \Psi_{i_a^*,j_a^*,h_a}' S^{-1}\Bigg) = \tr\left(\Psi_{i_b^*,j_b^*,h_b}\Sigma \Psi_{i_a^*,j_a^*,h_a}' S^{-1}\right).\]
The \emph{second} term (in the earlier display) equals
\[-\tr\Bigg( \underbrace{\sum_{\ell=1}^{h_b} A^{\ell-1}H_{\bullet,j_b^*}e_{i_b^*,n}'A^{h_b-\ell}}_{=\sum_{\ell=1}^{h_b} A^{h_b-\ell}H_{\bullet,j_b^*}e_{i_b^*,n}'A^{\ell-1}=\Psi_{i_b^*,j_b^*,h_b}}\Sigma \Psi_{i_a^*,j_a^*,h_a}'S^{-1} \Bigg) = -\tr\left(\Psi_{i_b^*,j_b^*,h_b}\Sigma \Psi_{i_a^*,j_a^*,h_a}' S^{-1}\right),\]
and the \emph{third} term (in the earlier display) also equals this quantity by a symmetric calculation. In conclusion, we have shown
\begin{align*}
&\sum_{\ell=1}^\infty \tr(\Xi_{i_a^*,j_a^*,h_a,\ell}' \Xi_{i_b^*,j_b^*,h_b,\ell}) \\
&= \mathbbm{1}(j_a^*=j_b^*) \sigma_{j_a^*}^{-2} \sum_{\ell=1}^{\min\lbrace h_a,h_b\rbrace} e_{i_b^*,n}'A^{h_b-\ell}\Sigma(A')^{h_a-\ell}e_{i_a^*,n} - \tr\left(\Psi_{i_b^*,j_b^*,h_b}\Sigma \Psi_{i_a^*,j_a^*,h_a}' S^{-1}\right).
\end{align*}
The desired result \eqref{eqn:bias_general_aux2} now follows from Corollary A.2.
\end{proof}

\subsubsection{Proof of \texorpdfstring{\cref{thm:coverage_joint}}{Corollary C.1}}
The result follows straightforwardly from \eqref{eqn:asy_normal_joint} if we can show that the maximal non-centrality parameter equals
\[\max_{\alpha(L) \colon \|\alpha(L) \| \leq M} \abias(\hat{\boldsymbol{\delta}})'\avar(\hat{\boldsymbol{\delta}})^{-1}\abias(\hat{\boldsymbol{\delta}}) = M^2 \left[\lambda_{\max}(\avar(\hat{\boldsymbol{\beta}})\avar(\hat{\boldsymbol{\delta}})^{-1})-1\right].\]
But this follows from applying \cref{thm:bias_general} with $R = \avar(\hat{\boldsymbol{\delta}})^{-1/2}$, since
\begin{align*}
&\lambda_{\max}\! \left( \avar(\hat{\boldsymbol{\delta}})^{-1/2}[\avar(\hat{\boldsymbol{\beta}})-\avar(\hat{\boldsymbol{\delta}})]\avar(\hat{\boldsymbol{\delta}})^{-1/2\prime} \right) \\
&= \lambda_{\max}\! \left( \avar(\hat{\boldsymbol{\delta}})^{-1/2}\avar(\hat{\boldsymbol{\beta}})\avar(\hat{\boldsymbol{\delta}})^{-1/2\prime} - I_k\right) \\
&= \lambda_{\max}\!\left( \avar(\hat{\boldsymbol{\beta}})\avar(\hat{\boldsymbol{\delta}})^{-1} \right) - 1. \qed
\end{align*}

\section{Simulation details and further results}
\label{app:oil}

We here provide supplementary details for the simulation study reported in Section 5.2.

\subsection{Implementation details}
All inference procedures (correctly) assume homoskedastic shocks. The VAR is estimated with an intercept, and confidence intervals are constructed either using the delta method or the recursive residual bootstrap; following the recommendation of \citet{Inoue2020}, we report the Efron bootstrap confidence interval. For LPs, we include the shock measure as the only contemporaneous regressor, control for an intercept and the same $p$ lags of all observables as in the VAR, and report the OLS coefficient on the shock. Confidence intervals are constructed either using homoskedastic OLS standard errors or by bootstrapping an auxiliary VAR as in \citet{MontielOlea2021}, but using a recursive residual bootstrap instead of a wild bootstrap; we follow the latter paper and report the percentile-$t$ bootstrap confidence interval. We use 1,000 bootstrap draws, and the maximal lag length considered for the AIC is $24$.

\subsection{Further results}
\cref{fig:oil_simulations_lags} shows coverage probabilities and median confidence interval lengths for longer estimation lag lengths $p \in \lbrace 15,18\rbrace$. The results are consistent with the asymptotic theory: the longer the estimation lag length, the less severe VAR undercoverage, with correct coverage ensured through equivalence with LP. Of course, since the true DGP is a VAR($18$), the $p=18$ VAR estimator is more efficient than LP at longer horizons (middle panel). The bottom panel of the figure shows the root MSE for VAR and LP, with lag length selected by AIC. By this measure, VAR outperforms LP, indicating that while the VAR bias is large enough to seriously compromise inference (as shown earlier), it is not so large as to threaten the VAR estimator's status as a useful point estimator.

\begin{figure}[p]
\centering
{\textsc{\small Coverage and length for lag length $p = 15$}} \\
\vspace{0.6\baselineskip}
\begin{subfigure}{\textwidth}
\centering
\includegraphics[width=0.825\textwidth]{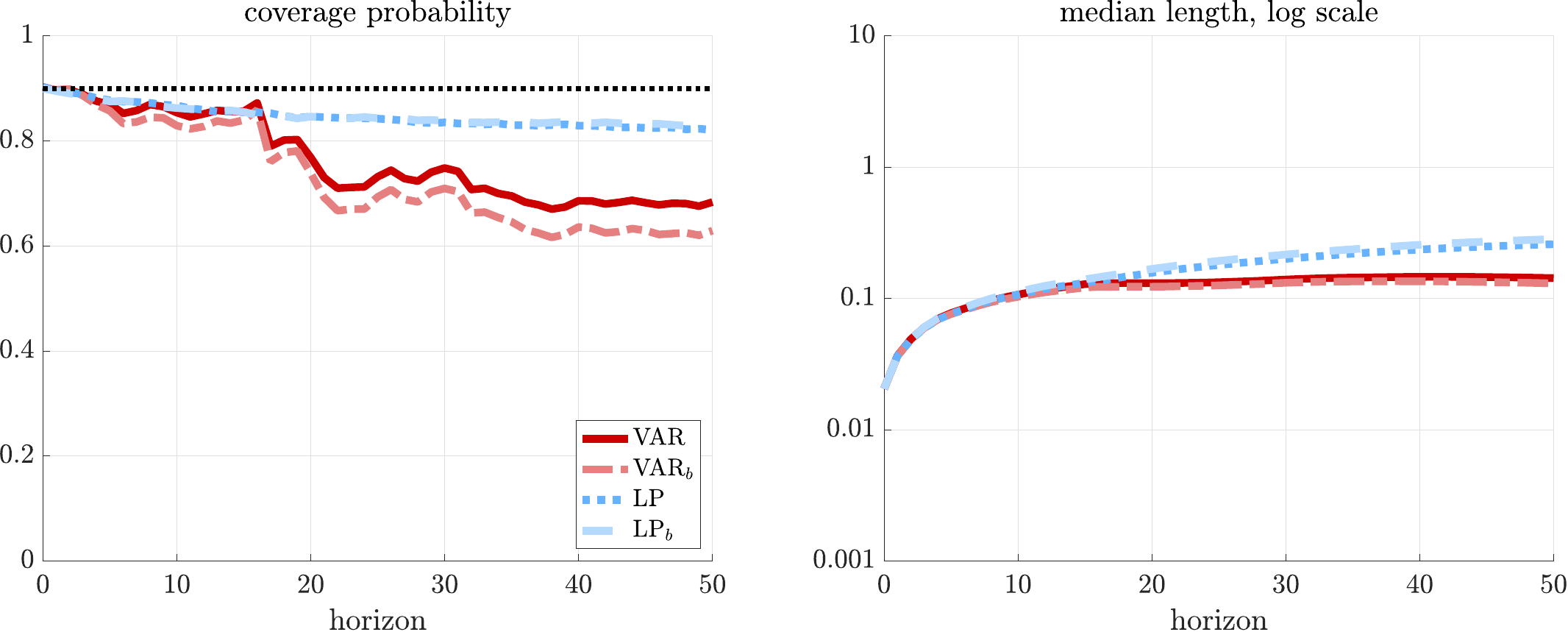}\vspace{0.2cm}
\end{subfigure} 
\centering
{\textsc{\small Coverage and length for lag length $p = 18$}} \\
\vspace{0.6\baselineskip}
\begin{subfigure}{\textwidth}
\centering
\includegraphics[width=0.825\textwidth]{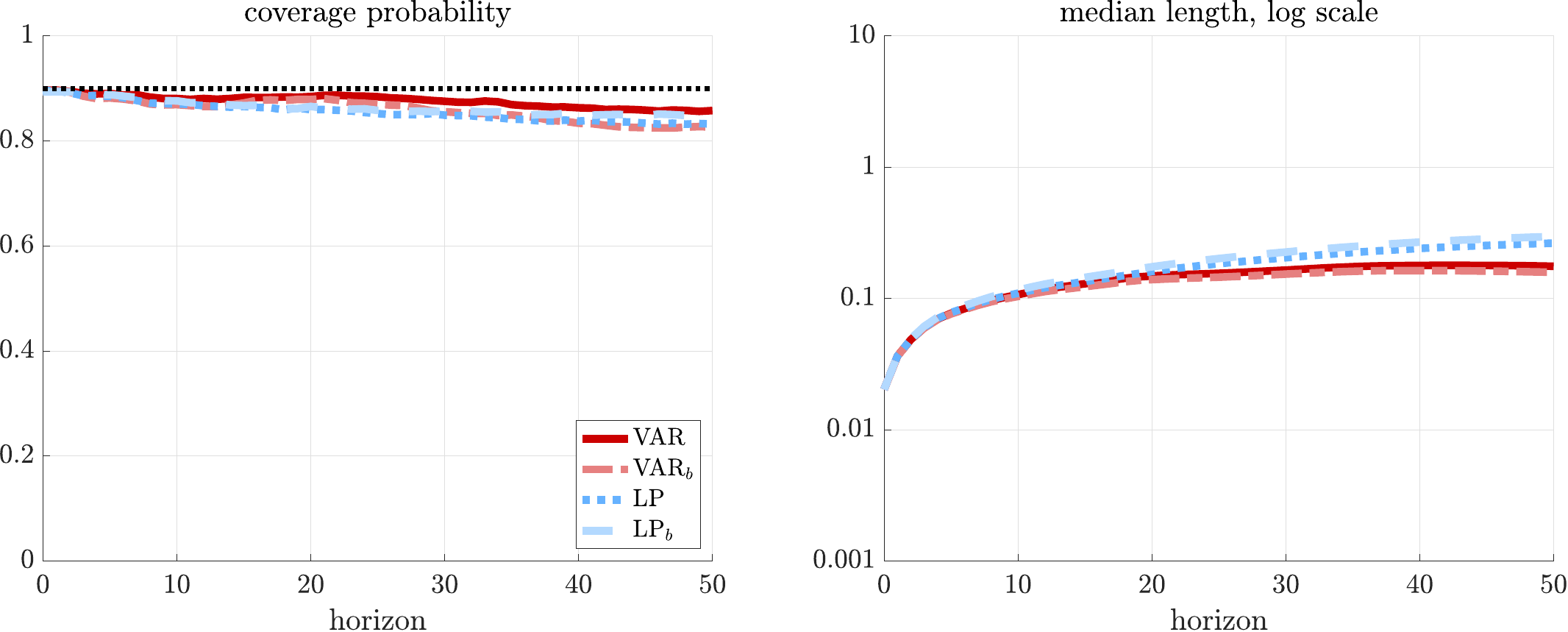}\vspace{0.2cm} \\
\centering
{\textsc{\small RMSE for lag length selected by AIC}} \\
\vspace{0.6\baselineskip}
\includegraphics[width=0.5\textwidth]{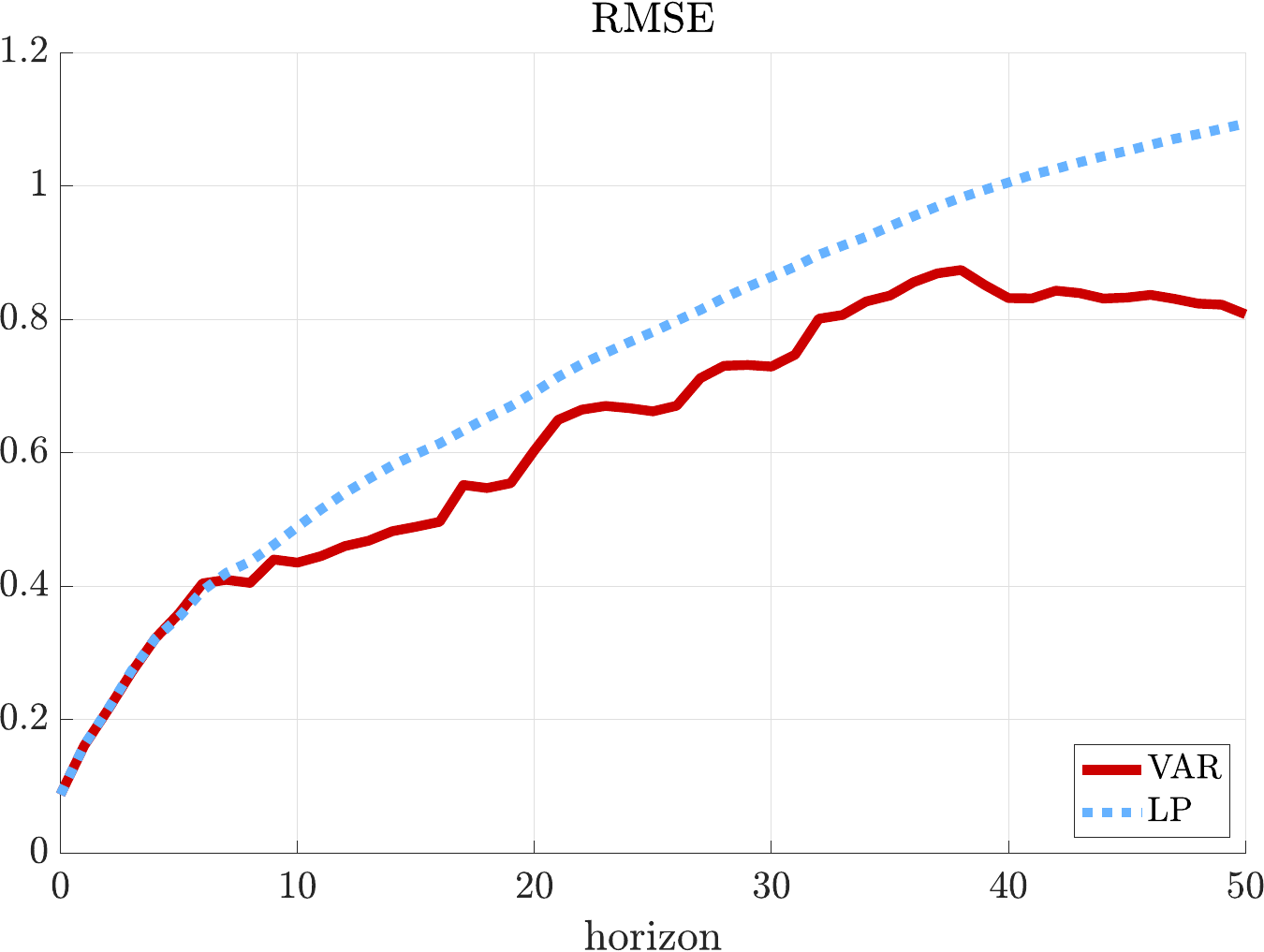}
\end{subfigure} 
\caption{Top two panels: coverage probability (left) and median length (right) for VAR (red) and LP (blue) 90\% confidence intervals computed via the delta method or bootstrap (the latter are indicated with subscript ``b'' in the legends). Bottom panel: root MSE of estimators. Lag length: $p = 15$ in the top panel, $p = 18$ in the middle panel, and $p$ selected by AIC in the bottom panel.}
\label{fig:oil_simulations_lags}
\end{figure}

\section{Further proofs}
\label{app:proof_details}

We impose \cref{asn:mds} and Assumption 2.1(ii)--(v) throughout; as discussed in \cref{app:hetero}, none of the proofs below require Assumption 2.1(i). Let $\|B\|$ denote the Frobenius norm of any matrix $B$. It is well known that this norm is sub-multiplicative: $\|BC\| \leq \|B\|\cdot\|C\|$. Let $I_n$ denote the $n \times n$ identity matrix, $0_{m \times n}$ the $m \times n$ matrix of zeros, and $e_{i,n}$ the $n$-dimensional unit vector with a 1 in the $i$-th position. Recall from Assumption 2.1 the definitions $D\equiv \var(\varepsilon_t)= \diag(\sigma_1^2,\dots,\sigma_m^2)$, $\tilde{y}_t \equiv (I_n-A L)^{-1}H\varepsilon_t = \sum_{s=0}^\infty A^s H \varepsilon_{t-s}$, and $S \equiv \var(\tilde{y}_t)$.

\subsection{Main lemmas}

\begin{lem} \label{thm:lp_coefficients_aux}
For any $i^* \in \{1, \ldots, n\}$ and $j^* \in \{1,\ldots, m\}$, we have  
\begin{equation*} \label{eqn:lp_repr_aux0}
y_{i^*,t+h} = \theta_{h,T}\varepsilon_{j^*,t} + \underline{B}_{h,i^*,j^*}'\underline{y}_{j^*,t} + B_{h,i^*,j^*}' y_{t-1} + \xi_{i^*,h,t} + T^{-\zeta} \Theta_h(L)\varepsilon_t,
\end{equation*}
where 
\begin{align*}
\theta_{h,T} &\equiv e_{i^*,n}'(A^h H + T^{-\zeta}\sum_{\ell=1}^h A^{h-\ell}H \alpha_\ell)e_{j^*,m}, \\
\underline{B}_{h,i^*,j^*}' &\equiv e'_{i^*,n}A^h \underline{H}_{j^*} H_{11}^{-1}, \\
B'_{h,i^*,j^*} &\equiv  e'_{i^*,n}\left [ A^{h+1} -  A^h \underline{H}_{j^*} H_{11}^{-1} \underline{I}_{j^*}A \right],  \\
\xi_{i^*,h,t}  &\equiv  e'_{i^*,n}A^h \overline{H}_{j^*}\overline{\varepsilon}_{j^*,t} + \sum_{\ell=1}^h  e'_{i^*,n}A^{h-\ell} H\varepsilon_{t+\ell},
\end{align*}
and $\Theta_h(L) = \sum_{\ell=-\infty}^{\infty} \Theta_{h,\ell} L^{\ell}$ is an absolutely summable, $1 \times n$ two-sided lag polynomial with the $j^*$-th element of $\Theta_{h,0}$ equal to zero. Moreover, 
\[T^{-1}\sum_{t=1}^{T-h} (\Theta_h(L)\varepsilon_t) \varepsilon_{j^*,t} = O_p(T^{-1/2}).\]
\end{lem}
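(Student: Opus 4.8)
The plan is to show that the scalar process $Z_t \equiv (\Theta_h(L)\varepsilon_t)\varepsilon_{j^*,t}$ is strictly stationary with \emph{mean zero} and \emph{absolutely summable autocovariances}; the first property is what upgrades the rate from $O_p(1)$ to $O_p(T^{-1/2})$, and the second delivers the factor $\sqrt{T}$. Granting both, $E[\sum_{t=1}^{T-h} Z_t]=0$ and $\var(\sum_{t=1}^{T-h} Z_t) \le (T-h)\sum_{\tau=-\infty}^{\infty} |\gamma_Z(\tau)| = O(T)$, so Chebyshev's inequality gives $\sum_{t=1}^{T-h} Z_t = O_p(T^{1/2})$ and therefore $T^{-1}\sum_{t=1}^{T-h} Z_t = O_p(T^{-1/2})$. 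Strict stationarity holds because $Z_t$ is a fixed measurable function of the strictly stationary sequence $\lbrace \varepsilon_s \rbrace$, and $Z_t \in L^2$ follows from $E[\|\varepsilon_t\|^4]<\infty$ and the absolute summability of $\Theta_h(L)$ (Minkowski in $L^4$ and Cauchy--Schwarz).

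First I would establish mean zero. Writing $E[Z_t] = \sum_\ell \Theta_{h,\ell} E[\varepsilon_{t-\ell}\varepsilon_{j^*,t}]$, the martingale-difference property in \cref{asn:mds} kills every term with $\ell \neq 0$ (condition on the natural filtration just before the later of the two dates), while diagonality of $D$ makes the $\ell=0$ term equal to $\sigma_{j^*}^2$ times the $j^*$-th element of $\Theta_{h,0}$, which is zero by hypothesis. This cancellation is exactly the role of the assumption that the $j^*$-th entry of $\Theta_{h,0}$ vanishes: without it $E[Z_t]$ would be a nonzero constant and the average would be $O_p(1)$.

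Next I would compute $\gamma_Z(\tau) = E[Z_t Z_{t+\tau}]$ by expanding the fourth shock moment into its three second-order pairings plus the fourth cumulant. The pairing that keeps the two factors of $Z_t$ together and the two factors of $Z_{t+\tau}$ together is $E[Z_t]E[Z_{t+\tau}]=0$ and drops out under centering. The pairing matching $\varepsilon_{j^*,t}$ with $\varepsilon_{j^*,t+\tau}$ vanishes for $\tau \neq 0$ by martingale-difference orthogonality, leaving a single finite term at $\tau=0$. The last pairing matches $\varepsilon_{j^*,t}$ and $\varepsilon_{j^*,t+\tau}$ each to a lagged shock of the opposite factor; matching dates pins its lag-$\tau$ contribution to $(\sigma_{j^*}^2)^2$ times the product of the $j^*$-th elements of $\Theta_{h,\tau}$ and $\Theta_{h,-\tau}$, which is summable over $\tau$ since $\lbrace \Theta_{h,\ell} \rbrace$ is absolutely summable and bounded.

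The cumulant term is the main obstacle, and is precisely what the final clause of \cref{asn:mds} controls. After reindexing, its lag-$\tau$ contribution is a sum over lags $\ell,\ell'$ of $\Theta_{h,\ell}\otimes\Theta_{h,\ell'}$ weighting $\mathrm{cum}(\varepsilon_{\bullet,t-\ell},\varepsilon_{j^*,t},\varepsilon_{\bullet,t+\tau-\ell'},\varepsilon_{j^*,t+\tau})$, and I must bound its sum over $\tau$. I would first invoke the martingale-difference property once more to show that any fourth cumulant whose largest time index is attained by a \emph{single} argument vanishes (the full moment and every pairing isolating that latest shock are zero after conditioning); this confines the nonzero cumulants to configurations with a repeated maximal date, which---using the symmetry of cumulants and stationarity---are exactly those indexed by $\cov(\varepsilon_t\otimes\varepsilon_t,\varepsilon_{t-i}\otimes\varepsilon_{t-j})$. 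I would then absorb one pair of lag sums using the assumed summability $\sum_{i,j\ge1}\|\cov(\varepsilon_t\otimes\varepsilon_t,\varepsilon_{t-i}\otimes\varepsilon_{t-j})\|<\infty$ and the remaining lag sum using the absolute summability of $\Theta_h(L)$. The delicate bookkeeping is matching the general four-date configuration of the cumulant to the two-coincident-date structure of the assumption, but this is routine in the spirit of \citet{Kuersteiner2001}; the payoff is that the cumulant contribution to $\var(\sum_{t=1}^{T-h} Z_t)$ is $O(T)$ rather than $O(T^2)$, completing the autocovariance-summability step.
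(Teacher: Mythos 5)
There is a genuine gap: your proposal proves only the final displayed claim of \cref{thm:lp_coefficients_aux}, while assuming everything that precedes it. The primary content of the lemma is the existence of the representation $y_{i^*,t+h} = \theta_{h,T}\varepsilon_{j^*,t} + \underline{B}_{h,i^*,j^*}'\underline{y}_{j^*,t} + B_{h,i^*,j^*}' y_{t-1} + \xi_{i^*,h,t} + T^{-\zeta}\Theta_h(L)\varepsilon_t$ with the stated coefficient formulas, the absolute summability of $\Theta_h(L)$, and the fact that the $j^*$-th element of $\Theta_{h,0}$ vanishes. You treat all of this as given---indeed you write that $\Theta_{h,0,j^*}=0$ ``by hypothesis''---but these are conclusions to be established, and in the paper they absorb nearly all of the work. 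The paper iterates Equation (2.1) forward $h$ periods, uses the block lower-triangular structure of $H$ from Assumption 2.1(iii) to solve for $\underline{\varepsilon}_{j^*,t}$ in terms of $\underline{y}_{j^*,t}$, $y_{t-1}$ and the $T^{-\zeta}$ contamination terms, substitutes back, and then---crucially---adds and subtracts $T^{-\zeta}e_{i^*,n}'\left(\sum_{\ell=1}^h A^{h-\ell}H\alpha_\ell\right)e_{j^*,m}\,\varepsilon_{j^*,t}$ so that the lag-zero contamination in the $j^*$ direction is absorbed into the definition of $\theta_{h,T}$ rather than left inside $\Theta_h(L)$; that construction is precisely how $\Theta_{h,0,j^*}=0$ is engineered, not an assumption one may invoke. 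Absolute summability of $\Theta_h(L)$ likewise requires its own argument, combining the geometric bound $\|A^\ell\|\leq C\lambda^\ell$ from Assumption 2.1(ii) with $\sum_{\ell}\|\alpha_\ell\|<\infty$ from Assumption 2.1(v). None of this appears in your proposal, so the lemma as stated is not proved.

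For the part you do address---the rate $T^{-1}\sum_{t=1}^{T-h}(\Theta_h(L)\varepsilon_t)\varepsilon_{j^*,t}=O_p(T^{-1/2})$---your route is essentially the paper's: mean zero via the martingale-difference property of \cref{asn:mds} together with $\Theta_{h,0,j^*}=0$; absolutely summable autocovariances of the product process via the fourth-order cumulant summability condition; then a variance bound and Chebyshev (the paper routes this through \citet[Thm.\ 7.1.1]{Brockwell1991}). Your pairing-plus-cumulant bookkeeping, including the observation that the MDS property kills any term whose maximal time index is attained by a single shock, is exactly the content the paper isolates as \cref{thm:squared}, where the terms are grouped by the multiplicity of the maximal date; your version is sketched (``routine in the spirit of Kuersteiner'') rather than executed, but the idea is sound. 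So the portion you prove is correct and parallel to the paper; the problem is that it is the short final step of the lemma, not the lemma itself.
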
 

\begin{proof}
Iteration on the model in Equation (2.1) yields
\begin{align}
y_{t+h} &=  A^{h+1}y_{t-1} + \sum_{\ell=0}^h A^{h-\ell} (H\varepsilon_{t+\ell} + T^{-\zeta}H\alpha(L)\varepsilon_{t+\ell}). \label{eqn:iterate_aux}
\end{align}
As in Section 2.2, let $\underline{y}_{j^*,t} \equiv (y_{1,t},\dots,y_{j^*-1,t})'$ denote the variables ordered before $y_{j^*,t}$ (if any). Analogously, let $\overline{y}_{j^*,t} \equiv (y_{j^*+1,t}, \ldots y_{n,t})'$  denote the variables ordered after $y_{j^*,t}$.

Using Assumption 2.1(iii), partition
\[H = (\underline{H}_{j^*},H_{\bullet, j^*},\overline{H}_{j^*}) = \begin{pmatrix}
H_{11} & 0 & 0 \\
H_{21} & H_{22} & 0 \\
H_{31} & H_{32} & H_{33}
\end{pmatrix}\]
conformably with the vector $y_t=(\underline{y}_{j^*,t}',y_{j^*,t},\overline{y}_{j^*,t}')'$. Let $\underline{I}_{j^*}$ denote the first $j^*-1$ rows of the $n \times n$ identity matrix. Using the definition of $y_{t}$ in Equation (2.1),
\[\underline{y}_{j^*,t} = \underline{I}_{j^*}Ay_{t-1} + H_{11}\underline{\varepsilon}_{j^*,t} + T^{-\zeta}H_{11}\underline{I}_{j^*}\alpha(L)\varepsilon_t,\]
where $\underline{\varepsilon}_{j^*,t} = \underline{I}_{j^*} \varepsilon_{t}$. Using the previous equation to solve for $\underline{\varepsilon}_{j^*,t}$ we get
\begin{equation} \label{eqn:lp_coefficients_aux1} 
\underline{\varepsilon}_{j^*,t} = H_{11}^{-1}(\underline{y}_{j^*,t}-\underline{I}_{j^*}Ay_{t-1}-T^{-\zeta}H_{11}\underline{I}_{j^*}\alpha(L)\varepsilon_t).
\end{equation}
Expanding the terms in \eqref{eqn:iterate_aux} we get:
\begin{align*}
y_{t+h} &=A^{h+1}y_{t-1}  + A^h H \varepsilon_{t} + T^{-\zeta}A^h H\alpha(L)\varepsilon_t + \sum_{\ell=1}^h A^{h-\ell} (H\varepsilon_{t+\ell} + T^{-\zeta}H\alpha(L)\varepsilon_{t+\ell}) \\
&= A^{h+1}y_{t-1} + \left( A^h \underline{H}_{j^*} \underline{\varepsilon}_{j^*,t} +  A^h H_{\bullet, j^*} \varepsilon_{j^*,t}   +  A^h \overline{H}_{j^*}\overline{\varepsilon}_{j^*,t} \right) \\
&\quad + T^{-\zeta}A^h H\alpha(L)\varepsilon_t + \sum_{\ell=1}^h A^{h-\ell} (H\varepsilon_{t+\ell} + T^{-\zeta}H\alpha(L)\varepsilon_{t+\ell})\\
&= A^{h+1}y_{t-1} + A^h \underline{H}_{j^*} H_{11}^{-1}(\underline{y}_{j^*,t}-\underline{I}_{j^*}Ay_{t-1}-T^{-\zeta}H_{11}\underline{I}_{j^*}\alpha(L)\varepsilon_t) +  A^h H_{\bullet, j^*} \varepsilon_{j^*,t} + A^h \overline{H}_{j^*}\overline{\varepsilon}_{j^*,t} \\
&\quad + T^{-\zeta}A^h H\alpha(L)\varepsilon_t + \sum_{\ell=1}^h A^{h-\ell} (H\varepsilon_{t+\ell} + T^{-\zeta}H\alpha(L)\varepsilon_{t+\ell}),
\end{align*}
where the last equality follows from substituting \eqref{eqn:lp_coefficients_aux1}. Re-arranging terms we get 
\begin{align*}
y_{i^*,t+h} &=  \left( e'_{i^*,n} A^h H_{\bullet, j^*} \right) \varepsilon_{j^*,t}  + \underbrace{\left(e'_{i^*,n}A^h \underline{H}_{j^*} H_{11}^{-1} \right)}_{\equiv \underline{B}_{h,i^*,j^*}'}\underline{y}_{j^*,t} + \underbrace{\left( e'_{i^*,n}\left [ A^{h+1} -  A^h \underline{H}_{j^*} H_{11}^{-1} \underline{I}_{j^*}A \right]  \right)}_{\equiv B'_{h,i^*,j^*}} y_{t-1} \nonumber \\
&+  \underbrace{e'_{i^*,n} \left( A^h \overline{H}_{j^*}\overline{\varepsilon}_{j^*,t} + \sum_{\ell=1}^h A^{h-\ell} H\varepsilon_{t+\ell} \right )}_{= \xi_{i^*,h,t}} \nonumber \\
&+ T^{-\zeta}  e'_{i^*,n}  \left( - A^h \underline{H}_{j^*} H_{11}^{-1} H_{11}\underline{I}_{j^*}\alpha(L)\varepsilon_t  + \sum_{\ell=0}^h A^{h-\ell} H\alpha(L)\varepsilon_{t+\ell} \right).
\end{align*}
Using the definition of $\theta_{h,T} \equiv e_{i^*,n}'(A^h H + T^{-\zeta}\sum_{\ell=1}^h A^{h-\ell}H \alpha_\ell)e_{j^*,m}$ and adding and subtracting $ e_{i^*,n}' \left( T^{-\zeta}\sum_{\ell=1}^h A^{h-\ell}H \alpha_\ell \right) e_{j^*,m}  \varepsilon_{j^*,t} $ in the display above, we obtain a representation of the form 
\begin{equation} \label{eqn:lp_repr_aux2} 
y_{i^*,t+h} = \theta_{h,T}\varepsilon_{j^*,t} + \underline{B}_{h,i^*,j^*}'\underline{y}_{j^*,t} + B_{h,i^*,j^*}' y_{t-1} + \xi_{i^*,h,t} + T^{-\zeta} \tilde{u}_{t},
\end{equation}
where 
\begin{equation}
\tilde{u}_{t} \equiv   e'_{i^*,n}  \left( - A^h \underline{H}_{j^*} \underline{I}_{j^*}\alpha(L)\varepsilon_t  + \sum_{\ell=0}^h A^{h-\ell} H\alpha(L)\varepsilon_{t+\ell} - \left( \sum_{\ell=1}^h A^{h-\ell}H \alpha_\ell  e_{j^*,m} e'_{j^*,m} \right) \varepsilon_{t}  \right).
\end{equation}
Algebra shows that $\tilde{u}_{t}$ can be written as a two-sided lag polynomial, $\Theta_h(L)=\sum_{\ell=-\infty}^\infty \Theta_{h,\ell} L^\ell$, with coefficients of dimension $1 \times n$ given by the following formulae:
\[\Theta_{h,\ell} = \begin{cases}
-  e'_{i^*,n} A^h \underline{H}_{j^*} \underline{I}_{j^*} \alpha_{\ell} +  \sum_{s=0}^{h} e'_{i^*,n} A^{h-s} H \alpha_{\ell +  s} & \text{for } \ell \geq 1, \\
\sum_{s=1}^{h} e'_{i^*,n} A^{h-s} H \alpha_{s} - \sum_{s=1}^h e'_{i^*,n} A^{h-s}H \alpha_s  e_{j^*,m} e'_{j^*,m} & \text{for } \ell=0, \\
\sum_{s=1}^{h+\ell} e'_{i^*,n} A^{h-s+\ell} H \alpha_{s} & \text{for } \ell \in \{ -(h-1), \ldots, -1\}, \\
0_{1 \times n} & \text{for } \ell \leq -h.
\end{cases}\]
In particular,  $\Theta_{h,0,j^*} \equiv \Theta_{h,0}  e_{j^*,m} = 0$.

We next show that $\Theta_{h}(L)$ is absolutely summable, that is $\sum_{\ell= -\infty  }^{\infty} \| \Theta_{h,l} \| < \infty$. To do this, it suffices to show that $\sum_{\ell= 1  }^{\infty} \| \Theta_{h,l} \| < \infty$, since all the coefficients with index $\ell \leq -h$ are 0. Note that, by definition,
\[\sum_{\ell=1} ^{\infty} \| \Theta_{h,\ell} \| \leq \| A^h \| \|  \underline{H}_{j^*} \underline{I}_{j^*} \| \sum_{\ell=1}^{\infty} \| \alpha_{\ell} \| + \| H \|  \sum_{\ell=1}^{\infty}  \sum_{s=0}^{h} \| A^{h-s} \| \| \alpha_{\ell + s}\|.\]
Let $\lambda \in [0,1)$ and $C >0$ be chosen such that $\| A^{\ell} \| \leq C \lambda^{\ell}$ for all $\ell \geq 0$ (such constants exists by Assumption 2.1(ii)). Then 
\[\sum_{\ell=1}^{\infty}  \sum_{s=0}^{h} \| A^{h-s} \| \| \alpha_{\ell + s}\| \leq C  \sum_{\ell=1}^{\infty}  \sum_{s=1}^{h} \lambda^{h-s} \| \alpha_{\ell + s}\| \leq C  \sum_{\ell=1}^{\infty}  \sum_{s=1}^{h} \| \alpha_{\ell + s}\|  \leq C h \sum_{\ell=1}^{\infty}  \| \alpha_{\ell}\| < \infty,\]
where the last inequality holds because the coefficients of $\alpha(L)$ are summable. We thus conclude that 
\begin{equation*} 
y_{i^*,t+h} = \theta_{h,T}\varepsilon_{j^*,t} + \underline{B}_{h,y}'\underline{y}_{j^*,t} + B_{h,y}' y_{t-1} + \xi_{i^*,h,t} + T^{-\zeta} \Theta_h(L)\varepsilon_t,
\end{equation*}
where $\Theta_{h}(L)$ is a two-sided lag-polynomial with summable coefficients.  

Finally, we show that 
\begin{equation} \label{eqn:twosidedmovavg}
T^{-1}\sum_{t=1}^{T-h} (\Theta_h(L)\varepsilon_t) \varepsilon_{j^*,t} = O_p(T^{-1/2}).
\end{equation}
We can decompose $(\Theta_h(L)\varepsilon_t)\varepsilon_{j^*,t}$ into finitely many terms of the form $(\psi_j(L)\varepsilon_{j,t})\varepsilon_{j^*,t}$ for some two-sided, absolutely summable lag polynomial $\psi_j(L)$ and integer $j$. Since $\Theta_{h,0,j^*}=0$, each of these terms has mean zero by shock orthogonality. By \cref{thm:squared}, each term also has absolutely summable autocovariances. Hence, \citet[Thm.\ 7.1.1]{Brockwell1991} and Chebyshev's inequality imply that the sample average of each term is $O_p(T^{-1/2})$.
\end{proof}

\begin{lem} \label{thm:A}
	\[\hat{A}-A = T^{-\zeta}H\sum_{\ell=1}^{\infty} \alpha_{\ell}DH'(A')^{\ell-1}S^{-1} + T^{-1}\sum_{t=1}^T H\varepsilon_t \tilde{y}_{t-1}'S^{-1} + o_p(T^{-\zeta}).\]
	In particular, $\hat{A}-A = O_p(T^{-\zeta}+T^{-1/2})$.
\end{lem}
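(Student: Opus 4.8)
The object I want to characterize is the least-squares VAR coefficient estimator $\hat{A}$ fitted to the data $\{y_t\}$ generated by the VARMA$(1,\infty)$ model in Equation (2.1). The plan is to start from the standard least-squares representation
\[
\hat{A} - A = \left(T^{-1}\sum_{t=1}^T u_t y_{t-1}'\right)\left(T^{-1}\sum_{t=1}^T y_{t-1}y_{t-1}'\right)^{-1},
\]
where $u_t = y_t - Ay_{t-1} = H\varepsilon_t + T^{-\zeta}H\alpha(L)\varepsilon_t$ is the reduced-form innovation. The two factors should be analyzed separately: the ``denominator'' sample second-moment matrix of $y_{t-1}$ converges to $S = \var(\tilde{y}_t)$, and the ``numerator'' $T^{-1}\sum_t u_t y_{t-1}'$ is where both the systematic bias term and the leading stochastic term originate.

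\textbf{Handling the denominator.} First I would invoke \cref{thm:ytilde} (together with the second-moment control in \cref{thm:y_second_moment}) to replace the sample second moment of $y_{t-1}$ by that of the pure-VAR process $\tilde{y}_{t-1}$ up to an $O_p(T^{-\zeta})$ error, so that $T^{-1}\sum_t y_{t-1}y_{t-1}' = S + o_p(1)$, and hence its inverse is $S^{-1} + o_p(1)$. The key point is that the correction terms from replacing $y$ by $\tilde{y}$ feed into the numerator only at order $o_p(T^{-\zeta})$ once multiplied through, so at the stated level of precision the denominator can be treated as $S^{-1}$ plus a negligible remainder.

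\textbf{Decomposing the numerator.} The numerator $T^{-1}\sum_t u_t y_{t-1}'$ splits according to $u_t = H\varepsilon_t + T^{-\zeta}H\alpha(L)\varepsilon_t$ into two pieces. The second piece, $T^{-\zeta}\,T^{-1}\sum_t H\alpha(L)\varepsilon_t\, y_{t-1}'$, is deterministic to leading order: replacing $y_{t-1}$ by $\tilde{y}_{t-1}=\sum_{s\ge0}A^sH\varepsilon_{t-1-s}$ and taking expectations, the only surviving cross-covariances are $E[\alpha(L)\varepsilon_t\,\varepsilon_{t-1-s}']$, which by orthogonality and the diagonal structure $\var(\varepsilon_t)=D$ pick out $\alpha_{s+1}D$, producing $\sum_{\ell=1}^\infty \alpha_\ell D H'(A')^{\ell-1}$; multiplying by $T^{-\zeta}$ and the denominator $S^{-1}$ yields the first term in the claimed expansion. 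The first piece, $T^{-1}\sum_t H\varepsilon_t y_{t-1}'$, again with $y_{t-1}\to\tilde{y}_{t-1}$, gives $T^{-1}\sum_t H\varepsilon_t\tilde{y}_{t-1}'$, which is a mean-zero martingale-difference-type sum of order $O_p(T^{-1/2})$ and is the second term. Combining with the $S^{-1}$ denominator gives the stated representation.

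\textbf{Main obstacle and the rate conclusion.} The delicate step is controlling the remainder at order $o_p(T^{-\zeta})$: I must verify that (i) the deterministic bias term from the second piece has a vanishing stochastic fluctuation around its mean of order $o_p(1)$ (so that after multiplying by $T^{-\zeta}$ it is $o_p(T^{-\zeta})$), which should follow from absolute summability of $\alpha(L)$ and the fourth-order cumulant summability in \cref{asn:mds} via a law of large numbers for the relevant products; and (ii) the error incurred by swapping $y_{t-1}$ for $\tilde{y}_{t-1}$ and $(T^{-1}\sum y_{t-1}y_{t-1}')^{-1}$ for $S^{-1}$ does not contaminate either leading term beyond $o_p(T^{-\zeta})$, which requires keeping careful track of the interaction between the $T^{-\zeta}$-scaled bias and the $O_p(T^{-\zeta})$ denominator correction. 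The final rate claim $\hat{A}-A = O_p(T^{-\zeta}+T^{-1/2})$ then reads off immediately, since the first displayed term is exactly $O_p(T^{-\zeta})$, the second is $O_p(T^{-1/2})$, and the remainder is of smaller order than the first.
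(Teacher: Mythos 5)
Your proposal is correct and follows essentially the same route as the paper: the paper's proof starts from the identical least-squares representation and simply invokes \cref{thm:uy} (whose proof is exactly your numerator decomposition into the $T^{-\zeta}H\sum_{\ell}\alpha_\ell D H'(A')^{\ell-1}$ bias term plus the martingale term, using \cref{thm:ytilde} and a law of large numbers) together with \cref{thm:y_second_moment} for the denominator. You have merely inlined the content of those two auxiliary lemmas rather than citing them as black boxes.
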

\begin{proof}
	Since,
	\[\hat{A} - A = \left(T^{-1}\sum_{t=1}^{T-h} u_t y_{t-1}'\right)\left(T^{-1}\sum_{t=1}^{T-h} y_{t-1}y_{t-1}'\right)^{-1},\]
	the result follows from \cref{thm:uy,thm:y_second_moment}.
\end{proof}

\begin{lem} \label{thm:nu}
	\[\hat{\nu}-H_{\bullet, j^*} = \frac{1}{\sigma_{j^*}^2}T^{-1}\sum_{t=1}^T \xi_{0,t} \varepsilon_{j^*,t} + O_p(T^{-2\zeta}) + o_p(T^{-1/2}).\]
\end{lem}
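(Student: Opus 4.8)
The plan is to write $\hat{\nu}$ as the vector of coefficients on the recursively identified shock in the horizon-zero projection, linearize it as a ratio of sample second moments, isolate the leading term, and then show the remaining pieces are of the stated order. Throughout I use the $h=0$ specialization of \cref{thm:lp_coefficients_aux}, for which $\theta_{0,T}=H_{i^*,j^*}$ carries no $T^{-\zeta}$ correction, $\xi_{i^*,0,t}=e_{i^*,n}'\overline{H}_{j^*}\overline{\varepsilon}_{j^*,t}$ (so stacking over $i^*$ gives $\xi_{0,t}=\overline{H}_{j^*}\overline{\varepsilon}_{j^*,t}$), and, crucially, the contamination polynomial $\Theta_0(L)$ is one-sided, containing only strictly positive lags of $\varepsilon_t$.

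First I would record that $\hat{\nu}$ is the recursively identified impact-response estimator: the vector of OLS coefficients obtained by projecting $y_t$ on the identified shock $\hat{\varepsilon}_{j^*,t}$ (the residual from regressing $y_{j^*,t}$ on $\underline{y}_{j^*,t}$ and $y_{t-1}$), controlling for $\underline{y}_{j^*,t}$ and $y_{t-1}$. By Frisch--Waugh, $\hat{\nu}=(T^{-1}\sum_t y_t\hat{\varepsilon}_{j^*,t})/(T^{-1}\sum_t\hat{\varepsilon}_{j^*,t}^2)$, so that
\[\hat{\nu}-H_{\bullet,j^*}=\frac{T^{-1}\sum_{t=1}^T\big(y_t-H_{\bullet,j^*}\hat{\varepsilon}_{j^*,t}\big)\hat{\varepsilon}_{j^*,t}}{T^{-1}\sum_{t=1}^T\hat{\varepsilon}_{j^*,t}^2}.\]
The denominator converges in probability to $\sigma_{j^*}^2$ under the normalization in Assumption 2.1(iii), which is the source of the factor $1/\sigma_{j^*}^2$. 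For the numerator I substitute the $h=0$ representation of $y_{i^*,t}$: the control terms $\underline{B}_{0,i^*,j^*}'\underline{y}_{j^*,t}+B_{0,i^*,j^*}'y_{t-1}$ are annihilated by the sample orthogonality of $\hat{\varepsilon}_{j^*,t}$ to $(\underline{y}_{j^*,t},y_{t-1})$, so that $y_t-H_{\bullet,j^*}\hat{\varepsilon}_{j^*,t}$ reduces, to leading order, to $\xi_{0,t}+T^{-\zeta}(\text{contamination})$. Since $\hat{\varepsilon}_{j^*,t}=\varepsilon_{j^*,t}+o_p(1)$ and $\xi_{0,t}$ is orthogonal to $\varepsilon_{j^*,t}$, the leading stochastic term is $T^{-1}\sum_t\xi_{0,t}\varepsilon_{j^*,t}=O_p(T^{-1/2})$, which yields the asserted first term after dividing by $\sigma_{j^*}^2$.

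The bulk of the work is controlling three error channels and showing each is $O_p(T^{-2\zeta})+o_p(T^{-1/2})$. (i) If the shock is identified from least-squares residuals, the replacement of $A$ by $\hat{A}$ enters through $(A-\hat{A})y_{t-1}$; by \cref{thm:A}, $\hat{A}-A=O_p(T^{-\zeta}+T^{-1/2})$, and because $E[y_{t-1}\varepsilon_t']=0$ the induced cross terms are mean-zero sample averages of order $T^{-\zeta}\cdot O_p(T^{-1/2})$ together with $O_p(T^{-1})$ remainders, all absorbed. (ii) The first-order interaction between the $T^{-\zeta}$ contamination and the shock must be shown to vanish at order $T^{-\zeta}$: since both the outcome contamination $H\alpha(L)\varepsilon_t$ and the regressor contamination $\Theta_0(L)\varepsilon_t$ involve only lagged shocks, while $\varepsilon_{j^*,t}$ and $\xi_{0,t}$ are contemporaneous, \cref{asn:mds} (serial orthogonality of the shocks) forces the expectation of these cross products to zero, so the corresponding sample averages are $O_p(T^{-\zeta-1/2})=o_p(T^{-1/2})$. (iii) The product of the two contamination terms has nonzero mean, $E[(\Theta_0(L)\varepsilon_t)(H\alpha(L)\varepsilon_t)']\neq 0$, so its sample average converges to a constant and contributes exactly the $O_p(T^{-2\zeta})$ bias.

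The main obstacle is channel (ii): establishing that the leading $T^{-\zeta}$ contamination bias cancels, so that the surviving misspecification bias is only $O_p(T^{-2\zeta})$. This is the precise horizon-zero manifestation of the bias-reduction mechanism emphasized in the paper, and it hinges on two structural facts: that the horizon-zero contamination polynomial $\Theta_0(L)$ is one-sided in the lags (\cref{thm:lp_coefficients_aux}) and that the shocks are serially uncorrelated (\cref{asn:mds}). A secondary technical point is that $\hat{\varepsilon}_{j^*,t}$ is itself estimated; I would linearize the sample orthogonalization, using the second-moment lemmas (e.g.\ \cref{thm:y_second_moment} and the autocovariance-summability result \cref{thm:squared}) together with a Chebyshev/CLT argument as in \cref{thm:lp_coefficients_aux}, to confirm that estimating the projection coefficients perturbs $\hat{\nu}$ only at order $o_p(T^{-1/2})$ beyond the terms already accounted for.
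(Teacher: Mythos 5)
Your proposal is correct in substance but takes a genuinely different---and much longer---route than the paper. The paper's proof is a two-step reduction: first, \cref{thm:impact} (a finite-sample partitioned-inverse/Cholesky identity) shows that $\hat{\nu}$, which is defined through the Cholesky factor $\hat{C}$ of $\hat{\Sigma}$, coincides \emph{exactly} with the vector of on-impact local projection coefficients, so that $\hat{\nu}=(0_{1\times(j^*-1)},1,\hat{\overline{\nu}}')'$; second, the lemma then follows immediately by citing Proposition 3.1 of the main paper at $h=0$, together with the observation that $\xi_{0,i,t}=0$ for $i\leq j^*$ (which covers the components where $\hat{\nu}_i$ equals $0$ or $1$ identically). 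You instead inline the entire horizon-zero asymptotic expansion: the Frisch--Waugh ratio, consistency of the denominator, the $h=0$ specialization of \cref{thm:lp_coefficients_aux} (one-sidedness of $\Theta_0(L)$), and the order accounting for the three error channels. This is essentially a re-derivation of Proposition 3.1 at $h=0$, and your key structural observations---that the mds/orthogonality structure kills the order-$T^{-\zeta}$ cross terms, leaving only the $O_p(T^{-2\zeta})$ product of contamination terms---are exactly what the paper's auxiliary lemmas (\cref{thm:x} in particular) deliver. The paper's route buys brevity and avoids duplicating work; yours makes the horizon-zero bias-cancellation mechanism explicit rather than hidden inside a citation. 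Two points you should tighten. First, the equivalence between the Cholesky-based $\hat{\nu}$ and the regression-based impact estimator is not something you can simply ``record'': it is precisely the content of \cref{thm:impact}, proved by partitioned-inverse algebra, and Frisch--Waugh alone gives only the within-regression representation, not the link to $\hat{C}$. Second, your channel (i) invokes \cref{thm:A}, but for the residual-based shock $\hat{x}_{0,t}$ the relevant estimation error is $(\vartheta-\hat{\vartheta}_0)'q_{j^*,t}$, which the paper controls via \cref{thm:x} and \cref{thm:OLS_error}; this is cosmetic, since those lemmas yield the same orders you claim.
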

\begin{proof}
	By \cref{thm:impact}, $\hat{\nu}=(0_{1 \times (j^*-1)},1,\hat{\overline{\nu}}')$, where the $j$-th element of $\hat{\overline{\nu}}$ equals the on-impact local projection of $y_{i^*+j,t}$ on $y_{j^*,t}$, controlling for $\underline{y}_{j^*,t}$ and $y_{t-1}$. The statement of the lemma is therefore a direct consequence of Proposition 3.1 and the fact that (by definition) $\xi_{0,i,t}=0$ for $i \leq j^*$.
\end{proof}

\begin{lem} \label{thm:x}
	Fix $h \geq 0$. Consider the regression of $y_{j^*,t}$ on $q_{j^*,t}\equiv (\underline{y}_{j^*,t}',y_{t-1}')'$, using the observations $t=1,2,\dots,T-h$:
	\[y_{j^*,t} = \hat{\vartheta}_h'q_{j^*, t} + \hat{x}_{h,t}.\]
	Note that the residuals $\hat{x}_{h,t}$ are consistent with the earlier definition in the proof of Proposition 3.1. Let $\underline{\lambda}_{j^*}'$ be the row vector containing the first $j^*-1$ elements of the last row of $-\tilde{H}^{-1}$ (where $\tilde{H}$ is defined in Assumption 2.1(iii)). Let $\lambda_{j^*}' \equiv (-\underline{\lambda}_{j^*}',1,0_{1 \times (n-j^*)})$ and $\vartheta \equiv (\underline{\lambda}_{j^*}', (\lambda_{j^*}'A))'$. Then:
	\begin{enumerate}[i)]
		\item \label{itm:thm_x_i} $\hat{\vartheta}_h - \vartheta = O_p(T^{-\zeta}+T^{-1/2})$.
		\item \label{itm:thm_x_iv_star} For $j \geq j^*$, $T^{-1}\sum_{t=1}^{T-h}(\hat{x}_{h,t}-\varepsilon_{j^*,t})\varepsilon_{j,t} = O_p(T^{-2\zeta}) + o_p(T^{-1/2})$.
		\item \label{itm:thm_x_iv} For $\ell \geq 1$, $T^{-1}\sum_{t=1}^{T-h}(\hat{x}_{h,t}-\varepsilon_{j^*,t})\varepsilon_{t+\ell} = O_p(T^{-2\zeta}) + o_p(T^{-1/2})$.
		\item \label{itm:thm_x_iii} $T^{-1}\sum_{t=1}^{T-h}(\hat{x}_{h,t}-\varepsilon_{j^*,t})\hat{x}_{h,t} = O_p(T^{-2\zeta}) + o_p(T^{-1/2})$.
		\item \label{itm:thm_x_ii} $T^{-1}\sum_{t=1}^{T-h} \hat{x}_{h,t}^2 \overset{p}{\to} \sigma_{j^*}^2$.
		\item \label{itm:thm_x_v} For any absolutely summable two-sided lag polynomial $B(L)$, $T^{-1}\sum_{t=1}^{T-h}(\hat{x}_{h,t}-\varepsilon_{j^*,t})B(L)\varepsilon_t = O_p(T^{-\zeta}+T^{-1/2})$.
	\end{enumerate}
\end{lem}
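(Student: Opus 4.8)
The plan is to reduce the claim to the convergence rate of $\hat{\vartheta}_h-\vartheta$ established in part \ref{itm:thm_x_i}, combined with a law of large numbers for sample cross-moments of linear processes in the shocks. First I would record an exact decomposition of the regression residual. Solving the recursive (triangular) system formed by the first $j^*$ rows of the model in Equation (2.1) for $\varepsilon_{j^*,t}$---exactly the algebra that produces the definitions of $\underline{\lambda}_{j^*}$, $\lambda_{j^*}$, and $\vartheta$, and that underlies part \ref{itm:thm_x_i}---gives
\[\varepsilon_{j^*,t} = y_{j^*,t}-\vartheta'q_{j^*,t}-T^{-\zeta}\rho(L)\varepsilon_t,\]
where $\rho(L)$ is a $1\times m$ absolutely summable lag polynomial whose coefficients are assembled from $\lambda_{j^*}'H\alpha(L)$ and powers of $A$, all absolutely summable by Assumption 2.1(ii),(v). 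Since $\hat{x}_{h,t}=y_{j^*,t}-\hat{\vartheta}_h'q_{j^*,t}$ by definition, subtracting yields the key identity
\[\hat{x}_{h,t}-\varepsilon_{j^*,t} = -(\hat{\vartheta}_h-\vartheta)'q_{j^*,t} + T^{-\zeta}\rho(L)\varepsilon_t.\]

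Multiplying by $B(L)\varepsilon_t$ and averaging over $t=1,\dots,T-h$ then splits the target into two pieces,
\[T^{-1}\sum_{t=1}^{T-h}(\hat{x}_{h,t}-\varepsilon_{j^*,t})B(L)\varepsilon_t = -(\hat{\vartheta}_h-\vartheta)'\Big[T^{-1}\sum_{t=1}^{T-h}q_{j^*,t}\,B(L)\varepsilon_t\Big] + T^{-\zeta}\Big[T^{-1}\sum_{t=1}^{T-h}(\rho(L)\varepsilon_t)\,B(L)\varepsilon_t\Big].\]
For the first piece I would invoke part \ref{itm:thm_x_i}, which gives $\hat{\vartheta}_h-\vartheta=O_p(T^{-\zeta}+T^{-1/2})$, so it remains only to show that the bracketed matrix is $O_p(1)$; their product then delivers the advertised $O_p(T^{-\zeta}+T^{-1/2})$. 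For the second piece it suffices that the bracketed average is $O_p(1)$, in which case this term contributes merely $O_p(T^{-\zeta})$.

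Both bracketed averages have the same structure: a sample mean of products of two absolutely summable linear processes in the shocks. Because $\{y_t\}$ admits the absolutely summable moving-average representation $y_t=\sum_{s\ge0}A^s(H+T^{-\zeta}H\alpha(L))\varepsilon_{t-s}$, each entry of $q_{j^*,t}$ is such a process, as is $\rho(L)\varepsilon_t$. I would expand each scalar product into finitely many terms of the form $(\psi(L)\varepsilon_{i,t})(\phi(L)\varepsilon_{j,t})$ with $\psi,\phi$ absolutely summable, apply \cref{thm:squared} to conclude that every such term has absolutely summable autocovariances under \cref{asn:mds}, and then use Chebyshev's inequality together with \citet[Thm.\ 7.1.1]{Brockwell1991} to conclude that each sample average converges in probability to its finite mean and hence is $O_p(1)$. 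The finitely many extra summands created by truncating at $T-h$ rather than $T$ are $O_p(T^{-1})$ and therefore negligible.

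The main obstacle is precisely this $O_p(1)$ control of the cross-moment $T^{-1}\sum_t q_{j^*,t}B(L)\varepsilon_t$ for a \emph{general} two-sided $B(L)$: unlike parts \ref{itm:thm_x_iv_star}--\ref{itm:thm_x_iii}, here no orthogonality is available to exploit, so the rate cannot be sharpened below $T^{-\zeta}$, and one must lean on the fourth-order cumulant summability in \cref{asn:mds} (operating through \cref{thm:squared}) to guarantee that the products of these linear processes obey a law of large numbers. Everything else is bookkeeping.
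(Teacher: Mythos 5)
Your key identity $\hat{x}_{h,t}-\varepsilon_{j^*,t} = -(\hat{\vartheta}_h-\vartheta)'q_{j^*,t} + T^{-\zeta}\alpha_{j^*}(L)\varepsilon_t$ is the same starting point as the paper's (your $\rho(L)$ is exactly the $j^*$-th row $\alpha_{j^*}(L)$ of $\alpha(L)$; no powers of $A$ enter), and your treatment of part vi) coincides with the paper's proof of that part: pair the rate for $\hat{\vartheta}_h-\vartheta$ with $O_p(1)$ laws of large numbers for the cross-moments, justified via \cref{thm:squared} and \citet[Thm.\ 7.1.1]{Brockwell1991}. But that is only one of the six claims. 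Part i) is invoked rather than proved: deriving the population equation $y_{j^*,t}=\vartheta'q_{j^*,t}+\varepsilon_{j^*,t}+T^{-\zeta}\alpha_{j^*}(L)\varepsilon_t$ is not enough; you also need the stochastic inputs of \cref{thm:OLS_error}, namely that $T^{-1}\sum_{t=1}^{T-h} q_{j^*,t}\varepsilon_{j^*,t}=O_p(T^{-\zeta}+T^{-1/2})$ (which itself exploits orthogonality of $\varepsilon_{j^*,t}$ to the contemporaneous shock content of $\underline{y}_{j^*,t}$, via the triangularity in Assumption 2.1(iii)), that $(T^{-1}\sum_{t=1}^{T-h} q_{j^*,t}q_{j^*,t}')^{-1}=O_p(1)$, and the $O_p(1)$ bound on the cross term with $\alpha_{j^*}(L)\varepsilon_t$.

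The more serious gap concerns parts ii)--v), which assert the rate $O_p(T^{-2\zeta})+o_p(T^{-1/2})$, not $O_p(T^{-\zeta}+T^{-1/2})$. Your strategy---the identity plus $O_p(1)$ bounds on cross-moments---can never produce the sharp rate, because the term $(\hat{\vartheta}_h-\vartheta)'\,T^{-1}\sum_{t=1}^{T-h} q_{j^*,t}\varepsilon_{t+\ell}'$ would then only be $O_p(T^{-\zeta}+T^{-1/2})\times O_p(1)$. The sharp rate requires showing that this cross-moment itself vanishes: the paper splits $q_{j^*,t}=\tilde{q}_{j^*,t}+\Delta_t$, where $\tilde{q}_{j^*,t}$ is built from the $\alpha(L)=0$ process $\tilde{y}_t$ and $T^{-1}\sum_{t}\|\Delta_t\|^2=O_p(T^{-2\zeta})$ by \cref{thm:ytilde}; then $T^{-1}\sum_{t}\tilde{q}_{j^*,t}\varepsilon_{t+\ell}'$ and $T^{-1}\sum_{t}(\alpha_{j^*}(L)\varepsilon_t)\varepsilon_{t+\ell}'$ have mean zero by the martingale-difference and orthogonality structure in \cref{asn:mds}, hence are $O_p(T^{-1/2})$, and multiplying by $\hat{\vartheta}_h-\vartheta$ gives $(T^{-\zeta}+T^{-1/2})^2=O_p(T^{-2\zeta})+o_p(T^{-1/2})$. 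Parts iv) and v) additionally need $T^{-1}\sum_{t}(\hat{x}_{h,t}-\varepsilon_{j^*,t})^2=O_p(T^{-2\zeta})+o_p(T^{-1/2})$, i.e., \cref{thm:x_minus_eps_2}. Your closing assessment inverts the difficulty: part vi), with its crude rate, is the easy case precisely because no orthogonality is needed there, while the orthogonality-based sharp rates in ii)--v) are the substance of the lemma---they are what ultimately deliver the $O(T^{-2\zeta})$ LP bias in Proposition 3.1---and they are not bookkeeping.
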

\begin{proof}
By Equation (2.1), the outcome variables in the model satisfy
\[ y_t = A y_{t-1} + H[I_m + T^{-\zeta}\alpha(L)]\varepsilon_t,\; t=1,2,\dots,T. \]
By Assumption 2.1(iii), the first $j^*$ rows of the matrix $H$ above are of the form $(\tilde{H},0_{j^* \times (j^*-m)})$, where $m$ is the number of shocks and $\tilde{H}$ is a $j^* \times j^*$ lower triangular matrix with 1's on the diagonal.

We can premultiply the first $j^*$ equations of (2.1) by $\tilde{H}^{-1}$ to obtain:
\[ [\tilde{H}^{-1}, 0_{j^* \times (n-j^*)}] y_t = [\tilde{H}^{-1}, 0_{j^* \times (n-j^*)}]  A y_{t-1} + [ I_{j^*}, 0_{j^* \times (m-j^*)}] [I_m + T^{-\zeta}\alpha(L)]\varepsilon_t . \] 
By definition, $-\underline{\lambda}_{j^*}'$ is the row vector containing the first $j^*-1$ elements of the last row of $\tilde{H}^{-1}$ and $\lambda_{j^*}' \equiv (-\underline{\lambda}_{j^*}',1,0_{1 \times (n-j^*)})$. Thus, we can re-write the $j^*$-th equation above as
\[ [ -\underline{\lambda}_{j^*}', 1  , 0_{j^* \times (n-j^*)}] y_t = \lambda'_{j^*}A y_{t-1} +  \varepsilon_{j^*,t} + T^{-\zeta}\alpha_{j^*}(L)\varepsilon_t, \] 
where $\alpha_{j^*}(L)$ is the $j^*$-th row of $\alpha(L)$. Re-arranging terms we get 
\[y_{j^*,t} = \vartheta'q_{j^*,t} + \varepsilon_{j^*,t} + T^{-\zeta}\alpha_{j^*}(L)\varepsilon_t,\]
where $\vartheta \equiv (\underline{\lambda}_{j^*}', (\lambda_{j^*}'A))'$ and $q_{j^*,t}\equiv (\underline{y}_{j^*,t}',y_{t-1}')'$. In a slight abuse of notation, and for notational simplicity, we henceforth replace $q_{j^*,t}$ by $q_t$. 

Statement (\ref{itm:thm_x_i}) follows from standard OLS algebra if we can show that a) $T^{-1}\sum_{t=1}^{T-h}q_t \varepsilon_{j^*,t}=O_p(T^{-\zeta} + T^{-1/2})$, b) $(T^{-1}\sum_{t=1}^{T-h} q_t q_t')^{-1}=O_p(1)$, and c) $T^{-\zeta-1}\sum_{t=1}^{T-h}q_t(\alpha_{j^*}(L)\varepsilon_t)=O_p(T^{-\zeta})$. \cref{thm:OLS_error} establishes these results.

The proofs of statements (\ref{itm:thm_x_iv_star}) and (\ref{itm:thm_x_iv}) are similar, so we focus on the latter. By definition of $\hat{x}_{h,t}$,  we have $\hat{x}_{h,t} - \varepsilon_{j^*,t} = (\vartheta-\hat{\vartheta}_h)'q_t + T^{-\zeta}\alpha_{j^*}(L)\varepsilon_t$. Let $\tilde{q}_{t} \equiv (\underline{\tilde{y}}_{j^*,t}',\tilde{y}_{t-1}')'$ and $\Delta_{t} \equiv q_t - \tilde{q}_{t}$. Then
\begin{align}
T^{-1}\sum_{t=1}^{T-h}(\hat{x}_{h,t}-\varepsilon_{j^*,t})\varepsilon_{t+\ell} &= (\vartheta-\hat{\vartheta}_h)' \left( \frac{1}{T} \sum_{t=1}^{T-h} \Delta_t \varepsilon_{t+\ell} \right) \label{eqn:aux1_cross_terms}\\
&\quad +  (\vartheta-\hat{\vartheta}_h)' \left( \frac{1}{T} \sum_{t=1}^{T-h} \tilde{q}_{t} \varepsilon_{t+\ell} \right) \label{eqn:aux2_cross_terms} \\
&\quad + \frac{1}{T^{\zeta}}  \left( \frac{1}{T} \sum_{t=1}^{T-h} \left( \alpha_{j^*}(L) \varepsilon_{t}  \right) \varepsilon_{t+\ell} \right) \label{eqn:aux3_cross_terms}.
\end{align}
\noindent By (\ref{itm:thm_x_i}), $(\vartheta-\hat{\vartheta}_h) = O_p(T^{-\zeta} + T^{-1/2})$. \cref{thm:ytilde}, \cref{asn:mds}, and Cauchy-Schwarz imply that the sample average in parenthesis in \eqref{eqn:aux1_cross_terms} is $O_{p}\left(  T^{-\zeta} \right)$. The two sample averages in parentheses in lines \eqref{eqn:aux2_cross_terms}--\eqref{eqn:aux3_cross_terms} have mean zero, since the shocks are white noise and mutually orthogonal, so \cref{thm:squared} and \citet[Thm.\ 7.1.1]{Brockwell1991} imply that they are each $O_p(T^{-1/2})$. It follows that
\[ T^{-1}\sum_{t=1}^{T-h}(\hat{x}_{h,t}-\varepsilon_{j^*,t})\varepsilon_{t+\ell} = O_p(T^{-2\zeta}) + o_p(T^{-1/2}).\]

For statement (\ref{itm:thm_x_iii}), note that
\[ T^{-1}\sum_{t=1}^{T-h}(\hat{x}_{h,t}-\varepsilon_{j^*,t})\hat{x}_{h,t}  = T^{-1}\sum_{t=1}^{T-h}(\hat{x}_{h,t}-\varepsilon_{j^*,t})^2 + T^{-1}\sum_{t=1}^{T-h}(\hat{x}_{h,t}-\varepsilon_{j^*,t})\varepsilon_{j^*,t}. \]
\cref{thm:x_minus_eps_2} shows that $T^{-1}\sum_{t=1}^{T-h}(\hat{x}_{h,t}-\varepsilon_{j^*,t})^2 = O_p(T^{-2\zeta}) + o_p(T^{-1/2})$. This result, combined with (\ref{itm:thm_x_iv_star}), implies that statement (\ref{itm:thm_x_iii}) holds. 

For statement (\ref{itm:thm_x_ii}), note that
\begin{align*}
T^{-1}\sum_{t=1}^{T-h}(\hat{x}_{h,t})^2 &= T^{-1}\sum_{t=1}^{T-h}(\hat{x}_{h,t}-\varepsilon_{j^*,t}+\varepsilon_{j^*,t} )^2  \\
&=  T^{-1}\sum_{t=1}^{T-h}(\hat{x}_{h,t}-\varepsilon_{j^*,t})^2 - 2  T^{-1}\sum_{t=1}^{T-h}(\hat{x}_{h,t}-\varepsilon_{j^*,t}) \varepsilon_{j^*,t}  +  T^{-1}\sum_{t=1}^{T-h} \varepsilon_{j^*,t}^2. 
\end{align*}
\cref{thm:x_minus_eps_2} and statement (\ref{itm:thm_x_iv_star}) imply that the first two terms converge in probability to zero. Since $T^{-1}\sum_{t=1}^{T-h} \varepsilon_{j^*,t}^2 \overset{p}{\to} \sigma_{j^*}^2$ by \cref{thm:squared} and \citet[Thm.\ 7.1.1]{Brockwell1991}, statement (\ref{itm:thm_x_ii}) holds.

Finally, statement (\ref{itm:thm_x_v}) obtains by decomposing
\begin{align*}
T^{-1}\sum_{t=1}^{T-h}B(L)\varepsilon_t(\hat{x}_{h,t}-\varepsilon_{j^*,t}) &= T^{-1}\sum_{t=1}^{T-h} B(L)\varepsilon_t q_t'(\vartheta-\hat{\vartheta}_h) +T^{-\zeta}T^{-1}\sum_{t=1}^{T-h} B(L)\varepsilon_t[\alpha_{j^*}(L)\varepsilon_t]' \\
&= O_p(1) \times O_p(T^{-\zeta}+T^{-1/2}) + T^{-\zeta} \times O_p(1),
\end{align*}
where the last line follows from statement (\ref{itm:thm_x_i}), \cref{thm:ytilde}, and \cref{thm:squared}.\end{proof}

\subsection{Auxiliary numerical lemma}

\begin{lem} \label{thm:impact}
Define $\overline{y}_{i,t} \equiv (y_{i+1,t},y_{i+2,t},\dots,y_{nt})'$ to be the (possibly empty) vector of variables that are ordered after $y_{i,t}$ in $y_t$. Partition
\[\hat{\Sigma}=\begin{pmatrix}
	\hat{\Sigma}_{11} & \hat{\Sigma}_{12} & \hat{\Sigma}_{13} \\
	\hat{\Sigma}_{21} & \hat{\Sigma}_{22} & \hat{\Sigma}_{23} \\
	\hat{\Sigma}_{31} & \hat{\Sigma}_{32} & \hat{\Sigma}_{33}
\end{pmatrix},\quad \hat{C}=\begin{pmatrix}
	\hat{C}_{11} & 0 & 0 \\
	\hat{C}_{21} & \hat{C}_{22} & 0 \\
	\hat{C}_{31} & \hat{C}_{32} & \hat{C}_{33}
\end{pmatrix},\]
conformably with $y_t=(\underline{y}_{j^*,t}',y_{j^*,t},\overline{y}_{j^*,t}')'$, where $\hat{\Sigma}=\hat{C}\hat{C}'$ (in particular, $\hat{C}_{22}=\hat{C}_{j^*,j^*}$). Then
\begin{equation} \label{eqn:cholesk}
(\hat{\Sigma}_{31}, \hat{\Sigma}_{32})\begin{pmatrix}
	\hat{\Sigma}_{11} & \hat{\Sigma}_{12} \\
	\hat{\Sigma}_{21} & \hat{\Sigma}_{22}
\end{pmatrix}^{-1}e_{j^*,j^*} = \hat{C}_{22}^{-1}\hat{C}_{32}.
\end{equation}
\end{lem}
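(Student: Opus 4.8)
The plan is to exploit the block lower-triangular structure of the Cholesky factor $\hat{C}$ to rewrite each $\hat{\Sigma}$-block appearing in \eqref{eqn:cholesk} in terms of the blocks of $\hat{C}$, and then cancel. Throughout, write $G \equiv \begin{pmatrix} \hat{C}_{11} & 0 \\ \hat{C}_{21} & \hat{C}_{22} \end{pmatrix}$ for the leading $j^* \times j^*$ block of $\hat{C}$, which is itself lower triangular; this is exactly the matrix whose Cholesky product will turn out to be the inner matrix inverted in \eqref{eqn:cholesk}.

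First I would record the consequences of triangularity. Expanding $\hat{\Sigma} = \hat{C}\hat{C}'$ block by block and using that the first two block-rows of $\hat{C}$ vanish above the diagonal, the upper-left $j^* \times j^*$ block of $\hat{\Sigma}$ equals $GG'$; explicitly, $\hat{\Sigma}_{11} = \hat{C}_{11}\hat{C}_{11}'$, $\hat{\Sigma}_{12} = \hat{C}_{11}\hat{C}_{21}'$, and $\hat{\Sigma}_{22} = \hat{C}_{21}\hat{C}_{21}' + \hat{C}_{22}\hat{C}_{22}'$, so that $\begin{pmatrix} \hat{\Sigma}_{11} & \hat{\Sigma}_{12} \\ \hat{\Sigma}_{21} & \hat{\Sigma}_{22}\end{pmatrix}^{-1} = (G')^{-1}G^{-1}$. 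The same expansion gives $\hat{\Sigma}_{31} = \hat{C}_{31}\hat{C}_{11}'$ and $\hat{\Sigma}_{32} = \hat{C}_{31}\hat{C}_{21}' + \hat{C}_{32}\hat{C}_{22}'$, which package together as $(\hat{\Sigma}_{31},\hat{\Sigma}_{32}) = (\hat{C}_{31},\hat{C}_{32})G'$.

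Next I would substitute these two identities into the left-hand side of \eqref{eqn:cholesk}: the factor $G'$ cancels against $(G')^{-1}$, collapsing the expression to $(\hat{C}_{31},\hat{C}_{32})G^{-1}e_{j^*,j^*}$. Finally I would evaluate $G^{-1}e_{j^*,j^*}$, the last column of the inverse of the block lower-triangular matrix $G$. By the standard formula for such an inverse, this last column is $(0_{1\times(j^*-1)},\,\hat{C}_{22}^{-1})'$, since the off-diagonal contribution $-\hat{C}_{22}^{-1}\hat{C}_{21}\hat{C}_{11}^{-1}$ occupies only the first $j^*-1$ entries. Hence the left-hand side reduces to $\hat{C}_{32}\hat{C}_{22}^{-1} = \hat{C}_{22}^{-1}\hat{C}_{32}$, where the last equality holds because block $22$ is one-dimensional so $\hat{C}_{22}$ is a scalar; this is precisely the right-hand side of \eqref{eqn:cholesk}.

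There is no genuine obstacle here, as the argument is pure linear algebra; the only points demanding care are the block bookkeeping. The essential observations are (i) that the leading principal block of $\hat{\Sigma}$ is the Cholesky product $GG'$ of the leading block of $\hat{C}$ \emph{alone}, a triangularity fact that is exactly what makes every third-group quantity cancel, and (ii) that $e_{j^*,j^*}$ selects the $j^*$-th (last) coordinate, which lies in the one-dimensional middle block and therefore isolates precisely $\hat{C}_{22}^{-1}$. Invertibility of $G$ (equivalently $\hat{C}_{22} \neq 0$) is automatic, since $\hat{C}$ is the Cholesky factor of the positive definite sample covariance matrix $\hat{\Sigma}$.
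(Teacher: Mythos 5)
Your proof is correct. Both you and the paper start from the identical first step: expanding $\hat{\Sigma}=\hat{C}\hat{C}'$ block by block to write $\hat{\Sigma}_{11}$, $\hat{\Sigma}_{12}$, $\hat{\Sigma}_{22}$, $\hat{\Sigma}_{31}$, $\hat{\Sigma}_{32}$ in terms of the blocks of $\hat{C}$. Where you diverge is in how the inverted matrix is handled. The paper applies the partitioned-inverse (Schur complement) formula directly to $\bigl(\begin{smallmatrix}\hat{\Sigma}_{11} & \hat{\Sigma}_{12}\\ \hat{\Sigma}_{21} & \hat{\Sigma}_{22}\end{smallmatrix}\bigr)$, computes its last column explicitly---which requires noticing that the Schur complement $\hat{C}_{21}\hat{C}_{21}' + \hat{C}_{22}^2-\hat{C}_{21}\hat{C}_{11}'(\hat{C}_{11}\hat{C}_{11}')^{-1}\hat{C}_{11}\hat{C}_{21}'$ collapses to $\hat{C}_{22}^2$---and then multiplies through by $(\hat{\Sigma}_{31},\hat{\Sigma}_{32})$ and cancels term by term. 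You never invert the $\hat{\Sigma}$-block at all: you recognize it as $GG'$, recognize $(\hat{\Sigma}_{31},\hat{\Sigma}_{32})=(\hat{C}_{31},\hat{C}_{32})G'$, and let $G'$ cancel against $(G')^{-1}$, leaving only the last column of the triangular matrix $G^{-1}$, which is trivially $(0_{1\times(j^*-1)},\hat{C}_{22}^{-1})'$. The two routes are computationally equivalent, but yours is more structural: the cancellation explains \emph{why} every term involving $\hat{C}_{11}$, $\hat{C}_{21}$, $\hat{C}_{31}$ must vanish rather than verifying that it does, and it would extend verbatim to a multidimensional middle block (yielding $\hat{C}_{32}\hat{C}_{22}^{-1}$; as you correctly note, the scalar nature of block $22$ is what lets you commute this to $\hat{C}_{22}^{-1}\hat{C}_{32}$). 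The paper's version is more self-contained in that it exhibits the relevant column of the inverse explicitly, at the cost of the Schur-complement simplification your argument never needs. Both proofs, yours and the paper's, implicitly assume $\hat{\Sigma}$ is nonsingular so that $\hat{C}_{11}$ and $\hat{C}_{22}$ are invertible; your closing remark makes this explicit, which is a small plus.
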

Note that the lemma implies $\hat{\beta}_0=\hat{\delta}_0$: If $i^* < j^*$ or $i^*=j^*$, then both estimators equal 0 or 1 (by definition), respectively; if $i^*>j^*$, then $\hat{\beta}_0$ is defined as the $i^*-j^*$ element of the left-hand side of \eqref{eqn:cholesk} (by Frisch-Waugh), while $\hat{\delta}_0$ is defined as the $i^*-j^*$ element of the right-hand side of \eqref{eqn:cholesk}.
\begin{proof}
From the relationship $\hat{\Sigma}=\hat{C}\hat{C}'$, we get
\[\begin{pmatrix}
	\hat{\Sigma}_{11} & \hat{\Sigma}_{12} \\
	\hat{\Sigma}_{21} & \hat{\Sigma}_{22} \\
	\hat{\Sigma}_{31} & \hat{\Sigma}_{32}
\end{pmatrix} = \begin{pmatrix}
\hat{C}_{11}\hat{C}_{11}' & \hat{C}_{11}\hat{C}_{21}' \\
\hat{C}_{21}\hat{C}_{11}' & \hat{C}_{21}\hat{C}_{21}' + \hat{C}_{22}^2 \\
\hat{C}_{31}\hat{C}_{11}' & \hat{C}_{31}\hat{C}_{21}' + \hat{C}_{32}\hat{C}_{22}
\end{pmatrix}.\]
The partitioned inverse formula implies
\begin{align*}
\begin{pmatrix}
	\hat{\Sigma}_{11} & \hat{\Sigma}_{12} \\
	\hat{\Sigma}_{21} & \hat{\Sigma}_{22}
\end{pmatrix}^{-1}e_{j^*,j^*} &= \frac{1}{\hat{C}_{21}\hat{C}_{21}' + \hat{C}_{22}^2-\hat{C}_{21}\hat{C}_{11}'(\hat{C}_{11}\hat{C}_{11}')^{-1}\hat{C}_{11}\hat{C}_{21}'}\begin{pmatrix}
	-(\hat{C}_{11}\hat{C}_{11}')^{-1}\hat{C}_{11}\hat{C}_{21}' \\
	1
\end{pmatrix} \\
&= \frac{1}{\hat{C}_{22}^2}\begin{pmatrix}
	-\hat{C}_{11}^{-1\prime}\hat{C}_{21}' \\
	1
\end{pmatrix},
\end{align*}
so
\[(\hat{\Sigma}_{31}, \hat{\Sigma}_{32})\begin{pmatrix}
	\hat{\Sigma}_{11} & \hat{\Sigma}_{12} \\
	\hat{\Sigma}_{21} & \hat{\Sigma}_{22}
\end{pmatrix}^{-1}e_{j^*,j^*} = \frac{1}{\hat{C}_{22}^2}\left(-\hat{C}_{31}\hat{C}_{11}'\hat{C}_{11}^{-1\prime}\hat{C}_{21}'+\hat{C}_{31}\hat{C}_{21}' + \hat{C}_{32}\hat{C}_{22}\right) = \frac{1}{\hat{C}_{22}}\hat{C}_{32}. \qedhere\]
\end{proof}

\subsection{Auxiliary asymptotic lemmas}

\begin{lem} \label{thm:ytilde}
	$T^{-1}\sum_{t=1}^T \|y_t-\tilde{y}_t\|^2 = O_p(T^{-2\zeta})$ and $T^{-1}\sum_{t=1}^T u_t(y_{t-1}-\tilde{y}_{t-1})' = O_p(T^{-2\zeta}+T^{-\zeta-1/2})$, where $u_t \equiv y_t - A y_{t-1}$.
\end{lem}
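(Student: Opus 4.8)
The plan is to work directly with the difference process $d_t \equiv y_t - \tilde{y}_t$. Subtracting the defining recursion $\tilde{y}_t = A\tilde{y}_{t-1} + H\varepsilon_t$ from Equation (2.1), $y_t = Ay_{t-1} + H\varepsilon_t + T^{-\zeta}H\alpha(L)\varepsilon_t$, the $H\varepsilon_t$ terms cancel and I obtain the linear recursion $d_t = A d_{t-1} + T^{-\zeta} H\alpha(L)\varepsilon_t$. Because $A$ is stable (Assumption 2.1(ii)), the unique stationary solution is $d_t = T^{-\zeta} w_t$, where $w_t \equiv \sum_{s=0}^\infty A^s H\alpha(L)\varepsilon_{t-s}$ is a stationary linear process that does not depend on $T$. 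Choosing $\lambda \in [0,1)$ and $C>0$ with $\|A^s\| \leq C\lambda^s$, and using the absolute summability of the coefficients of $\alpha(L)$ (Assumption 2.1(v)), the moving-average coefficients of $w_t$ are absolutely summable; in particular $w_t$ has finite second moments, and since $\alpha(L)$ is one-sided, $w_t$ is measurable with respect to $\lbrace \varepsilon_s \rbrace_{s \leq t-1}$. These two facts drive the entire argument.

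For the first statement I would write $T^{-1}\sum_{t=1}^T \|y_t-\tilde{y}_t\|^2 = T^{-2\zeta}\,T^{-1}\sum_{t=1}^T \|w_t\|^2$. Because $w_t$ is a linear process in the shocks, \cref{thm:squared} implies that $\|w_t\|^2$ has absolutely summable autocovariances, so \citet[Thm.\ 7.1.1]{Brockwell1991} yields $T^{-1}\sum_{t=1}^T \|w_t\|^2 \overset{p}{\to} E\|w_t\|^2 < \infty$. Hence the left-hand side is $O_p(T^{-2\zeta})$.

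For the second statement I would substitute $u_t = H\varepsilon_t + T^{-\zeta}H\alpha(L)\varepsilon_t$ and $y_{t-1}-\tilde{y}_{t-1} = T^{-\zeta} w_{t-1}$, which splits the cross-moment as
\[ T^{-1}\sum_{t=1}^T u_t(y_{t-1}-\tilde{y}_{t-1})' = T^{-\zeta}\,T^{-1}\sum_{t=1}^T H\varepsilon_t w_{t-1}' + T^{-2\zeta}\,T^{-1}\sum_{t=1}^T H\alpha(L)\varepsilon_t\, w_{t-1}'. \]
The sample average in the second term converges in probability to a constant matrix by \cref{thm:squared} and \citet[Thm.\ 7.1.1]{Brockwell1991}, so that term is $O_p(T^{-2\zeta})$. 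For the first term, the key observation is that $w_{t-1}$ is measurable with respect to $\lbrace \varepsilon_s \rbrace_{s\leq t-2}$ while $\varepsilon_t$ is a martingale difference sequence (\cref{asn:mds}); hence $E[H\varepsilon_t w_{t-1}'] = 0$. Applying \cref{thm:squared} to obtain absolutely summable autocovariances of this mean-zero product of shocks, and then \citet[Thm.\ 7.1.1]{Brockwell1991} together with Chebyshev's inequality, gives $T^{-1}\sum_{t=1}^T H\varepsilon_t w_{t-1}' = O_p(T^{-1/2})$, so the first term is $O_p(T^{-\zeta-1/2})$. Summing the two contributions yields the claimed $O_p(T^{-2\zeta}+T^{-\zeta-1/2})$.

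The main obstacle is the mean-zero cross term $T^{-1}\sum_{t=1}^T H\varepsilon_t w_{t-1}'$: establishing its $O_p(T^{-1/2})$ rate requires controlling the autocovariances of products of shocks evaluated at different leads and lags, which is precisely what the vector fourth-order cumulant summability condition in \cref{asn:mds} delivers through \cref{thm:squared}; absent shock independence, one cannot reduce this to a sum of products of independent white noise. Everything else is bookkeeping — the cancellation producing the recursion for $d_t$, the geometric bound on $\|A^s\|$, and the two applications of the law of large numbers.
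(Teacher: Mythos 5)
Your proposal is correct and follows essentially the same route as the paper: the same identity $y_t-\tilde{y}_t = T^{-\zeta}(I_n-AL)^{-1}H\alpha(L)\varepsilon_t$ (you derive it via the recursion for $d_t$, the paper by direct substitution), the same split of $u_t(y_{t-1}-\tilde{y}_{t-1})'$ into a mean-zero martingale-difference cross term of order $T^{-\zeta-1/2}$ and an $\alpha(L)\varepsilon_t$ term of order $T^{-2\zeta}$, and the same reliance on \cref{thm:squared} with \citet[Thm.\ 7.1.1]{Brockwell1991} and Chebyshev. The only cosmetic difference is in the first statement, where the paper just bounds $E[T^{2\zeta}\|y_t-\tilde{y}_t\|^2]$ via \citet[Prop.\ 3.1.1]{Brockwell1991} and applies Markov's inequality, whereas you invoke a full law of large numbers for $\|w_t\|^2$ --- slightly heavier machinery for the same conclusion.
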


\begin{proof}
Using Equation (2.1) and the definition $\tilde{y}_t \equiv (I_n-AL)^{-1}H\varepsilon_t$, we have
\[y_t-\tilde{y}_t = T^{-\zeta} (I_n-AL)^{-1}H\alpha(L) \varepsilon_t.\]
The lag polynomial $(I_n-AL)^{-1}H\alpha(L)$ is absolutely summable by virtue of being a product of absolutely summable polynomials, see Assumption 2.1(ii) and (v).

\citet[Prop.\ 3.1.1]{Brockwell1991} implies $E[T^{2\zeta}\|y_t-\tilde{y}_t\|^2] < \infty$. The first statement of the lemma then follows from Markov's inequality.

In order to establish the second part of the lemma, note that
\[ \frac{1}{T}\sum_{t=1}^T u_{t} \left( y_{t-1} - \tilde{y}_{t-1}  \right)' = H\left(\frac{1}{T}\sum_{t=1}^T \varepsilon_{t}  \left( y_{t-1} - \tilde{y}_{t-1}  \right)'\right) + T^{-\zeta}H\left(\frac{1}{T}\sum_{t=1}^T \alpha(L) \varepsilon_{t}  \left( y_{t-1} - \tilde{y}_{t-1}  \right)'\right).   \]	
The product process $T^{\zeta}\varepsilon_t \otimes (y_{t-1}-\tilde{y}_{t-1})$ is a martingale difference sequence with finite variance by \cref{thm:squared} (note that this is a standard stochastic process and not a triangular array). Hence, the first sample average in parenthesis on the right-hand side above is $O_p(T^{-\zeta-1/2})$ by Chebyshev's inequality. The second sample average in parenthesis on the right-hand side above is  $O_p(T^{-\zeta})$ by \cref{thm:squared}. Hence, the entire right-hand side in the above display is $O_p(T^{-\zeta-1/2} + T^{-2\zeta})$, as claimed.
\end{proof}

\begin{lem} \label{thm:uy}
	\[T^{-1}\sum_{t=1}^T u_t y_{t-1}' = T^{-\zeta}H\sum_{\ell=1}^{\infty} \alpha_{\ell}DH'(A')^{\ell-1} + T^{-1}\sum_{t=1}^T H\varepsilon_t \tilde{y}_{t-1}' + o_p(T^{-\zeta}).\]
\end{lem}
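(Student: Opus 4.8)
The plan is to substitute the model's expression for $u_t$ and split the sum into a ``noise'' piece and a ``misspecification'' piece, then identify the two leading terms in the claim with these pieces. From Equation (2.1) and the definition $u_t \equiv y_t - Ay_{t-1}$, we have $u_t = H\varepsilon_t + T^{-\zeta}H\alpha(L)\varepsilon_t$, so that
\[T^{-1}\sum_{t=1}^T u_t y_{t-1}' = \underbrace{T^{-1}\sum_{t=1}^T H\varepsilon_t y_{t-1}'}_{(\mathrm{I})} + T^{-\zeta}\underbrace{T^{-1}\sum_{t=1}^T H\alpha(L)\varepsilon_t y_{t-1}'}_{(\mathrm{II})}.\]
I expect $(\mathrm{I})$ to produce the stochastic term $T^{-1}\sum_t H\varepsilon_t\tilde{y}_{t-1}'$ and $T^{-\zeta}\cdot(\mathrm{II})$ to produce the deterministic term $T^{-\zeta}H\sum_{\ell=1}^\infty\alpha_\ell DH'(A')^{\ell-1}$, with all remaining pieces collapsing into $o_p(T^{-\zeta})$.

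For $(\mathrm{I})$, I would replace $y_{t-1}$ by $\tilde{y}_{t-1}$ and bound the replacement error $H\cdot T^{-1}\sum_t \varepsilon_t(y_{t-1}-\tilde{y}_{t-1})'$. The proof of \cref{thm:ytilde} already shows that $T^{\zeta}\varepsilon_t\otimes(y_{t-1}-\tilde{y}_{t-1})$ is a martingale difference sequence with finite variance, so this average is $O_p(T^{-\zeta-1/2})=o_p(T^{-\zeta})$; hence $(\mathrm{I}) = T^{-1}\sum_t H\varepsilon_t\tilde{y}_{t-1}' + o_p(T^{-\zeta})$. In $(\mathrm{II})$, I would likewise replace $y_{t-1}$ by $\tilde{y}_{t-1}$; since $y_{t-1}-\tilde{y}_{t-1}$ is itself of order $T^{-\zeta}$ in mean square, \cref{thm:squared} and Cauchy-Schwarz give $T^{-1}\sum_t H\alpha(L)\varepsilon_t(y_{t-1}-\tilde{y}_{t-1})' = O_p(T^{-\zeta})$, which contributes $O_p(T^{-2\zeta})=o_p(T^{-\zeta})$ once multiplied by the prefactor $T^{-\zeta}$.

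It then remains to apply a law of large numbers to $T^{-1}\sum_t H\alpha(L)\varepsilon_t\tilde{y}_{t-1}'$. The key calculation, and the step I expect to be most delicate, is the mean: writing $\alpha(L)\varepsilon_t = \sum_{k=1}^\infty\alpha_k\varepsilon_{t-k}$ and $\tilde{y}_{t-1} = \sum_{s=0}^\infty A^s H\varepsilon_{t-1-s}$, the martingale-difference property together with $\var(\varepsilon_t)=D$ forces $E[\varepsilon_{t-k}\varepsilon_{t-1-s}']=D$ exactly when $s=k-1$ and $0$ otherwise, so that $E[H\alpha(L)\varepsilon_t\tilde{y}_{t-1}'] = H\sum_{\ell=1}^\infty\alpha_\ell DH'(A')^{\ell-1}$. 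Because each entry of the matrix $H\alpha(L)\varepsilon_t\tilde{y}_{t-1}'$ is a product of absolutely summable linear processes in the shocks, \cref{thm:squared} shows it has summable autocovariances, so \citet[Thm.\ 7.1.1]{Brockwell1991} and Chebyshev's inequality give convergence to this mean at rate $O_p(T^{-1/2})$. Multiplying by $T^{-\zeta}$ turns this remainder into $o_p(T^{-\zeta})$, and collecting the contributions from $(\mathrm{I})$ and $(\mathrm{II})$ yields the stated expansion. The main obstacle is essentially the bookkeeping in the mean computation: correctly aligning the lag indices of the two one-sided filters and invoking shock orthogonality rather than independence, consistent with \cref{asn:mds}.
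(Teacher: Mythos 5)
Your proof is correct and follows essentially the same route as the paper's: the same decomposition of $u_t$ into $H\varepsilon_t + T^{-\zeta}H\alpha(L)\varepsilon_t$, the same substitution of $\tilde{y}_{t-1}$ for $y_{t-1}$ with errors controlled exactly as in \cref{thm:ytilde}, and the same law-of-large-numbers step (\cref{thm:squared} plus Brockwell--Davis Thm.\ 7.1.1) with the identical mean computation $E[\alpha(L)\varepsilon_t\tilde{y}_{t-1}'] = \sum_{\ell=1}^\infty \alpha_\ell D H'(A')^{\ell-1}$. The only difference is cosmetic: you expand $u_t$ before replacing $y_{t-1}$ by $\tilde{y}_{t-1}$ and thereby inline the two error bounds that the paper packages as the second statement of \cref{thm:ytilde}, whereas the paper cites that lemma and substitutes first.
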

\begin{proof}
	\begin{align*}
		T^{-1}\sum_{t=1}^T u_t y_{t-1}' &= T^{-1}\sum_{t=1}^T u_t \tilde{y}_{t-1}' + \underbrace{T^{-1}\sum_{t=1}^T u_t (y_{t-1}-\tilde{y}_{t-1})'}_{=o_p(T^{-\zeta}) \text{ by \cref{thm:ytilde}}} \\
		&= T^{-1}\sum_{t=1}^T H\varepsilon_t \tilde{y}_{t-1}' + T^{-\zeta-1}\sum_{t=1}^T H\alpha(L)\varepsilon_t \tilde{y}_{t-1}' + o_p(T^{-\zeta}) \\
		&= T^{-1}\sum_{t=1}^T H\varepsilon_t \tilde{y}_{t-1}' + T^{-\zeta}H\left( E[\alpha(L)\varepsilon_t \tilde{y}_{t-1}'] + o_p(1)\right) + o_p(T^{-\zeta}).
	\end{align*}
	The last line invokes a law of large numbers, which applies because $\tilde{y}_t$ is an absolutely summable linear filter of the shocks, so we can apply \cref{thm:squared} in conjunction with \citet[Thm.\ 7.1.1]{Brockwell1991} and Chebyshev's inequality. Finally, note that
	\[E[\alpha(L)\varepsilon_t \tilde{y}_{t-1}'] = \sum_{\ell=1}^\infty \sum_{s=0}^{\infty} \alpha_\ell  E[\varepsilon_{t-\ell}\varepsilon_{t-s-1}']H'(A')^{s} = \sum_{\ell=1}^{\infty} \alpha_{\ell}DH'(A')^{\ell-1}. \qedhere\]
\end{proof}

\begin{lem} \label{thm:y_second_moment}
	$T^{-1}\sum_{t=1}^T y_{t-1}y_{t-1}' \overset{p}{\to} S$.
\end{lem}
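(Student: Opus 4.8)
The plan is to reduce the claim to a law of large numbers for the purely stationary linear process $\tilde{y}_t$, and then to show that replacing $\tilde{y}_{t-1}$ by $y_{t-1}$ is asymptotically negligible. First I would write the algebraic identity
\[
\frac{1}{T}\sum_{t=1}^T y_{t-1}y_{t-1}' = \frac{1}{T}\sum_{t=1}^T \tilde{y}_{t-1}\tilde{y}_{t-1}' + \frac{1}{T}\sum_{t=1}^T w_{t-1}\tilde{y}_{t-1}' + \frac{1}{T}\sum_{t=1}^T \tilde{y}_{t-1}w_{t-1}' + \frac{1}{T}\sum_{t=1}^T w_{t-1}w_{t-1}',
\]
where $w_t \equiv y_t - \tilde{y}_t$. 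The strategy is to show that the leading term converges in probability to $S$, while the three remaining terms are $o_p(1)$.

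For the leading term, recall that $\tilde{y}_t = \sum_{s=0}^\infty A^s H\varepsilon_{t-s}$ is a mean-zero, absolutely summable linear filter of the shocks, so $S = \var(\tilde{y}_t) = E[\tilde{y}_{t-1}\tilde{y}_{t-1}']$ by stationarity. Each entry of $\tilde{y}_{t-1}\tilde{y}_{t-1}'$ is a product of two scalar linear combinations of the shocks, hence a stationary process whose mean is the corresponding entry of $S$. I would invoke \cref{thm:squared} to conclude that each such quadratic functional has absolutely summable autocovariances; then \citet[Thm.\ 7.1.1]{Brockwell1991} together with Chebyshev's inequality gives convergence in probability of each sample average to the corresponding entry of $S$. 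Since this holds entrywise and the matrices are of fixed dimension, $T^{-1}\sum_{t=1}^T \tilde{y}_{t-1}\tilde{y}_{t-1}' \overset{p}{\to} S$ in Frobenius norm. As a byproduct this yields $T^{-1}\sum_{t=1}^T \|\tilde{y}_{t-1}\|^2 \overset{p}{\to} \tr(S) = O_p(1)$, which I will use below.

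For the remainder terms, \cref{thm:ytilde} already gives $T^{-1}\sum_{t=1}^T \|w_{t-1}\|^2 = O_p(T^{-2\zeta})$, so the last term is $O_p(T^{-2\zeta}) = o_p(1)$ directly. For the two cross terms, Cauchy--Schwarz (applied entrywise, or to the Frobenius norm) bounds them by $\big(T^{-1}\sum_{t=1}^T \|w_{t-1}\|^2\big)^{1/2}\big(T^{-1}\sum_{t=1}^T \|\tilde{y}_{t-1}\|^2\big)^{1/2} = O_p(T^{-\zeta})\cdot O_p(1) = o_p(1)$. Combining the four pieces completes the proof. I expect the only non-mechanical step to be the law of large numbers for the leading term: the substance there is the absolute summability of the autocovariances of the quadratic functionals of the shocks, which is precisely where the fourth-order cumulant summability condition of \cref{asn:mds} (encapsulated in \cref{thm:squared}) is needed; everything else is routine Cauchy--Schwarz bookkeeping.
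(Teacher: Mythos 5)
Your proposal is correct and follows essentially the same route as the paper: replace $y_{t-1}$ by $\tilde{y}_{t-1}$ at $o_p(1)$ cost via \cref{thm:ytilde} and Cauchy--Schwarz, then apply the law of large numbers to $\lbrace \tilde{y}_{t-1}\tilde{y}_{t-1}' \rbrace$ using \cref{thm:squared}, \citet[Thm.\ 7.1.1]{Brockwell1991}, and Chebyshev's inequality. The paper's proof is simply a compressed version of your four-term decomposition and cross-term bounds, with the same key ingredient (fourth-order cumulant summability entering through \cref{thm:squared}) doing the real work.
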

\begin{proof}
	By \cref{thm:ytilde} and Cauchy-Schwarz, $T^{-1}\sum_{t=1}^T y_{t-1}y_{t-1}'=T^{-1}\sum_{t=1}^T \tilde{y}_{t-1}\tilde{y}_{t-1}' + o_p(1)$. \cref{thm:squared}, \citet[Thm.\ 7.1.1]{Brockwell1991}, and Chebyshev's inequality imply that the law of large numbers holds for $\lbrace \tilde{y}_{t-1}\tilde{y}_{t-1}' \rbrace$.
\end{proof}

\begin{lem} \label{thm:OLS_error}
Omitting the subscript $j^*$ in a slight abuse of notation, let $q_{t} \equiv (\underline{y}_{j^*,t}',y_{t-1}')'$. Then

\begin{enumerate}[i)]
\item \label{itm:OLS_error_i} $T^{-1}\sum_{t=1}^{T-h}q_t \varepsilon_{j^*,t}=O_p(T^{-\zeta} + T^{-1/2})$,
\item \label{itm:OLS_error_ii} $(T^{-1}\sum_{t=1}^{T-h} q_t q_t')^{-1}=O_p(1)$,
\item \label{itm:OLS_error_iii} $T^{-1}\sum_{t=1}^{T-h}q_t(\alpha_{j^*}(L)\varepsilon_t)=O_p(1)$,
\end{enumerate}
where $\alpha_{j^*}(L)$ is the $j^*$-th row of $\alpha(L)$.
\end{lem}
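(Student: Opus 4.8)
The three statements share a common strategy, which I would set up once at the start: replace the observed regressor vector $q_t$ by its moving-average-free counterpart $\tilde{q}_t \equiv (\underline{\tilde{y}}_{j^*,t}',\tilde{y}_{t-1}')'$, control the replacement error with \cref{thm:ytilde} and the Cauchy--Schwarz inequality, and then apply a law of large numbers to the remaining sample averages, which now involve only the fixed (non-triangular-array) absolutely summable linear filter $\tilde{y}_t$ of the shocks. Because $\tilde{y}_t$ is such a filter, any finite product of its components with components of the shocks has a well-defined mean and, by \cref{thm:squared}, absolutely summable autocovariances; hence \citet[Thm.\ 7.1.1]{Brockwell1991} together with Chebyshev's inequality delivers $\sqrt{T}$-consistency of the associated sample average toward its mean. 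The truncation of the sums at $T-h$ rather than $T$ changes only $O(1)$ boundary terms and is irrelevant for the stated rates.

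For statement (\ref{itm:OLS_error_i}) I would decompose $T^{-1}\sum_t q_t\varepsilon_{j^*,t} = T^{-1}\sum_t \tilde{q}_t\varepsilon_{j^*,t} + T^{-1}\sum_t (q_t-\tilde{q}_t)\varepsilon_{j^*,t}$. The remainder is bounded by $(T^{-1}\sum_t\|q_t-\tilde{q}_t\|^2)^{1/2}(T^{-1}\sum_t\varepsilon_{j^*,t}^2)^{1/2}=O_p(T^{-\zeta})\cdot O_p(1)$, using \cref{thm:ytilde} for the first factor and \cref{thm:squared} for the second. For the leading term I would verify $E[\tilde{q}_t\varepsilon_{j^*,t}]=0$: the lagged block $\tilde{y}_{t-1}$ is orthogonal to $\varepsilon_{j^*,t}$ because $\varepsilon_t$ is a martingale difference sequence (\cref{asn:mds}), while the contemporaneous block $\underline{\tilde{y}}_{j^*,t}$ loads on $\varepsilon_t$ only through its first $j^*-1$ components (by the lower-triangular structure of $H$ in Assumption 2.1(iii)), which are orthogonal to $\varepsilon_{j^*,t}$ because $D$ is diagonal. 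This mean-zero property, with the LLN described above, gives $T^{-1}\sum_t\tilde{q}_t\varepsilon_{j^*,t}=O_p(T^{-1/2})$, so the whole expression is $O_p(T^{-\zeta}+T^{-1/2})$. The point to be careful about is that one \emph{cannot} obtain the sharp $T^{-1/2}$ rate for $q_t$ directly, since $q_t$ is a triangular array whose moving-average part carries the $T^{-\zeta}$ factor; the $T^{-\zeta}$ term is exactly the price of passing from $q_t$ to the fixed process $\tilde{q}_t$.

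Statement (\ref{itm:OLS_error_iii}) is the simplest, and I would dispatch it with a single Cauchy--Schwarz step: $\|T^{-1}\sum_t q_t(\alpha_{j^*}(L)\varepsilon_t)\| \le (T^{-1}\sum_t\|q_t\|^2)^{1/2}(T^{-1}\sum_t(\alpha_{j^*}(L)\varepsilon_t)^2)^{1/2}$. Both factors are $O_p(1)$: the first because $T^{-1}\sum_t\|q_t\|^2 = T^{-1}\sum_t\|\tilde{q}_t\|^2+o_p(1)$ converges by the LLN, and the second because $\alpha_{j^*}(L)\varepsilon_t$ is a stationary, absolutely summable linear filter of the shocks with finite variance (by Assumption 2.1(v) and \cref{thm:squared}). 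No mean-zero property or rate refinement is needed, as only boundedness in probability is claimed.

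The main work is statement (\ref{itm:OLS_error_ii}). Using \cref{thm:ytilde} and Cauchy--Schwarz as before, $T^{-1}\sum_t q_t q_t' = T^{-1}\sum_t \tilde{q}_t\tilde{q}_t' + o_p(1)$, and the LLN gives $T^{-1}\sum_t\tilde{q}_t\tilde{q}_t' \overset{p}{\to} \var(\tilde{q}_t)$; it then remains to show that $\var(\tilde{q}_t)$ is non-singular, after which $O_p(1)$ follows by continuity of matrix inversion. I expect this non-singularity to be the crux of the proof. I would argue it from the innovation representation $\underline{\tilde{y}}_{j^*,t} = \underline{I}_{j^*}A\tilde{y}_{t-1}+H_{11}\underline{\varepsilon}_{j^*,t}$, in which the contemporaneous innovation $H_{11}\underline{\varepsilon}_{j^*,t}$ is orthogonal to $\tilde{y}_{t-1}$ and has the positive-definite variance $H_{11}\diag(\sigma_1^2,\dots,\sigma_{j^*-1}^2)H_{11}'$ (here $H_{11}$ is lower-triangular with unit diagonal, hence invertible, by Assumption 2.1(iii), and the $\sigma_j$ are positive by \cref{asn:mds}). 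A standard partitioned-variance argument then reduces non-singularity of $\var(\tilde{q}_t)$ to non-singularity of $S=\var(\tilde{y}_{t-1})$, which holds under the paper's maintained rank conditions. The common thread to flag is that all three parts rest on the same $\tilde{y}$-approximation plus LLN machinery; the only genuinely new ingredient is this structural non-singularity step.
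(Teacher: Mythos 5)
Your proposal is correct and follows essentially the same route as the paper's proof: the same $\tilde{q}_t$ versus $\Delta_t \equiv q_t-\tilde{q}_t$ decomposition controlled by \cref{thm:ytilde} and Cauchy--Schwarz, the same LLN machinery (\cref{thm:squared} plus \citet[Thm.\ 7.1.1]{Brockwell1991} and Chebyshev), and the same non-singularity argument for $\var(\tilde{q}_t)$ via the representation of $\underline{\tilde{y}}_{j^*,t}$ as a linear transformation of $\tilde{y}_{t-1}$ plus an orthogonal innovation with positive-definite variance. Your write-up merely makes explicit (via the Schur complement and the invertibility of $H_{11}$) what the paper states in one sentence, so there is nothing to add.
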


\begin{proof}
Let $\tilde{q}_{t} \equiv (\underline{\tilde{y}}_{j^*,t}',\tilde{y}_{t-1}')'$ and $\Delta_{t} \equiv q_t - \tilde{q}_{t}$. Note that 
\begin{equation} \label{eqn:aux1_OLS_error}
T^{-1}\sum_{t=1}^{T-h}q_t \varepsilon_{j^*,t}= T^{-1}\sum_{t=1}^{T-h} \Delta_{t} \varepsilon_{j^*,t} + T^{-1}\sum_{t=1}^{T-h} \tilde{q}_{t} \varepsilon_{j^*,t}.    
\end{equation}
Cauchy-Schwarz implies
\[ \left \| T^{-1}\sum_{t=1}^{T-h} \Delta_{t} \varepsilon_{j^*,t} \right \| \leq \left(  \frac{1}{T} \sum_{t=1}^{T-h} \|  \Delta_t \|^2    \right)^{1/2} \left( \frac{1}{T} \sum_{t=1}^{T-h} \varepsilon_{j^*,t}^2  \right)^{1/2} = O_p(T^{-\zeta}) \times O_p(1),   \]
using \cref{thm:ytilde} and \cref{asn:mds}. The summands in the last sample average in \eqref{eqn:aux1_OLS_error} have mean zero due to shock orthogonality, so this sample average is $O_{p}\left(  T^{-1/2} \right)$ by \cref{thm:squared} and \citet[Thm.\ 7.1.1]{Brockwell1991}. This establishes part (\ref{itm:OLS_error_i}) of the lemma.

For part (\ref{itm:OLS_error_ii}) of the lemma, note that
\begin{equation} \label{eqn:aux2_OLS_error}
\frac{1}{T} \sum_{t=1}^{T-h} q_{t} q'_{t} = \frac{1}{T} \sum_{t=1}^{T-h} \Delta_{t} \Delta'_{t} + \frac{1}{T} \sum_{t=1}^{T-h} \tilde{q}_{t} \Delta'_{t} + \frac{1}{T} \sum_{t=1}^{T-h} \Delta_{t}\tilde{q}_{t}' + \frac{1}{T} \sum_{t=1}^{T-h} \tilde{q}_{t} \tilde{q}'_{t}. 
\end{equation} 
\cref{thm:ytilde} implies that the first term is $O_{p} \left( T^{-2\zeta} \right)$. Cauchy-Schwarz, along with \cref{thm:ytilde,thm:y_second_moment}, imply that the second and third terms are $O_{p}(T^{-\zeta})$. The last term converges in probability to $\var(\tilde{q}_t)$, as in the proof of \cref{thm:y_second_moment}. This matrix is non-singular, since $\tilde{q}_{t} = (\underline{\tilde{y}}_{j^*,t}',\tilde{y}_{t-1}')'$, where $\var(\tilde{y}_{t-1})=S$ is non-singular by Assumption 2.1(iv), and Assumption 2.1(iii) implies that $\underline{\tilde{y}}_{j^*,t}$ equals a linear transformation of $\tilde{y}_{t-1}$ plus a non-singular orthogonal noise term.

Part (\ref{itm:OLS_error_iii}) follows from Cauchy-Schwarz, \cref{thm:y_second_moment}, and Assumption 2.1(v).
\end{proof}

\begin{lem} \label{thm:x_minus_eps_2}
Use the same notation as \cref{thm:OLS_error}, and let
\[  \hat{x}_{h,t}  \equiv  (\vartheta-\hat{\vartheta}_h)'q_t + \varepsilon_{j^*,t} + T^{-\zeta}\alpha_{j^*}(L)\varepsilon_t.\]
Then
\begin{equation} \label{eqn:aux1_x_minus_eps_2}
T^{-1}\sum_{t=1}^{T-h}(\hat{x}_{h,t}-\varepsilon_{j^*,t})^2 = O_p(T^{-2\zeta}) + o_p(T^{-1/2}).
\end{equation}
\end{lem}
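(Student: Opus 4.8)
The plan is to expand the square $(\hat{x}_{h,t}-\varepsilon_{j^*,t})^2$ into three pieces and bound each separately with the rate results already in hand. Directly from the definition of $\hat{x}_{h,t}$ in the statement, $\hat{x}_{h,t}-\varepsilon_{j^*,t} = (\vartheta-\hat{\vartheta}_h)'q_t + T^{-\zeta}\alpha_{j^*}(L)\varepsilon_t$, so squaring and averaging gives
\begin{align*}
T^{-1}\sum_{t=1}^{T-h}(\hat{x}_{h,t}-\varepsilon_{j^*,t})^2 &= (\vartheta-\hat{\vartheta}_h)'\Big(T^{-1}\sum_{t=1}^{T-h}q_t q_t'\Big)(\vartheta-\hat{\vartheta}_h) \\
&\quad + 2T^{-\zeta}(\vartheta-\hat{\vartheta}_h)'\Big(T^{-1}\sum_{t=1}^{T-h}q_t\,\alpha_{j^*}(L)\varepsilon_t\Big) \\
&\quad + T^{-2\zeta}\,T^{-1}\sum_{t=1}^{T-h}(\alpha_{j^*}(L)\varepsilon_t)^2.
\end{align*}

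First I would bound the quadratic-form term. Statement (\ref{itm:thm_x_i}) of \cref{thm:x} gives $\vartheta-\hat{\vartheta}_h = O_p(T^{-\zeta}+T^{-1/2})$, while the proof of statement (\ref{itm:OLS_error_ii}) of \cref{thm:OLS_error} shows that $T^{-1}\sum_{t=1}^{T-h}q_t q_t'$ converges in probability to the finite matrix $\var(\tilde{q}_t)$ and is therefore $O_p(1)$. Bounding the quadratic form by $\|\vartheta-\hat{\vartheta}_h\|^2$ times the largest eigenvalue of $T^{-1}\sum_{t=1}^{T-h}q_t q_t'$ then yields $O_p((T^{-\zeta}+T^{-1/2})^2) = O_p(T^{-2\zeta}+T^{-1})$. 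The third (pure moving-average) term is immediate: \cref{thm:squared} together with \citet[Thm.\ 7.1.1]{Brockwell1991} implies $T^{-1}\sum_{t=1}^{T-h}(\alpha_{j^*}(L)\varepsilon_t)^2 = O_p(1)$, so this term is $O_p(T^{-2\zeta})$.

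The cross term is the only piece that needs a dedicated input, and it is exactly what statement (\ref{itm:OLS_error_iii}) of \cref{thm:OLS_error} supplies, namely $T^{-1}\sum_{t=1}^{T-h}q_t\,\alpha_{j^*}(L)\varepsilon_t = O_p(1)$. Combined with the rate for $\vartheta-\hat{\vartheta}_h$, the cross term is $O_p\big(T^{-\zeta}(T^{-\zeta}+T^{-1/2})\big) = O_p(T^{-2\zeta}+T^{-\zeta-1/2})$.

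Finally I would collect rates. The only non-mechanical observation is that the mixed rate $T^{-\zeta-1/2}$ never dominates: by Young's inequality $T^{-\zeta-1/2}\leq\tfrac{1}{2}(T^{-2\zeta}+T^{-1})$, and $T^{-1}=o(T^{-1/2})$. Hence each of the three terms is $O_p(T^{-2\zeta})+o_p(T^{-1/2})$, which is the claimed bound. I expect no substantive obstacle here: the lemma is a bookkeeping consequence of \cref{thm:x}, \cref{thm:OLS_error}, and \cref{thm:squared}. The one point to get right is the cross term, which can be controlled either directly through statement (\ref{itm:OLS_error_iii}) of \cref{thm:OLS_error} or, equivalently, by Cauchy--Schwarz against the bounds already obtained for the other two terms; both routes deliver the same $O_p(T^{-2\zeta}+T^{-\zeta-1/2})$ rate.
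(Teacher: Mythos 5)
Your proof is correct and takes essentially the same route as the paper's: write $\hat{x}_{h,t}-\varepsilon_{j^*,t}=(\vartheta-\hat{\vartheta}_h)'q_t + T^{-\zeta}\alpha_{j^*}(L)\varepsilon_t$, bound the quadratic piece by $\|\vartheta-\hat{\vartheta}_h\|^2$ times the $O_p(1)$ second moment of $q_t$ (using \cref{thm:OLS_error} and \cref{thm:y_second_moment}), and bound the moving-average piece by $T^{-2\zeta}\times O_p(1)$. The only difference is cosmetic: the paper eliminates the cross term at the outset via the $c_r$-inequality, whereas you bound it explicitly using statement (\ref{itm:OLS_error_iii}) of \cref{thm:OLS_error}; both yield the same rates, and your citation of statement (\ref{itm:thm_x_i}) of \cref{thm:x} for the rate of $\vartheta-\hat{\vartheta}_h$ is non-circular since that statement rests only on \cref{thm:OLS_error}.
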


\begin{proof}
It suffices by the $c_r$-inequality to show that
\begin{enumerate}[a)]
\item \label{itm:x_minus_eps_2_a} $T^{-1}\sum_{t=1}^{T-h} \left( (\vartheta-\hat{\vartheta}_h)'q_t \right)^2 = O_p(T^{-2\zeta}) + o_{p} \left(  T^{-1/2} \right)$,
\item \label{itm:x_minus_eps_2_b} $T^{-1}\sum_{t=1}^{T-h} \left( \alpha_{j^*}(L)\varepsilon_t \right)^2 = O_{p} \left( 1 \right)$.
\end{enumerate}
To establish (\ref{itm:x_minus_eps_2_a}), note that Cauchy-Schwarz implies
\[\frac{1}{T} \sum_{t=1}^{T-h} \left( (\vartheta-\hat{\vartheta}_h)'q_t \right)^2 \leq \left \| \vartheta-\hat{\vartheta}_h \right \|^2  \left( \frac{1}{T} \sum_{t=1}^{T-h}   \| q_t \| ^2 \right) = O_{p} \left( \left(T^{-\zeta} + T^{-1/2}\right)^2 \right) \times O_p(1),\]
by \cref{thm:y_second_moment,thm:OLS_error}. Hence, the right-hand side is $O_p(T^{-2\zeta}) + o_{p}(  T^{-1/2})$.

Statement (\ref{itm:x_minus_eps_2_b}) follows from Assumption 2.1(v) and Markov's inequality. 
\end{proof}

\begin{lem} \label{thm:squared}
Let $\psi(L)$ and $\varphi(L)$ be two absolutely summable, univariate, two-sided lag polynomials. Then for any $j,k \in \lbrace 1,\dots,m\rbrace$, the product process $z_t \equiv [\psi(L)\varepsilon_{j,t}] \times [\varphi(L)\varepsilon_{k,t}]$ has absolutely summable autocovariance function.
\end{lem}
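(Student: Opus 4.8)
The plan is to expand $z_t$ as a double moving-average sum in the shocks, confirm it is covariance stationary with finite second moments, and then reduce absolute summability of its autocovariances to the fourth-order cumulant condition in \cref{asn:mds}. Writing $\psi(L)\varepsilon_{j,t}=\sum_a \psi_a\varepsilon_{j,t-a}$ and $\varphi(L)\varepsilon_{k,t}=\sum_b\varphi_b\varepsilon_{k,t-b}$ (sums over all integers, since the polynomials are two-sided), we have $z_t=\sum_{a,b}\psi_a\varphi_b\,\varepsilon_{j,t-a}\varepsilon_{k,t-b}$. Because $E\|\varepsilon_t\|^4<\infty$, Cauchy--Schwarz bounds $\|\varepsilon_{j,s}\varepsilon_{k,u}\|_{L^2}$ uniformly, so the series converges in $L^2$ with $\|z_t\|_{L^2}\le (E\|\varepsilon_t\|^4)^{1/2}(\sum_a|\psi_a|)(\sum_b|\varphi_b|)<\infty$; strict stationarity of $\lbrace\varepsilon_t\rbrace$ then makes $\lbrace z_t\rbrace$ covariance stationary, so $\gamma_z(\tau)\equiv\cov(z_t,z_{t-\tau})$ is well defined. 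Using that $\var(\varepsilon_t)=D$ is diagonal and the shocks are serially uncorrelated, the mean reduces to the finite constant $E[z_t]=\mathbbm{1}(j=k)\sigma_j^2\sum_a\psi_a\varphi_a$.

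Next I would write $\gamma_z(\tau)=E[z_tz_{t-\tau}]-E[z_t]^2$ as a quadruple sum over $(a,b,c,d)$ and expand the fourth moment $E[\varepsilon_{j,t-a}\varepsilon_{k,t-b}\varepsilon_{j,t-\tau-c}\varepsilon_{k,t-\tau-d}]$ through the moment--cumulant formula for mean-zero variables: it equals the three pairwise-covariance products plus the joint fourth cumulant. Each pairwise covariance is, by orthogonality, $\sigma^2$ times a product of two indicators. The pairing $\lbrace(t-a),(t-b)\rbrace,\lbrace(t-\tau-c),(t-\tau-d)\rbrace$ reproduces exactly $E[z_t]^2$ and cancels the subtracted mean. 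The remaining two pairings collapse, after eliminating the indicator constraints, to products such as $\sigma_j^2\sigma_k^2\big(\sum_c\psi_{\tau+c}\psi_c\big)\big(\sum_d\varphi_{\tau+d}\varphi_d\big)$ and $\mathbbm{1}(j=k)\sigma_j^4\big(\sum_d\psi_{\tau+d}\varphi_d\big)\big(\sum_c\varphi_{\tau+c}\psi_c\big)$. Each factor is an auto- or cross-correlation of the deterministic $\ell^1$ sequences $\lbrace\psi_a\rbrace,\lbrace\varphi_b\rbrace$, hence absolutely summable in $\tau$ by Young's convolution inequality and uniformly bounded; their products are therefore absolutely summable in $\tau$. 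This disposes of the ``Gaussian'' part.

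The main obstacle is the cumulant term $\sum_{a,b,c,d}\psi_a\varphi_b\psi_c\varphi_d\,\kappa_\tau(a,b,c,d)$, where $\kappa_\tau(a,b,c,d)\equiv\operatorname{cum}(\varepsilon_{j,t-a},\varepsilon_{k,t-b},\varepsilon_{j,t-\tau-c},\varepsilon_{k,t-\tau-d})$. Here I would first use the martingale-difference property to show $\kappa_\tau$ vanishes unless at least two of the four time indices tie for the maximum: if one index strictly exceeds the others, then in every partition appearing in the cumulant--moment formula the block containing that ``latest'' shock has expectation zero by the tower property, killing the whole cumulant. This restricts attention to configurations that, after re-centering at the maximal time via stationarity, are exactly the entries of $\cov(\varepsilon_t\otimes\varepsilon_t,\varepsilon_{t-\ell}\otimes\varepsilon_{t-\tau})$ whose norm is summed in \cref{asn:mds} (with the edge cases of three or four indices coinciding involving strictly fewer free lags, hence no harder). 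Writing $\Gamma$ for the total absolute mass of these cumulants over all lag configurations, a change of variables sending the $\tau$-sum onto the free cumulant lag lets the $(a,b,c,d)$-sums factor through $\sum_a|\psi_a|$ and $\sum_b|\varphi_b|$, giving $\sum_\tau|\sum_{a,b,c,d}\psi_a\varphi_b\psi_c\varphi_d\,\kappa_\tau|\le(\sum_a|\psi_a|)^2(\sum_b|\varphi_b|)^2\,\Gamma<\infty$. Combined with the Gaussian part, this yields $\sum_\tau|\gamma_z(\tau)|<\infty$.

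The step requiring the most care is this last reduction: verifying that the martingale-difference vanishing leaves precisely the ``two shocks at the top time, two below'' cumulants controlled by \cref{asn:mds}, handling coordinate relabeling and the coinciding-index edge cases, and tracking the change of variables so that the $\tau$-sum lands on an index over which the assumed condition guarantees summability.
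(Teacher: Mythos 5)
Your overall strategy---expand the fourth moments, let the martingale-difference property annihilate every configuration whose latest shock is a strict singleton, and reduce what survives to the summability condition in \cref{asn:mds}---is the same skeleton as the paper's proof (the paper works directly with covariances of the product pairs and never introduces cumulants, but the MDS-killing step and the appeal to \cref{asn:mds} are identical in spirit). However, your final reduction has a genuine gap, and it sits exactly where you flagged the need for care. The surviving configurations are \emph{not} ``exactly the entries of $\cov(\varepsilon_t \otimes \varepsilon_t, \varepsilon_{t-\ell} \otimes \varepsilon_{t-\tau})$'': when \emph{three} indices tie at the maximal time you get cumulants such as $\operatorname{cum}(\varepsilon_{j,0},\varepsilon_{k,0},\varepsilon_{j,0},\varepsilon_{k,-\delta}) = E[\varepsilon_{j,0}^2\varepsilon_{k,0}\varepsilon_{k,-\delta}]$, $\delta \geq 1$, which is a covariance of a \emph{triple} product at the top time with a single lagged shock. \cref{asn:mds} only sums configurations with exactly two shocks at the top time and both remaining shocks at lags $\geq 1$ (the sums start at $\ell=\tau=1$), so it places no summability restriction whatsoever on this one-parameter family. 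Consequently, if your $\Gamma$ is literally ``the total absolute mass of these cumulants over all lag configurations,'' then $\Gamma < \infty$ is unjustified, and the bound $(\sum_a|\psi_a|)^2(\sum_b|\varphi_b|)^2\,\Gamma$ is vacuous. The parenthetical ``strictly fewer free lags, hence no harder'' is precisely backwards as a claim about cumulant mass: these are the cases where your reduction to \cref{asn:mds} fails.

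The gap is fixable, but by a different mechanism than the one you invoke: for the coinciding-index families, the time coincidences impose extra linear constraints on the coefficient indices $(a,b,c,d,\tau)$, so after the change of variables the $\ell^1$ sums of $\lbrace\psi_a\rbrace$ and $\lbrace\varphi_b\rbrace$ alone absorb all remaining free sums, and the cumulant need only be \emph{bounded} (by $E[\|\varepsilon_t\|^4]$ via H\"older), not summable. For example, in the three-at-top family the constraints $a=b$ and $a = \tau+c$ leave the contribution bounded by $\big(\sum_a|\psi_a\varphi_a|\big)\big(\sum_c|\psi_c|\big)\big(\sum_d|\varphi_d|\big)\sup_\delta|E[\varepsilon_{j,0}^2\varepsilon_{k,0}\varepsilon_{k,-\delta}]| < \infty$. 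This per-pattern bookkeeping---summability from \cref{asn:mds} only for the six two-at-top patterns, boundedness plus pinned-down coefficient indices for the rest---is exactly what the paper's counting argument accomplishes (``at most 1 value of $r$ with $N(r)=3$ and at most one with $N(r)=4$,'' and at most 4 values with $N(r)=2$ when $s\neq 0$, $\tau \neq 0$), organized there as a supremum over relative positions of a single lag sum. With that repair your cumulant route goes through; without it, the last inequality does not follow from \cref{asn:mds}.
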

\begin{proof}
Bound $\sum_{\ell=-\infty}^\infty |\cov(z_t,z_{t+\ell})|$ by
\begin{align}
&\sum_{\ell=-\infty}^\infty \sum_{\tau_1=-\infty}^\infty \sum_{\tau_2=-\infty}^\infty \sum_{\tau_3=-\infty}^\infty \sum_{\tau_4=-\infty}^\infty |\psi_{\tau_1}||\varphi_{\tau_2}||\psi_{\tau_3}||\varphi_{\tau_4}| |\cov(\varepsilon_{j,t+\tau_1}\varepsilon_{k,t+\tau_2},\varepsilon_{j,t+\tau_3+\ell}\varepsilon_{k,t+\tau_4+\ell})| \nonumber \\
&\leq  \left(\sum_{\tau=-\infty}^\infty |\psi_{\tau}|\right)^2 \left(\sum_{\tau=-\infty}^\infty |\varphi_{\tau}|\right)^2 \left(\sup_{\tau_1,\dots,\tau_4} \sum_{\ell=-\infty}^\infty |\cov(\varepsilon_{j,t+\tau_1}\varepsilon_{k,t+\tau_2},\varepsilon_{j,t+\tau_3+\ell}\varepsilon_{k,t+\tau_4+\ell})|\right). \label{eqn:squared_sup}
\end{align}
To finish the proof, we show that the supremum above is finite. By stationarity, the sum inside the supremum can be written as
\begin{equation} \label{eqn:squared_sum}
\sum_{r=-\infty}^\infty |\cov(\varepsilon_{j,0}\varepsilon_{k,s},\varepsilon_{j,r}\varepsilon_{k,r+\tau})|,
\end{equation}
where we have substituted $s=\tau_2-\tau_1$, $\tau=\tau_4-\tau_3$, and $r = \ell+\tau_3-\tau_1$. Now fix $s$ and $\tau$. Let $N \colon \mathbb{Z} \to \lbrace 1,2,3,4 \rbrace$ denote the function that assigns each integer $r$ to the number of times the maximum value of the tuple $(0,s,r,r+\tau)$ appears in the tuple. We group the terms indexed by $r$ in the sum $\eqref{eqn:squared_sum}$ according to their value of $N(r)$. First, since $\lbrace \varepsilon_t \rbrace$ is a martingale difference sequence, all terms $r$ with $N(r)=1$ yield a covariance of 0 and so do not contribute to the sum. Second, a simple enumeration of cases shows that there is at most 1 value of $r$ with $N(r)=3$ and at most one with $N(r)=4$. Finally, consider terms $r$ with $N(r)=2$. If $s \neq 0$ and $\tau \neq 0$, then there are at most 4 such terms (since this requires $r \in \lbrace 0,s,-\tau,s-\tau\rbrace$). If $s=0$ and/or $\tau=0$, terms with $N(r)=2$ must be of the form $|\cov(\varepsilon_{j,0}\varepsilon_{k,0},\varepsilon_{j,r}\varepsilon_{k,r+\tau})|$ (with $r,r+\tau<0$) and/or $|\cov(\varepsilon_{j,0}\varepsilon_{k,0},\varepsilon_{j,-r}\varepsilon_{k,s-r})|$ (with $r,r-s>0$). The preceding arguments show that the sum \eqref{eqn:squared_sum} is bounded by
\[6E[\|\varepsilon_t\|^4] + 2\sum_{r=1}^\infty \sum_{b=1}^\infty \|\cov(\varepsilon_0 \otimes \varepsilon_0, \varepsilon_{-r} \otimes \varepsilon_{-b})\|,\]
which is finite by \cref{asn:mds} and does not depend on $s$ or $\tau$. Thus, the supremum in \eqref{eqn:squared_sup} is finite.
\end{proof}

\end{appendix}

%


\bibliographystyle{ecta-fullname} 
\bibliography{ref}  

